\setlist[description]{font=\normalfont\itshape\textbullet\space}
\renewcommand{\paragraph}[1]{\vspace{6pt} \noindent \textbf{#1}\xspace}
\theoremstyle{plain}
\newtheorem{theorem}{Theorem}[section]
\newtheorem{corollary}[theorem]{Corollary}
\newtheorem{lemma}[theorem]{Lemma}
\newtheorem{proposition}[theorem]{Proposition}
\newtheorem{fact}[theorem]{Fact}
\theoremstyle{definition}
\newtheorem{remark}[theorem]{Remark}
\newtheorem{example}[theorem]{Example}
\newcommand{\GL}{\mathrm{GL}}
\newcommand{\F}{\mathbb{F}}
\newcommand{\Q}{\mathbb{Q}}
\newcommand{\C}{\mathbb{C}}
\newcommand{\R}{\mathbb{R}}
\newcommand{\N}{\mathbb{N}}
\newcommand{\crk}{\mathrm{mrk}}
\newcommand{\im}{\mathrm{im}}
\newcommand{\ad}{\mathrm{ad}}
\newcommand{\poly}{\mathrm{poly}}
\newcommand{\ncrk}{\mathrm{ncrk}}
\newcommand{\M}{\mathrm{M}}
\newcommand{\Stab}{\mathrm{Stab}}
\newcommand{\NP}{\mathrm{NP}}
\newcommand{\coNP}{\mathrm{coNP}}
\newcommand{\sd}{\mathrm{sd}}
\newcommand{\spa}[1]{\mathcal{#1}}
\newcommand{\cA}{\spa{A}}
\newcommand{\cB}{\spa{B}}
\newcommand{\cC}{\spa{C}}
\newcommand{\cH}{\spa{H}}
\newcommand{\fkg}{\mathfrak{g}}
\newcommand{\fkh}{\mathfrak{h}}
\newcommand{\fkgl}{\mathfrak{gl}}
\newcommand{\fksl}{\mathfrak{sl}}
\newcommand{\Lie}{\mathrm{Lie}}
\newcommand{\FitZero}{\mathrm{F}_0}
\newcommand{\too}%
{\xrightarrow{\text{\raisebox{-3pt}{$\sim$}}\,}}
\definecolor{darkred}{rgb}{0.5,0,0}
\definecolor{darkgreen}{rgb}{0,0.35,0}
\definecolor{darkblue}{rgb}{0,0,0.55}
\title{
Symbolic determinant identity testing 
and \\
non-commutative ranks of matrix Lie algebras
}
\author{
G\'abor Ivanyos\thanks{Institute for Computer Science and Control, 
E\"otv\"os Lor\'and Research Network (ELKH),
Budapest, Hungary 
({\texttt Gabor.Ivanyos@sztaki.hu}).} 
\and 
Tushant Mittal\thanks{Department of Computer Science, the University of Chicago, 
Chicago, USA. 
({\texttt tushant@uchicago.edu})}
\and 
Youming Qiao\thanks{Centre for Quantum Software and Information, 
 University of Technology Sydney, Australia ({\texttt Youming.Qiao@uts.edu.au}) 
 }
}
\date{\today}
\begin{document}

\maketitle

\begin{abstract}
One approach to make progress on the symbolic determinant identity 
testing 
(SDIT) problem is to study the structure of singular matrix spaces. After 
settling the non-commutative rank problem (Garg--Gurvits--Oliveira--Wigderson, 
\emph{Found. Comput. Math.} 
2020; Ivanyos--Qiao--Subrahmanyam, 
\emph{Comput. 
Complex.} 2018), a 
natural next step is to understand singular matrix spaces whose non-commutative 
rank is full. At present, examples of such matrix spaces are mostly sporadic, 
so it is desirable to discover them in a more systematic way.

In this paper, we make a step towards this direction, by studying the family of matrix spaces that are closed 
under the commutator operation, that is, matrix Lie algebras. On the one hand, we demonstrate that matrix 
Lie algebras over the complex number field give rise to singular matrix spaces 
with full non-commutative ranks. On the other hand, we show that SDIT of such 
spaces can be decided in deterministic polynomial time. Moreover, we give a 
characterization for the matrix Lie algebras to yield a matrix space possessing 
singularity certificates as studied by Lov\'asz 
(\emph{B. Braz. Math. Soc.}, 
1989) and Raz and Wigderson (\emph{Building Bridges II}, 2019).
\end{abstract}

\section{Introduction}

\subsection{Background and motivations}

\paragraph{Matrix spaces.} Let $\F$ be a field. We use 
$\M(\ell\times n, \F)$ to denote the linear space of 
$\ell\times n$ matrices over $\F$, and let $\M(n, \F):=\M(n\times n, \F)$. The 
general linear group of degree $n$ over $\F$ is denoted by $\GL(n, \F)$. A subspace $\cB$ 
of $\M(\ell\times n, \F)$ is called a \emph{matrix space}, denoted by $\cB\leq 
\M(\ell\times n, \F)$. Given $B_1, \dots, B_m\in \M(n, \F)$, $\langle B_1, \dots, B_m\rangle$ is 
the linear span of the $B_i$'s. In algorithms, $\cB\leq \M(n, \F)$ is naturally represented by a 
linear basis $B_1, \dots, B_m\in\M(n, \F)$. 

Two major algorithmic problems about matrix spaces are as follows. 

\paragraph{The symbolic determinant identity testing problem.} For $\cB\leq\M(n, 
\F)$, let $\crk(\cB)$ be 
the maximum rank over all matrices in 
$\cB$. We say that $\cB$ is \emph{singular}, if $\crk(\cB)<n$. To decide whether 
$\cB$ is singular is known as the \emph{symbolic determinant identity testing} 
(SDIT) problem. The \emph{maximum 
rank problem} for $\cB$ then asks 
to compute $\crk(\cB)$. 
The complexity of SDIT depends on the 
underlying field $\F$. When $|\F|=O(1)$, SDIT is $\coNP$-complete \cite{BFS99}. 
When 
$|\F|=\Omega(n)$, by 
the polynomial identity testing lemma \cite{Sch80,Zip79}, SDIT admits a 
randomized efficient algorithm. To present a deterministic polynomial-time 
algorithm for SDIT is a major open problem in computational complexity, as that 
would imply 
strong circuit lower bounds %
by the seminal work of Kabanets and 
Impagliazzo\cite{KI04}. 

\paragraph{The shrunk subspace problem.} For $\cB\leq\M(n, \F)$ and $U\leq \F^n$, 
the image of $U$ under $\cB$ is 
$\cB(U):=\{B u : B\in \cB, u\in 
U\}$. We say that $U$ is 
a \emph{shrunk subspace} of $\cB$, if $\dim(U)>\dim(\cB(U))$. The  
problem of deciding whether $\cB$ admits a shrunk subspace is \emph{the shrunk 
subspace problem}. 
 The \emph{non-commutative 
rank problem}\footnote{The name ``non-commutative'' rank comes from a natural 
connection between matrix spaces and symbolic matrices over skew fields; see 
\cite{GGOW20,IQS18} 
for details.} 
\cite{GGOW20,IQS18} asks to 
compute $\ncrk(\cB):= \max\{\dim(U)-\dim(\cB(U)) : U\leq\F^n\}$. That is, $\cB$ admits a shrunk subspace if and only if its 
non-commutative rank is 
not full, i.e. $<n$. 
This problem is known for its 
connections 
to invariant theory, linear algebra, graph theory, and quantum information. 
Major progress in the past few years lead to deterministic efficient algorithms 
for the shrunk subspace problem, one by Garg, Gurvits, Oliveira, and Wigderson 
over fields of characteristic $0$ 
\cite{GGOW20}, and the other by Ivanyos, Qiao, and Subrahmanyam over any field
\cite{IQS17,IQS18}.  

\paragraph{Motivations of our investigation.} 
Note that 
if a matrix space admits a 
shrunk subspace, then it has to be singular. However, there exist singular 
matrix spaces
 without shrunk subspaces. After 
settling the shrunk subspace problem \cite{GGOW20,IQS18}, such matrix spaces form 
a bottleneck for further progress on SDIT.  Moreover,  ideas from these works are 
not expected to directly generalize as it was recently shown that the space of 
singular matrices cannot be seen as the null-cone of any reductive group action 
\cite{MW21}

Two classical examples of such subspaces are 
as follows \cite{Lov89}.
\begin{example}\label{ex:sing}
\begin{enumerate}
\item 
Let $\Lambda(n, \F)$ be the 
linear space of alternating matrices, 
namely 
matrices satisfying $\forall v\in \F^n$, $v^tAv=0$.\footnote{When 
$\F$ is of 
characteristic not $2$, a matrix is alternating if and only if it is 
skew-symmetric.}  When $n$ is odd, 
$\Lambda(n, \F)$ is singular, as every 
alternating
matrix is of 
even rank. Furthermore, it is easy to verify that $\Lambda(n, \F)$ does not admit 
shrunk subspaces.
\item  Let $C_1, \dots, C_n\in \Lambda(n, \F)$, and let $\cC\leq \M(n, \F)$ 
consist of 
all the 
matrices of the form $[C_1v, C_2v, \dots, C_nv]$, over $v\in \F^n$. As $C_i$'s are 
alternating, we have 
$$v^t[C_1v, 
C_2v, \dots, 
C_nv]=[v^tC_1v, v^tC_2v, \dots, 
v^tC_nv]=0,$$ 
so $\cC$ is singular. In \cite{EH88}, it is shown that when $n=4$, 
certain choices of 
$C_i$ ensure that $\cC$ does not have shrunk subspaces.
\end{enumerate}
\end{example}

While there are further examples in \cite{AW83,DM17}, the above two examples (and 
their certain subspaces) have been studied most in theoretical computer science 
and combinatorics, such as by Lov\'asz 
\cite{Lov89} and Raz and Wigderson \cite{RW19}, due 
to their connections to matroids and graph rigidity. 

As far as we see from the above, examples of singular matrix spaces without shrunk 
subspaces in the literature are sporadic.  
Therefore, it is desirable to discover more singular matrix 
spaces without shrunk subspaces, hopefully in a more systematic way. This is the 
main motivation of this present article. 

\paragraph{Overview of our main results.} Noting that the linear space of 
skew-symmetric matrices is closed under the 
commutator bracket, 
we set out to study matrix Lie algebras. 
Our main results can be summarized as 
follows. 
\begin{itemize}
\item First, we show that matrix Lie algebras over $\C$ gives rise to a family of 
singular 
matrix spaces without shrunk subspaces. This result, partly inspired by 
\cite{draisma_maximal}, %
vastly 
generalizes the linear spaces 
of skew-symmetric matrices. 
\item Second, we 
present a deterministic polynomial-time 
algorithm to solve SDIT for matrix Lie algebras over $\C$. This 
algorithm heavily relies on the structural theory of, and algorithms for, Lie 
algebras. 
\item Third, we examine when matrix Lie algebras are of the form in 
\cref{ex:sing} (2) as above, giving representation-theoretic criteria for 
such matrix Lie 
algebras. 
\end{itemize}

In the rest of this introduction, we detail our results. 

\subsection{Our results}\label{subsec:result}

Recall that $\cB\leq \M(n, \F)$ is a \emph{matrix Lie algebra}, if $\cB$ 
is closed under the commutator bracket, i.e. for any $A, B\in \cB$, $[A, 
B]:=AB-BA\in \cB$. 

We have striven %
to make this introduction as self-contained 
as possible. 
In an effort to make this article accessible to wider 
audience,
we summarize notions and results on Lie algebras and representations relevant 
to this paper in Appendices~\ref{app:basic}, \ref{app:corresp}, and 
\ref{app:cartan}.

\paragraph{Two results and a message.} We first study shrunk subspaces of matrix 
Lie algebras over $\C$. To state our results, we need the following notions.

Given a matrix space 
$\cB\leq\M(n, \F)$, $U\leq\F^n$ is an \emph{invariant subspace} of $\cB$, if 
for any $B\in \cB$, $B(U)\subseteq U$. We say that $\cB$ is \emph{irreducible}, if 
the 
only invariant subspaces of $\cB$ are $0$ and $\F^n$. The above notions naturally 
apply to matrix Lie algebras. The matrix space 
$\cB=0\leq\M(1, 
\F)$ is called the \emph{trivial irreducible} matrix Lie algebra. 

In general, let $\cB\leq\M(n, \F)$ be a matrix Lie algebra. Then there exists 
$A\in\GL(n, \F)$, such that $A^{-1}\cB A$ is of block upper-triangular form, and 
each block on the diagonal defines an irreducible matrix Lie algebra, called a 
\emph{composition factor} of $\cB$. Such an $A$ defines a chain of subspaces, 
called a 
\emph{composition series} of the matrix Lie algebra $\cB$. By the Jordan-H\"older 
theorem, the isomorphic types of the 
composition factors are the same for different composition series. 

We then have the following criteria for the existence of shrunk subspaces of 
matrix Lie 
algebras over $\C$.
\begin{restatable}{theorem}{Shrunk}\label{thm:shrunk}
Let $\cB\leq\M(n, \C)$ be a non-trivial irreducible matrix Lie algebra. Then 
$\cB$ does not have a shrunk subspace. 

Let $\cB\leq\M(n, \C)$ be a matrix Lie algebra. Then $\cB$ has a shrunk subspace, 
if and only if one of its composition factors is the trivial matrix 
Lie algebra. 
\end{restatable}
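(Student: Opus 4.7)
The plan is to reduce both statements to the case where $\cB$ is non-trivial and irreducible, and then attack that case via Lie-theoretic structure (the radical) and representation theory (the Casimir element). First I would put $\cB$ in block upper-triangular form via a composition series with diagonal blocks $\cB_1, \ldots, \cB_k$ acting on successive quotients $V_i/V_{i-1}$. The backward direction of the second statement is immediate: a trivial composition factor $\cB_i = 0$ forces $\cB(V_i) \subseteq V_{i-1}$, so $U := V_i$ is shrunk. For the forward direction (and the first statement), induction on the composition length $k$ reduces to the case $k = 1$, i.e., the first statement itself; the inductive step splits $U \le V$ along $V_1 \subset V$ using $\cB(U) \cap V_1 \supseteq \cB_1(U \cap V_1)$ together with the fact that projection to $V/V_1$ intertwines the actions, which yields $\dim \cB(U) \ge \dim U$ once the assertion is known for $\cB_1$ and, by induction, for the induced action on $V/V_1$.

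For the first statement, assume $\cB \le \M(n, \C)$ is non-trivial and irreducible on $V = \C^n$, and suppose for contradiction that $U$ is a shrunk subspace with $W := \cB(U)$ and $\dim W < \dim U$. By Lie's theorem plus irreducibility of $V$ (see the appendices), the solvable radical $\rad(\cB)$ acts on $V$ by scalars, so $\rad(\cB) \subseteq \C \cdot I$. If $\rad(\cB) \ne 0$ then $I \in \cB$ and $\cB(U) \supseteq U$ contradicts shrunkness; hence $\cB$ is semisimple (and $\cB \subseteq \fksl(V)$). The Casimir element $C$ of the universal enveloping algebra of $\cB$ acts on the non-trivial irreducible module $V$ as a nonzero scalar, and writing $C = \sum_i x_i x^i$ with $x_i, x^i \in \cB$ yields the key inclusion $X \subseteq \cB^2(X)$ for every subspace $X \le V$. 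Iterating, the nested chains $U \subseteq \cB^2(U) \subseteq \cB^4(U) \subseteq \cdots$ and $\cB(U) \subseteq \cB^3(U) \subseteq \cdots$ stabilize to subspaces $V_e \supseteq U$ and $V_o \supseteq W$ satisfying $\cB(V_e) = V_o$ and $\cB(V_o) = V_e$. Both $V_e + V_o$ and $V_e \cap V_o$ are $\cB$-invariant, so by irreducibility each lies in $\{0, V\}$. The alternative $V_e \cap V_o = 0$ would give $V = V_e \oplus V_o$ with every $X \in \cB$ exchanging the summands; a bracket computation then shows $[\cB, \cB]$ annihilates $V$, contradicting semisimple non-trivial $\cB$. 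So $V_e = V_o = V$, and a concluding dimension count through the strictly ascending filtration $F_j := U + \cB(U) + \cdots + \cB^j(U)$ contradicts the assumed defect $\dim U - \dim \cB(U) > 0$.

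The main obstacle is this concluding dimension count in the case $V_e = V_o = V$: turning the rigid Casimir-generated structure into a numerical contradiction requires delicate tracking of how the shrinkage defect propagates along $F_j$, and is where the combination of irreducibility and semisimplicity is most heavily used. A potentially more direct alternative would be to work with the weight decomposition $V = \bigoplus_\mu V_\mu$ under a Cartan subalgebra of $\cB$ and exploit the connectedness of the weight diagram of $V$ under the root-space action, which gives an explicit handle on $\dim \cB(U) \ge \dim U$ at the cost of additional representation-theoretic bookkeeping.
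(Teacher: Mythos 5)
Your reduction of the general statement to the non-trivial irreducible case is sound (the additivity estimate $\dim\cB(U)\ge\dim\cB_1(U\cap V_1)+\dim\overline{\cB}(\overline{U})$ is a clean substitute for the paper's supermodularity lemma and block-triangular proposition), and the opening moves of the irreducible case are also fine: the radical acts by scalars, so either $I\in\cB$ and nothing can shrink, or $\cB$ is semisimple; the Casimir of the faithful non-trivial module then gives $X\subseteq\cB^{2}(X)$ for every subspace $X$, and your bracket argument in the case $V_e\cap V_o=0$ is correct. The genuine gap is exactly where you flag it, and it is not a routine verification: in the main case $V_e=V_o=V$, no ``dimension count through the filtration $F_j$'' can produce a contradiction, because the facts you have assembled at that point --- irreducibility, $X\subseteq\cB^{2}(X)$ for all $X$, and both chains stabilizing at $V$ --- are consistent with the existence of a shrunk subspace. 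Concretely, let $\cB\le\M(3,\C)$ be the $5$-dimensional space of matrices whose $(i,j)$ entries vanish for $i,j\in\{1,2\}$, i.e.\ whose first two columns lie in $\langle e_3\rangle$. Then $\cB(X)=\langle e_3\rangle$ for every nonzero $X\le\langle e_1,e_2\rangle$ and $\cB(X)=\C^3$ for every $X$ not contained in $\langle e_1,e_2\rangle$; hence $\cB$ is irreducible, $X\subseteq\cB^{2}(X)$ for all $X$, and $V_e=V_o=\C^3$, yet $U=\langle e_1,e_2\rangle$ is shrunk ($\dim U=2>1=\dim\cB(U)$). This $\cB$ is of course not a Lie algebra, which is precisely the point: any completion of your final step must re-inject the Lie/semisimple structure in an essential way, beyond the Casimir inclusion and irreducibility, so the heart of the theorem remains unproved in your sketch.

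For comparison, the paper closes this case by extracting an honest $\cB$-invariant subspace from the shrunk subspace itself: supermodularity of $U\mapsto\dim U-\dim\cB(U)$ yields a canonical shrunk subspace, which is preserved by every pair in the stabilizer of $\cB$ under two-sided multiplication, in particular by all $e^{tM}$ with $M\in\cB$; differentiating at $t=0$ shows the canonical shrunk subspace is invariant under $\cB$, and irreducibility then forces $\cB$ to be trivial. Your construction never turns $U$ into an invariant subspace in the problematic case --- the invariant spaces $V_e+V_o$ and $V_e\cap V_o$ simply collapse to $V$ without consequence. Your suggested alternative via Cartan subalgebras and weight diagrams is closer in spirit to the independent representation-theoretic proof of Derksen and Makam mentioned in the paper, but it, too, would have to be carried out before the claim is established.
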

The proof of \cref{thm:shrunk} for the irreducible case makes use of 
the connection of Lie algebras and Lie groups as summarized in 
\cref{app:corresp}. Going from the irreducible to the general case, we prove some 
basic properties of shrunk subspaces 
which may be of independent interest in \cref{sec:ncrk}. 

After we proved \cref{thm:shrunk}, we learnt that 
Derksen and Makam independently proved it using a different approach via 
representation theory of Lie algebras \cite{DM21}.

We then present a deterministic polynomial-time algorithm to solve SDIT for matrix 
Lie algebras over $\C$. Our model of computation over $\C$ will be explained in 
\cref{sec:maxrk}.
\begin{theorem}\label{thm:max_rank}
Let $\cB\leq\M(n, \C)$ be a matrix Lie algebra. Then there is a deterministic 
polynomial-time algorithm to solve the symbolic determinant identity testing 
problem for $\cB$.
\end{theorem}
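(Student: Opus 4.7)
The plan is to combine \cref{thm:shrunk} with the structure theory of complex Lie algebras to reduce SDIT for $\cB$ to deciding singularity of irreducible non-trivial matrix Lie algebras, which in turn is handled by a finite weight computation.

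First, compute a composition series $0 = V_0 \subset V_1 \subset \cdots \subset V_k = \C^n$ of $\C^n$ as a $\cB$-module. Since invariant subspaces of $\cB$ coincide with submodules for the associative enveloping algebra $\mathrm{Env}(\cB) \le \M(n, \C)$, a composition series can be produced in deterministic polynomial time using module decomposition algorithms for finite-dimensional associative algebras (Friedl--R\'onyai, Eberly). Let $\cB_i$ denote the image of $\cB$ in $\mathfrak{gl}(V_i/V_{i-1})$. If some $\cB_i$ is the trivial irreducible matrix Lie algebra, then \cref{thm:shrunk} guarantees a shrunk subspace and hence singularity, and we output ``singular.''

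Otherwise all $\cB_i$ are non-trivial irreducible, and the claim is that $\cB$ is non-singular iff every $\cB_i$ is non-singular. In a basis adapted to the composition series, each $B \in \cB$ is block upper-triangular with diagonal blocks $\pi_i(B) \in \cB_i$, so $\det(B) = \prod_i \det(\pi_i(B))$. The forward direction is immediate. For the converse, each projection $\pi_i : \cB \twoheadrightarrow \cB_i$ is surjective by definition, so $\{B \in \cB : \det(\pi_i(B)) \neq 0\}$ is a nonempty Zariski-open subset of the irreducible affine variety $\cB$ whenever $\cB_i$ is non-singular, and the intersection of finitely many such nonempty open sets in an irreducible variety is nonempty.

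For a non-trivial irreducible $\cB_i \le \M(n_i, \C)$, compute its Levi decomposition $\cB_i = \mathcal{R}_i \rtimes \mathcal{S}_i$ in polynomial time (de Graaf; Ivanyos--R\'onyai). By Lie's theorem, the solvable radical $\mathcal{R}_i$ acts on the irreducible module by scalars. If these scalars are non-zero, then $\C \cdot I \subseteq \cB_i$, and for any $S$ in the image of $\mathcal{S}_i$ one obtains a non-singular element $\lambda I + S \in \cB_i$ by choosing $\lambda$ outside the finite spectrum of $S$, so $\cB_i$ is non-singular. If the scalars vanish, the image of $\cB_i$ coincides with the image of the semisimple part $\mathcal{S}_i$, and classical weight theory (comparing $\det(h|_{V_i/V_{i-1}}) = \prod_\lambda \lambda(h)^{m_\lambda}$ for $h$ in a Cartan subalgebra against the density of semisimple elements) yields that $\cB_i$ is non-singular iff $0$ is not a weight of the representation; this is decidable after computing a Cartan subalgebra and the weight set.

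The main obstacle is the algorithmic Lie theory: executing the Levi decomposition, finding a Cartan subalgebra, and enumerating the weights of the representation in deterministic polynomial time requires assembling several classical but nontrivial algorithms, and doing so within the paper's model of computation over $\C$ (\cref{sec:maxrk}) demands careful handling of the algebraic numbers that arise from matrix spectra and weight values.
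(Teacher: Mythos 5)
Your argument is mathematically sound, but it takes a genuinely different route from the paper's proof. The paper's algorithm is built around a single structural observation (\cref{lem:key}): for any Cartan subalgebra $\cA \le \cB$ one has $\crk(\cA) = \crk(\cB)$, because a generic element of $\cB$ lies in the Cartan subalgebra it determines and all Cartan subalgebras are conjugate. Since a Cartan subalgebra is nilpotent, hence solvable, Lie's theorem makes $\cA$ upper-triangularisable, and singularity of an upper-triangularisable space reduces to a hitting-set evaluation (\cref{lem:pisigma}). So the entire algorithm is: compute one Cartan subalgebra via \cite{cartan}, then evaluate on a small explicit grid. No composition series, no Levi decomposition, and no representation-theoretic case analysis are needed.

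Your route is the more ``classical'' decomposition: compute a composition series, reduce via the determinant product $\det(B) = \prod_i \det(\pi_i(B))$ and a Zariski-density argument to the irreducible factors, then handle each irreducible factor by a Levi decomposition (scalar radical gives non-singularity; otherwise pass to the semisimple part and test whether $0$ is a weight). The logic is correct; your density argument showing that non-singularity of every $\cB_i$ implies non-singularity of $\cB$ is exactly right, and the reduction of the irreducible case to the weight criterion matches the paper's \cref{fact:singular}. What your approach costs is a substantially heavier algorithmic toolkit: a $\C$-composition series of the enveloping associative algebra, a Levi decomposition of each factor, and a Cartan subalgebra of each semisimple part, all in the paper's number-field model. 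You flag this concern, which is honest; the paper's approach sidesteps it by needing only a single Cartan subalgebra computation for the whole of $\cB$. One simplification worth noting in your own argument: you do not need to enumerate the weight set at all --- deciding whether $0$ is a weight of $(\rho, V_i/V_{i-1})$ is just deciding whether $\bigcap_j \ker \rho(h_j)$ is nonzero for a basis $h_1,\ldots,h_r$ of the Cartan subalgebra, which is plain linear algebra over the base field and avoids the algebraic numbers you were worried about. Also, the appeal to \cref{thm:shrunk} in the trivial-factor case is superfluous, since $\det(\pi_i(B)) \equiv 0$ already gives singularity directly. Finally, note that the paper's \cref{lem:key} actually yields the stronger equality of maximum ranks, which in the semisimple case solves the max-rank problem, not just SDIT; your composition-series reduction gives the SDIT answer but does not control the contribution of the off-diagonal blocks to the rank.
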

We believe that the strategy for the algorithm in \cref{thm:max_rank} is 
interesting. It rests on the key observation 
that the maximum rank of $\cB$ is equal to 
the maximum rank of a Cartan subalgebra of $\cB$. %
(We collect the 
notions and results on Cartan algebras relevant to this paper in 
\cref{app:cartan}.) We then resort to the algorithm 
computing a Cartan subalgebra by de Graaf, Ivanyos and R\'onyai \cite{cartan} to 
get one. As Cartan subalgebras are upper-triangularisable, an SDIT algorithm can 
be devised easily.

Theorems~\ref{thm:shrunk} and~\ref{thm:max_rank} together bring out \emph{the main 
message in this paper}: we identify 
non-trivial irreducible matrix Lie algebras over $\C$ as an interesting families 
of matrix spaces, as (1) they do not admit shrunk subspaces, and (2) SDIT for such 
spaces can be solved in deterministic polynomial time. 

To see that 
matrix Lie algebras do form an interesting family for the maximum rank problem, we list some examples. 

\begin{example}\label{ex:irr}
\begin{enumerate}
\item Note that $\Lambda(n, 
\F)$ is closed under the commutator bracket.
Indeed, $\Lambda(n, 
\F)$ together with the commutator bracket is well-known as the orthogonal Lie 
algebra, and it is easy to see that it is irreducible. 
\item Representations of abstract Lie algebras give rise to matrix Lie algebras. 
For example, let $\fksl(n, \C)$ be the special 
linear Lie algebra, i.e, the Lie algebra of all $n \times n$ 
complex matrices with trace 0.  Let $E_{i,j}$ be the elementary matrix with the only non-zero entry being $1$ in the $(i,j)^{th}$ entry. A linear basis of $\fksl(n, \C)$ consists of $E_{i, j}$, $i\neq j$. Consider for any fixed $d$, the vector 
space 
$V$ spanned by all degree $dn$ monomials in the variables $\{x_1, \cdots, x_n\}$. 
Then, the representation is defined as $\rho(E_{ij})(x_1^{e_1}\cdots x_n^{e_n}) = 
x_i \frac{\partial(x_1^{e_1}\cdots x_n^{e_n} )}{\partial x_j}$. This gives rise to 
an irreducible matrix Lie algebra in $\M(\binom{dn+n-1}{n-1}, \C)$.
\end{enumerate}
\end{example}

One may wonder whether irreducible matrix Lie algebras encompass singular and 
non-singular matrix spaces. To see this, note that $\Lambda(n, \F)$ (as defined in 
\cref{ex:irr}) can be singular (for odd $n$)
or non-singular (for even $n$). In fact, there is a 
representation-theoretic explanation for the maximum rank of certain irreducible 
matrix Lie algebras via weight spaces (\cref{fact:singular}) as already 
observed by Draisma \cite{Dra06}, from which it is 
evident that irreducible matrix Lie algebras can be singular or non-singular. 

\paragraph{Other singularity witnesses and matrix Lie algebras.} 

After Theorem~\ref{thm:shrunk} and~\ref{thm:max_rank}, we study further 
properties of matrix Lie algebras related to singularity as 
follows. 
Let $\cB=\langle B_1, \dots, B_m\rangle\leq \M(n, \F)$ be a matrix space. Let 
$x_1, \dots, x_m$ be a set of commutative variables. Then $B=x_1B_1+\dots +x_mB_m$ 
is a matrix of linear forms in $x_i$'s. When $\F$ is large enough, the
singularity of $\cB$ is equivalent to that of $B$ over the function field. 
Viewing $B$ as a matrix over the rational 
function field $\F(x_1, \dots, x_n)$, its kernel is spanned by vectors whose 
entries are polynomials. Let $v\in \F[x_1, \dots, x_m]^n$ be in $\ker(B)$. By 
splitting $v$ according to degrees if necessary, we can assume that $v$ is 
\emph{homogeneous}, i.e. each 
component of $v$ is homogeneous of degree $d$. 

We are interested in those vectors in the kernel whose entries are linear forms. 
This is also partly motivated by understanding witnesses for singularity of matrix 
spaces, as by \cite{KI04}, putting SDIT in $\NP\cap \coNP$ already implies strong 
circuit lower bounds. Suppose $B$ admits $v\in \ker(B)$ whose components are 
homogeneous degree-$d$ polynomials. Then, ignoring bit complexities, $v$ is a singularity witness of $\cB$ of size $O(m^{d}\times n)$, and the existence of a certificate of degree $d$ can be checked and found in time $O(m^d\times n)$, by writing a linear system in $O(m^d\times n)$ variables.

Let $v_1, \dots, v_m\in \F^n$, and 
$v=x_1v_1+\dots+x_mv_m$ be a vector of linear forms. We say that $v$ is a 
(left homogeneous) \emph{linear kernel vector} of $B$, if each entry of $v^tB$ is 
the zero polynomial. Similarly, $v$ is a right homogeneous linear kernel vector, if each 
entry of $Bv$ is the zero polynomial.

Clearly, whether such a nonzero $v$ exists does not depend on the choice of bases. Indeed, we can give a 
basis-free definition of a linear kernel vector for a matrix space $\cB  \leq \M(n,\F)$ as a non-zero linear map $\psi: \cB \to \F^n$ such that for each $A$,  $\psi(A)^t A = 0$.  

Matrix spaces with linear kernel vectors have appeared in papers by Lov\'asz \cite{Lov89} 
and Raz and Wigderson \cite{RW19}. To see this, note that matrix spaces with  linear kernel vectors can be 
constructed from alternating matrices as exhibited in 
\cref{ex:sing} (2).

One approach for Lie algebras to yield matrix spaces with linear kernel vectors is 
through adjoint representations. 

Recall that, given a Lie algebra 
$[-,-]:\fkg\times \fkg\to\fkg$, the adjoint 
representation of $\fkg$ is $\ad:\fkg \to \fkgl(\fkg)$ 
defined as $\ad_x(y) = [x,y]$ for $x, y\in \fkg$. The image of $\ad$ is a matrix 
space 
$\cA\leq \M(d, \F)$ where $d=\dim(\fkg)$. As the Lie bracket $[,]$ is alternating, 
$\cA$ admits a 
linear kernel vector by the 
construction in \cref{ex:sing} (2).

Our next theorem characterizes Lie algebra representations with linear kernel 
vectors.
(We collect some basic notions of Lie algebra representations 
relevant to this 
paper in \cref{app:basic}.)
Since we are concerned with matrix spaces which are images of Lie algebra 
representations, i.e. $\cB = \rho(\fkg)$ where $\rho$ is a representation of the 
Lie algebra $\fkg$, 
we can assume without generality that $\rho$ is faithful.

\begin{restatable}{theorem}{LinKer}\label{thm:lin_ker}
Let $\cB$ be the image of a faithful irreducible representation $\phi$ of a semisimple Lie 
algebra $\fkg$ over algebraically closed fields of characteristic not $2$ or $3$. 
Then $\cB$ admits a linear kernel vector if and only if $\cB$ is trivial, or 
$\fkg$ is simple and $\phi$ is isomorphic to the adjoint representation.
\end{restatable}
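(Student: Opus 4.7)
The plan is to verify the two directions of the equivalence separately. The ``$\Leftarrow$'' direction is elementary: when $\cB = 0$ is trivial, any $\psi$ is a linear kernel vector; when $\phi$ is the adjoint representation, I would identify $V = \fkg$ and $V^\vee \cong \fkg$ via the Killing form $\kappa$, and set $\psi(x) := x$. Then $\psi(x)^t \ad(x) v = \kappa(x, [x, v]) = -\kappa([x, x], v) = 0$ by invariance of $\kappa$ and antisymmetry of the bracket.

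For the ``$\Rightarrow$'' direction, let $\psi: \fkg \to V^\vee$ be a nonzero linear kernel vector. In characteristic $\neq 2$, the identity $x \cdot \psi(x) = 0$ polarizes to
\[
x \cdot \psi(y) + y \cdot \psi(x) = 0 \qquad \text{for all } x, y \in \fkg,
\]
where $\fkg$ acts on $V^\vee$ contragrediently to $\phi$. I would carry out a weight-space analysis relative to a Cartan subalgebra $\fkh \subseteq \fkg$, with $\fkg = \fkh \oplus \bigoplus_\alpha \fkg_\alpha$ and $V^\vee = \bigoplus_\nu V^\vee_\nu$. First, specializing to $x = y = h \in \fkh$ and decomposing $\psi(h) = \sum_\nu \psi(h)_\nu$ gives $\sum_\nu \nu(h)\, \psi(h)_\nu = 0$; since $\psi(h)_\nu$ is linear in $h$, the quadratic identity $\nu(h)\, \psi(h)_\nu \equiv 0$ forces $\psi(h)_\nu = 0$ for $\nu \neq 0$, so $\psi(\fkh) \subseteq V^\vee_0$. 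Second, specializing to $(h, x_\alpha)$ with $x_\alpha \in \fkg_\alpha$ and separating by weight yields $\psi(\fkg_\alpha) \subseteq V^\vee_\alpha \oplus V^\vee_0$ together with the rigidity identity $\alpha(h) \cdot \psi(x_\alpha)_\alpha = -x_\alpha \cdot \psi(h)$ for all $h \in \fkh$.

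The heart of the argument is to impose this rigidity \emph{simultaneously} for all roots $\alpha$. Viewing it as an identity of linear functionals on $\fkh$, the coupling between $\psi|_\fkh: \fkh \to V^\vee_0$ and the $\fkg$-action on $V^\vee_0$ must be compatible with every root. Using the linear independence of simple roots and the spanning property of the root system in $\fkh^*$, this becomes incompatible with any nontrivial choice of $\psi|_\fkh$ unless the $\fkh$-module structure of $V^\vee$ mimics that of $\fkg$, i.e., $V^\vee \cong \ad$ as $\fkg$-modules. A further application of the polarized identity to pairs $(x_\alpha, x_\beta)$ kills the residual zero-weight components $\psi(x_\alpha)_0$, using that at least one raising map $V^\vee_0 \to V^\vee_\alpha$ is nonzero by irreducibility of $V^\vee$; in the adjoint case, the $(x_\alpha, x_{-\alpha})$ identities together with root-system connectivity force a common scalar $c$ on all root spaces, recovering $\psi = c \cdot \id$. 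The final conclusion that $\fkg$ is simple and $\phi = \ad$ follows from the semisimple decomposition $\fkg = \bigoplus_i \fkg_i$, the factorization of irreducible $V = \bigotimes_i V_i$, and faithfulness of $\rho$ (which forbids any trivial factor $V_i$).

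The main obstacle I anticipate is the ``multi-root rigidity'' step: cleanly converting the joint constraints from all roots into the conclusion $V^\vee \cong \ad$. A streamlined route is to first observe that the space $\mathcal{L}$ of linear kernel vectors is a $\fkg$-submodule of $\hom(\fkg, V^\vee)$ (the defining condition is easily seen to be $\fkg$-invariant), and then use Schur's lemma with the isotypic decomposition of $\hom(\fkg, V^\vee)$ to reduce the claim to showing that $\mathcal{L}$ lies in the trivial isotypic component $\hom_\fkg(\fkg, V^\vee)$; this component is nonzero precisely when $V^\vee \cong \ad$ as $\fkg$-modules. The hypothesis $\mathrm{char} \neq 2, 3$ guards against exceptional behavior in low characteristic.
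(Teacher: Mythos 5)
The decisive content here is the paper's \cref{thm:lin_main}: every linear kernel vector of a representation with no trivial subrepresentation is automatically a module homomorphism $(\ad,\fkg)\to(\rho,V)$; once that is in hand, your endgame (the image of $\psi$ is invariant, its kernel is an ideal, so irreducibility of $\phi$ plus faithfulness force $\fkg$ simple and $\phi\cong\ad$) coincides with the paper's short derivation of \cref{thm:lin_ker} from it. But your proposal leaves exactly that central step open. The weight-space plan you sketch is essentially the authors' original argument, which they relegate to \cref{app:old}: polarize to $\rho(x)\psi(y)+\rho(y)\psi(x)=0$, show $\psi(\fkh)\subseteq V_0$ and $\psi(\fkg_\alpha)\subseteq V_\alpha$, then verify $\psi([x,y])=\rho(x)\psi(y)$ case by case in a Cartan--Weyl basis. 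You state the ``multi-root rigidity'' step only as an expectation, and the proposed shortcut does not close it: it is indeed true that the set $\mathcal{L}$ of linear kernel vectors is a $\fkg$-submodule of $\hom(\fkg,V)$ (this follows from the polarized identity), but being a submodule does not place $\mathcal{L}$ inside the invariants $\hom_{\fkg}(\fkg,V)$; a submodule could a priori contain nontrivial isotypic components, and ruling that out is precisely the assertion to be proved. So what you flag as the main obstacle is a genuine gap, not a routine verification.

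There is also a characteristic issue: the theorem is stated over algebraically closed fields of any characteristic other than $2$ and $3$, while weight-space decompositions of arbitrary irreducible modules, Cartan--Weyl relations, complete reducibility and Schur-type isotypic arguments are characteristic-zero tools (the paper confines that style of proof to $\C$ in \cref{app:old}). The published proof is different and characteristic-robust: \cref{lem:linker-key} shows that $\psi([x,y])-\rho(x)\psi(y)$ is annihilated by $\rho(z)$ for every $z$ in the subalgebra generated by $x$ and $y$; two-generation of semisimple Lie algebras (\cref{thm:gen2}, via Kuranishi in characteristic $0$ and Bois in characteristic $>3$, extended to direct sums by a density lemma) together with \cref{lem:span} produces generating pairs $(x_i,y_i)$ whose tensors span $\fkg\otimes\fkg$, so by trilinearity the difference is annihilated by all of $\rho(\fkg)$ and hence vanishes, as otherwise it would span a trivial subrepresentation. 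To repair your route, either restrict to characteristic $0$ and carry out the appendix computation in full, or switch to the generation-plus-density argument. A minor further point: in positive characteristic the Killing form of a simple Lie algebra can be degenerate, so your ``$\Leftarrow$'' argument is not fully general; the cheaper observation that the identity map is a kernel vector for the adjoint representation (since $[x,x]=0$), or the alternating-form construction of \cref{ex:sing}(2), avoids this entirely.
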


\subsection{Open questions.}
Several questions can be asked after this work. First, can we identify more 
families of singular matrix spaces without shrunk subspaces? Second, our algorithm 
for SDIT of matrix Lie algebras heavily relies on the structure theory of Lie 
algebras and works over $\C$. It will be interesting to devise an alternative 
algorithm that is of a different nature, and works for matrix Lie algebras over 
fields of positive characteristics. Third, characterize 
those representations of non-semisimple Lie algebras with linear kernel vectors.

\paragraph{The structure of the paper.} In \cref{sec:ncrk} we prove some 
results on shrunk subspaces that will be useful to prove \cref{thm:shrunk}. 
In \cref{sec:shrunk} we prove \cref{thm:shrunk}. In 
\cref{sec:maxrk} we prove \cref{thm:max_rank}. In 
\cref{sec:lin_ker} we prove \cref{thm:lin_ker}.

\section{On shrunk subspaces of matrix spaces}\label{sec:ncrk}

In this section we present some basic results and properties regarding shrunk 
subspaces and non-commutative ranks of matrix spaces.

\subsection{Canonical shrunk subspaces}

Let $\cB\leq \M(n, \F)$. For a subspace $U$ of $\F^n$ define
$\sd_\cB(U)$ as the difference $\dim(U)-\dim(\cB(U))$. Thus 
$\sd_\cB(U)$ is positive for a shrunk subspace $U$ and negative
if $\cB$ expands $U$. We then have the following.
\begin{lemma}
\label{lem:shrunkdim-supermod}
The function $\sd_\cB$ is supermodular. More specifically,
if $U_1$ and $U_2$ are two subspaces
of $\F^n$, then,
\begin{equation}
\label{eq:shrunkdim-supermod}
\sd_\cB(U_1\cap U_2)+\sd_\cB(\langle U_1\cup U_2\rangle)\geq
\sd_\cB(U_1)+\sd_\cB(U_2).
\end{equation}
\end{lemma}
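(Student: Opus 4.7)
The plan is to split $\sd_\cB(U) = \dim(U) - \dim(\cB(U))$ into its two parts and exploit the fact that, while $\dim(\cdot)$ on subspaces satisfies the modular identity with equality, the map $U\mapsto \cB(U)$ behaves nicely under sums but only sub-multiplicatively under intersections, yielding the required slack.

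First I would record the two structural observations about images. The image under $\cB$ is additive on sums: $\cB(\langle U_1\cup U_2\rangle) = \cB(U_1)+\cB(U_2)$, since any spanning set of $U_1 + U_2$ is a union of spanning sets of $U_1$ and $U_2$, and $\cB$ is applied elementwise (then linearly extended). On the other hand, $\cB(U_1\cap U_2)\subseteq \cB(U_1)\cap \cB(U_2)$, because any generator $Bu$ with $u\in U_1\cap U_2$ lies in both $\cB(U_1)$ and $\cB(U_2)$. The containment is generally strict, and this is exactly where supermodularity will come from.

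Next I would apply the ordinary modular law to the subspaces $\cB(U_1)$ and $\cB(U_2)$ of $\F^n$:
\[
\dim(\cB(U_1)+\cB(U_2)) + \dim(\cB(U_1)\cap \cB(U_2)) = \dim(\cB(U_1)) + \dim(\cB(U_2)).
\]
Substituting the identity $\cB(U_1)+\cB(U_2) = \cB(\langle U_1\cup U_2\rangle)$ and the inequality $\dim(\cB(U_1\cap U_2))\le \dim(\cB(U_1)\cap \cB(U_2))$ gives
\[
\dim(\cB(\langle U_1\cup U_2\rangle)) + \dim(\cB(U_1\cap U_2)) \le \dim(\cB(U_1)) + \dim(\cB(U_2)).
\]

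Finally I would combine this with the exact modular law for the domains,
$\dim(U_1\cap U_2) + \dim(\langle U_1\cup U_2\rangle) = \dim(U_1)+\dim(U_2)$, and subtract. The dimensions of the $U_i$ cancel into the dimensions of $U_1\cap U_2$ and $\langle U_1\cup U_2\rangle$ on the left, while the image-dimension inequality is what produces the $\ge$ in \eqref{eq:shrunkdim-supermod}. There is no real obstacle here; the only point to be careful about is verifying that $\cB(U)$ is indeed a subspace (so that the modular identity applies), which follows from interpreting $\cB(U)$ as the linear span of the products $Bu$ with $B\in\cB$ and $u\in U$, equivalently as $\sum_i B_i(U)$ for any basis $B_1,\dots,B_m$ of $\cB$.
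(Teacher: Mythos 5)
Your proof is correct and is essentially the same argument as the paper's: both use the modular law for dimensions on the domain subspaces and on $\cB(U_1),\cB(U_2)$, together with the identity $\cB(\langle U_1\cup U_2\rangle)=\cB(U_1)+\cB(U_2)$ and the inclusion $\cB(U_1\cap U_2)\subseteq\cB(U_1)\cap\cB(U_2)$, then subtract. Your explicit remark that $\cB(U)$ should be read as the span $\sum_i B_i(U)$ is a reasonable clarification of the paper's notation but does not change the argument.
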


\begin{proof}
By modularity of the dimension,
we have
\[\dim(U_1\cap U_2)+\dim(\langle U_1\cup U_2\rangle)=\dim(U_1)+\dim(U_2)\]
and
\[\dim(\cB(U_1)\cap \cB(U_2))+\dim(\langle \cB(U_1)\cup \cB(U_2)\rangle)=
\dim(\cB(U_1))+\dim(\cB(U_2)).\]
The second equality, using also that $\cB(\langle U_1\cup U_2\rangle)=
\langle \cB(U_1)\cup \cB(U_2)\rangle$ and
$\cB(U_1\cap U_2)\leq \cB(U_1)\cap \cB(U_2)$,
gives, 
\[\dim(\cB(U_1\cap U_2))+\dim(\cB(\langle U_1\cup U_2\rangle))\leq
\dim(\cB(U_1))+\dim(\cB(U_2)).\]
Subtracting the last inequality from the first equality
gives (\ref{eq:shrunkdim-supermod}).
\end{proof}

\begin{proposition}\label{obs:canonical}
Let $\cB\leq \M(n, \F)$. Suppose $\ncrk(\cB)=n-c$ for $c>0$. Then there exists a 
unique 
subspace $U\leq \F^n$ of the smallest dimension satisfying 
$\dim(U)-\dim(\cB(U))=c$, and there exists a 
unique 
subspace $U'\leq \F^n$ of the largest dimension such that 
$\dim(U')-\dim(\cB(U'))=c$.
\end{proposition}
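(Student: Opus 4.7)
The plan is to deduce this directly from the supermodularity of $\sd_\cB$ established in Lemma~\ref{lem:shrunkdim-supermod}. Since $\ncrk(\cB) = n - c$, by definition $c = \max_U \sd_\cB(U)$, and the maximum is attained (the dimensions involved are bounded integers, so the set of maximizers is nonempty and finite in the sense that dimensions realized are finite).

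The first key step is to show that the family
\[
\mathcal{M} := \{U \leq \F^n : \sd_\cB(U) = c\}
\]
is closed under intersections and sums. Given $U_1, U_2 \in \mathcal{M}$, the supermodularity inequality (\ref{eq:shrunkdim-supermod}) gives
\[
\sd_\cB(U_1 \cap U_2) + \sd_\cB(\langle U_1 \cup U_2\rangle) \;\geq\; \sd_\cB(U_1) + \sd_\cB(U_2) \;=\; 2c.
\]
Since each summand on the left is at most $c$ (as $c$ is the global maximum of $\sd_\cB$), both must equal $c$. Hence $U_1 \cap U_2, \langle U_1 \cup U_2\rangle \in \mathcal{M}$.

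Next I would extract uniqueness from this closure property by a standard minimality argument. Pick $U_* \in \mathcal{M}$ of minimum dimension; this exists since dimensions are non-negative integers. For any other $U \in \mathcal{M}$, the intersection $U_* \cap U$ lies in $\mathcal{M}$ and has dimension at most $\dim U_*$, so by minimality $\dim(U_* \cap U) = \dim U_*$, forcing $U_* \subseteq U$. Thus $U_*$ is contained in every maximizer, so it is the unique minimum-dimension element of $\mathcal{M}$. Symmetrically, picking $U^* \in \mathcal{M}$ of maximum dimension, the sum $\langle U^* \cup U\rangle \in \mathcal{M}$ contains $U^*$ but cannot exceed it in dimension, so $U \subseteq U^*$ for all $U \in \mathcal{M}$, giving uniqueness of the maximum-dimension shrunk subspace.

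The argument is structurally short, and there is no real obstacle once the supermodularity lemma is in hand; the only subtlety is remembering that the upper bound $\sd_\cB \leq c$ is precisely what pins both sides of the supermodular inequality to equality. No separate case analysis is needed.
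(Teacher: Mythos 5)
Your proposal is correct and follows essentially the same route as the paper: both use the supermodularity of $\sd_\cB$ together with the fact that $c$ is the global maximum to show that intersections and spans of maximizers are again maximizers, and then conclude uniqueness of the smallest and largest such subspaces. Your explicit minimality/maximality argument is just a slightly more detailed phrasing of the paper's final step of taking the intersection and span of all subspaces $U$ with $\sd_\cB(U)=c$.
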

\begin{proof}
We use the supermodular function $\sd_\cB$ defined in
\cref{lem:shrunkdim-supermod}. Let $U_1$ and $U_2$ be 
subspaces with $\sd_\cB(U_i)=c$. Then \cref{lem:shrunkdim-supermod} gives
$\sd_\cB(U_1\cap U_2)+\sd_\cB(\langle U_1\cup U_2\rangle)\geq 2c$.
On the other hand, by the definition of the noncommutative rank,  $\sd_\cB(U_1\cap 
U_2)\leq c$ and
$\sd_\cB(\langle U_1\cup U_2\rangle)\leq c$. It follows that all the three 
inequalities are in fact
equalities. Thus the intersection as well as the span of
all the subspaces $U$ with $\sd_\cB(U)=c$ also have this property.
\end{proof}

By \cref{obs:canonical}, in the case $\ncrk(\cB)=n-c$ for $c>0$, we 
shall refer to the subspace $U$ of the smallest dimension satisfying 
$\dim(U)-\dim(\cB(U))=c$ as the 
(lower) %
\emph{canonical shrunk subspace}. The algorithm 
from \cite{IQS17,IQS18} actually computes the canonical shrunk subspace. 

A natural group action on matrix spaces is as follows. Let $G=\GL(n, \F)\times 
\GL(n, \F)$. Then
$(A, C)\in G$ sends $\cB\leq \M(n, \F)$ to $A\cB 
C^{-1}=\{ABC^{-1} : B\in \cB\}$. The stabilizer group of this action on $\cB$  is 
denoted as 
$\Stab(\cB)=\{(A, C)\in G : A\cB C^{-1}=\cB\}$. 
We then have the following proposition.
\begin{proposition}\label{obs:normalizer}
Let $\cB\leq \M(n, \F)$. Suppose $\ncrk(\cB)=n-c$\footnote{Recall that $\ncrk(\cB):= \max\{\dim(U)-\dim(\cB(U)) : U\leq\F^n\}$. } for $c>0$. Then for 
$\forall (A,C)\in \Stab(\cB)$, the canonical shrunk 
subspace $U$ is invariant under $C$, i.e., $C(U)=U$.
\end{proposition}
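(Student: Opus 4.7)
The plan is to exploit the uniqueness of the smallest canonical shrunk subspace established in \cref{obs:canonical}. Given $(A,C)\in \Stab(\cB)$, the stabilizer relation $A\cB C^{-1} = \cB$ rearranges to $\cB C^{-1} = A^{-1}\cB$, which lets us move $C^{-1}$ through $\cB$ at the cost of picking up $A^{-1}$ on the other side.

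First, I would compute how $\sd_\cB$ transforms under $C^{-1}$. For any subspace $W \leq \F^n$, we have the identity $\cB(C^{-1}(W)) = (\cB C^{-1})(W) = (A^{-1}\cB)(W) = A^{-1}(\cB(W))$. Because $A^{-1}$ and $C^{-1}$ are invertible, both $\dim(C^{-1}(W))$ and $\dim(\cB(C^{-1}(W)))$ equal $\dim(W)$ and $\dim(\cB(W))$ respectively, so $\sd_\cB(C^{-1}(W)) = \sd_\cB(W)$.

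Applying this to the canonical shrunk subspace $U$ (which has minimum dimension among subspaces with $\sd_\cB = c$), we get $\sd_\cB(C^{-1}(U)) = c$ and $\dim(C^{-1}(U)) = \dim(U)$. Now I invoke the supermodularity argument from the proof of \cref{obs:canonical}: since both $U$ and $C^{-1}(U)$ achieve $\sd_\cB = c$, the intersection $U \cap C^{-1}(U)$ also achieves $\sd_\cB = c$. By the minimality of $\dim(U)$, this forces $U \cap C^{-1}(U) = U$, i.e., $U \subseteq C^{-1}(U)$. Equality of dimensions then yields $C^{-1}(U) = U$, which is the claim.

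There is no real obstacle here, as the argument is a direct combination of the stabilizer relation with the uniqueness already proved; the only thing to be careful about is the direction in which $C$ acts, since the stabilizer condition $A\cB C^{-1}=\cB$ naturally moves $C^{-1}$ past $\cB$ and thus it is $C^{-1}(U)$ (equivalently, $U$ under $C$) that must be compared to $U$ on the domain side.
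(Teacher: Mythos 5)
Your proof is correct and is essentially the paper's argument: both use the stabilizer relation to show that the image of the canonical shrunk subspace under $C$ (you work with $C^{-1}(U)$, the paper with $C(U)$, which is equivalent since $C$ is invertible) is again a $c$-shrunk subspace of the same minimal dimension, and then conclude by the uniqueness in \cref{obs:canonical}. Your re-run of the supermodularity step to get $U\cap C^{-1}(U)=U$ is redundant once the uniqueness of \cref{obs:canonical} is invoked, but it is harmless.
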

\begin{proof}
From the definition of $\Stab(\cB)$, we have 
$A\cB C^{-1} = \cB$ and thus, $A\cB = \cB C$.
Consider the subspace $A(U)$. Then, $\cB (C(U)) =(\cB C) (U) = (A\cB ) (U) = A 
(\cB (U)) $.
Since $A, C \in \GL(n,\F)$, $\dim(C(U)) = \dim(U)$ and $\dim(A (\cB (U))) = \dim( 
\cB(U))$. It follows that $C(U)$ is also a  $c$-shrunk subspace of the same 
dimension as $U$. 
We then conclude that $C(U)=U$ by \cref{obs:canonical}.
\end{proof}

\subsection{Shrunk subspaces of block upper-triangular matrix 
spaces}\label{subsec:diag}

Consider the following situation. Suppose $\cB\leq \M(n, \F)$
satisfies that any $B\in \cB$ is in the block upper-triangular form, i.e. 
\[
B=\begin{bmatrix}
C_1 & D_{1,2} & \dots & D_{1,d} \\
0 & C_2 & \dots & D_{2,d} \\
\vdots & \vdots & \ddots & \vdots \\
0 & 0 & \dots & C_d
\end{bmatrix},
\]
where $C_i$ is of size $n_i\times n_i$. Let 
\[\cC_i=\langle C_i\in \M(n_i, \F) : C_i 
\text{ appears as the }i\text{th diagonal block of some } B\in \cB\rangle.\]

\begin{lemma}
\label{lem:blockdiag}
Let $\cB\leq \M(n, \F)$, and let $V\leq \F^n$
such that $\cB(V)\leq V$. If there exists a shrunk subspace
for $\cB$, then there also exist one which is either included
in $V$ or contains $V$. 
\end{lemma}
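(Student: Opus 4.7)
The plan is to deduce this from the supermodularity of $\sd_\cB$ (\cref{lem:shrunkdim-supermod}) applied to the pair $U, V$, where $U$ is any shrunk subspace. The two candidates to examine are $U \cap V$ (which is contained in $V$) and $\langle U \cup V \rangle$ (which contains $V$), and the goal is to show that at least one of them is itself a shrunk subspace.

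The first step is to observe that since $V$ is $\cB$-invariant, i.e.\ $\cB(V) \leq V$, we have $\dim(\cB(V)) \leq \dim(V)$, so $\sd_\cB(V) \geq 0$. Now let $U$ be any shrunk subspace of $\cB$, so that $\sd_\cB(U) \geq 1$. Applying \cref{lem:shrunkdim-supermod} gives
\[
\sd_\cB(U \cap V) + \sd_\cB(\langle U \cup V \rangle) \;\geq\; \sd_\cB(U) + \sd_\cB(V) \;\geq\; 1.
\]
Since $\sd_\cB$ takes integer values (dimensions are integers), at least one of the two summands on the left is $\geq 1$, hence the corresponding subspace is a shrunk subspace of $\cB$. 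The first candidate $U \cap V$ is contained in $V$, and the second candidate $\langle U \cup V \rangle$ contains $V$, so in either case we obtain a shrunk subspace of the required form.

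There is no real obstacle here beyond packaging the supermodularity estimate correctly; the only point that needs to be flagged is the nonnegativity of $\sd_\cB(V)$, which is exactly what the invariance hypothesis $\cB(V) \leq V$ buys us, and without which the argument would not go through (the sum could drop below $1$). Note in particular that if $V$ itself happens to be a shrunk subspace, the statement is trivially satisfied by taking $V$, which simultaneously is contained in and contains $V$.
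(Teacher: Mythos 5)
Your proof is correct and follows essentially the same route as the paper: apply the supermodularity of $\sd_\cB$ to a shrunk subspace $U$ and the invariant subspace $V$, and conclude that one of $U\cap V$ or $\langle U\cup V\rangle$ must have positive shrink. The only cosmetic difference is that the paper first disposes of the case where $V$ itself is shrunk and then uses $\sd_\cB(V)=0$, whereas you use $\sd_\cB(V)\geq 0$ directly, which is an equally valid (slightly more streamlined) packaging of the same argument.
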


\begin{proof}
Assume that $V$ itself is not a shrunk subspace. Then $\cB(V)=V$.
Let $U$ be a shrunk subspace of $\cB$. By \cref{lem:shrunkdim-supermod},
we have $\sd_\cB(V\cap U)+\sd_\cB(\langle V\cup U\rangle)\geq
\sd_\cB(V)+\sd_\cB(U)=0+\sd_\cB(U)>0$. Thus either 
$\sd_\cB(V\cap U)$ or
$\sd_\cB(\langle V\cup U\rangle)$ must be positive.
\end{proof}

The following proposition characterizes the existence of shrunk subspaces in block 
upper-triangular matrix spaces. 
\begin{proposition}\label{prop:blockd}
Let $\cB\leq \M(n, \F)$ and $\cC_i\leq \M(n_i, \F)$, 
$i\in[d]$, as above.  Then $\cB$ has a shrunk subspace if and only if there exists 
$i\in[d]$ such that $\cC_i$ has a shrunk subspace.
\end{proposition}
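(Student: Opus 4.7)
For the easy direction, suppose some $\cC_i$ has a shrunk subspace $U_i\le\F^{n_i}$. Let $V_k\le\F^n$ denote the subspace of vectors supported on the first $k$ blocks, i.e.\ $V_k=\F^{n_1}\oplus\cdots\oplus\F^{n_k}\oplus 0\oplus\cdots\oplus 0$. Because each $B\in\cB$ is block upper-triangular, every $V_k$ is $\cB$-invariant, and the induced action of $\cB$ on $V_i/V_{i-1}\cong\F^{n_i}$ is precisely $\cC_i$. Take any lift $\til U_i\le V_i$ of $U_i$ and set $U=V_{i-1}+\til U_i$. For $v\in V_{i-1}$ we have $Bv\in V_{i-1}$, while for $v\in\til U_i$ the image $Bv$ lies in $V_{i-1}$ plus a lift of $C_i u_i\in\cC_i(U_i)$. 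Hence $\cB(U)\subseteq V_{i-1}+\til W$ for a lift $\til W$ of $\cC_i(U_i)$, giving $\dim\cB(U)\le\dim V_{i-1}+\dim\cC_i(U_i)<\dim V_{i-1}+\dim U_i=\dim U$, so $U$ is shrunk.

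For the harder direction I would induct on $d$. The base case $d=1$ is immediate since then $\cB=\cC_1$. For the inductive step, apply \cref{lem:blockdiag} with the $\cB$-invariant subspace $V=V_{d-1}$: any shrunk subspace of $\cB$ may be replaced by one $U$ with either $U\subseteq V_{d-1}$ or $U\supseteq V_{d-1}$.

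If $U\subseteq V_{d-1}$, then $\cB(U)\subseteq V_{d-1}$ and the restriction $\cB|_{V_{d-1}}$ is itself block upper-triangular with diagonal blocks $\cC_1,\dots,\cC_{d-1}$; so $U$ is shrunk for $\cB|_{V_{d-1}}$, and the inductive hypothesis supplies the required shrunk subspace of some $\cC_i$, $i\le d-1$. If $U\supseteq V_{d-1}$, I split on whether $V_{d-1}$ is itself shrunk. If it is, I replace $U$ by $V_{d-1}$ and reduce to the previous case. Otherwise $\cB(V_{d-1})=V_{d-1}$ (equality of dimensions together with $\cB(V_{d-1})\subseteq V_{d-1}$), so in particular $V_{d-1}\subseteq\cB(U)$. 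Passing to the quotient $\F^n/V_{d-1}\cong\F^{n_d}$, on which $\cB$ induces $\cC_d$, the image $U/V_{d-1}$ satisfies
\[
\dim(U/V_{d-1})=\dim U-\dim V_{d-1}>\dim\cB(U)-\dim V_{d-1}=\dim\cC_d(U/V_{d-1}),
\]
so $U/V_{d-1}$ is a shrunk subspace of $\cC_d$ (nonzero because $U\supsetneq\cB(U)\supseteq V_{d-1}$).

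The main obstacle is the case $U\supseteq V_{d-1}$: naively quotienting by $V_{d-1}$ does not automatically preserve shrunkenness, since one only controls $\dim(\cB(U)+V_{d-1})-\dim V_{d-1}$, not $\dim\cB(U)-\dim V_{d-1}$. The key observation that unlocks this case is that the dichotomy ``either $V_{d-1}$ itself is shrunk, or $\cB(V_{d-1})=V_{d-1}$'' exactly supplies the missing inclusion $V_{d-1}\subseteq\cB(U)$ needed for the quotient argument to go through.
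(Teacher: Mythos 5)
Your proof follows the same inductive strategy as the paper --- induct on $d$, invoke \cref{lem:blockdiag} with a $\cB$-invariant subspace, and split on whether the shrunk subspace lies inside or contains it --- but your version is both more careful and, as far as I can tell, actually \emph{corrects} the paper's argument. The paper takes $V=\langle e_{n_1+1},\ldots,e_n\rangle$, the span of the last $n-n_1$ coordinates; for a block \emph{upper}-triangular space this is not $\cB$-invariant (the off-diagonal blocks $D_{1,j}$ push $V$ into the first $n_1$ coordinates), so \cref{lem:blockdiag} does not apply as stated. Your choice $V_{d-1}=\F^{n_1}\oplus\cdots\oplus\F^{n_{d-1}}\oplus 0$ \emph{is} $\cB$-invariant, which is exactly what the lemma needs, and the same fix could equally be made with $\langle e_1,\ldots,e_{n_1}\rangle$. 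You also make explicit the point the paper glosses over in the quotient step: that the dichotomy ``$V_{d-1}$ is shrunk or $\cB(V_{d-1})=V_{d-1}$'' is what forces $V_{d-1}\subseteq\cB(U)$, so that $\cB(U)/V_{d-1}$ equals $\cC_d(U/V_{d-1})$ rather than merely containing it. One small slip: your parenthetical ``nonzero because $U\supsetneq\cB(U)\supseteq V_{d-1}$'' asserts $U\supseteq\cB(U)$, which you haven't shown; the correct justification is the dimension chain $\dim U>\dim\cB(U)\ge\dim V_{d-1}$, which gives $U\supsetneq V_{d-1}$ since already $U\supseteq V_{d-1}$. That is easily patched; the proof is otherwise sound.
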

\begin{proof}
The if direction can be verified easily. For the only if direction, we induct on 
$d$. When $d=1$, this is clear. Suppose this holds for $d< k$. Consider $d=k$, and 
suppose $\cB$ admits a shrunk subspace. Let $V\leq \F^n$ be the subspace spanned 
by those standard basis vectors $e_{n_1+1}, e_{n_1+2}, \dots, e_n$. We then have 
two cases. 
\begin{enumerate}
\item There exists a shrunk subspace $W\leq V$. In this case, by the induction 
hypothesis, there exists $i\in\{2, \dots, n\}$ such that $\cC_i$ has a shrunk 
subspace.
\item There are no shrunk subspaces $W\leq V$. Then by \cref{lem:blockdiag}, 
there exists a shrunk subspace $W$ such that $W>V$. Then by considering $W/V$, we 
obtain a shrunk subspace for $\cC_1$.
\end{enumerate}
This concludes the proof of \cref{prop:blockd}.
\end{proof}

\section{Shrunk subspaces of matrix Lie algebras over $\C$}\label{sec:shrunk}

In this section, we will give a characterization of those matrix Lie 
algebras over $\C$ with shrunk subspaces, proving \cref{thm:shrunk}. The 
main reason for working over $\C$ is to make use of the connections between Lie algebras and Lie groups as described in \cref{app:corresp}. 

We will first give such a characterization for irreducible matrix Lie algebras. 
The general case then follows by combining this 
with the results in 
\cref{subsec:diag}.

The key to understanding the irreducible case lies in the following lemma; for notions such as matrix exponentiation and derivation, cf. \cref{app:corresp}.

\begin{lemma}[{\cite[Proposition 4.5 (1)]{hall}}] \label{exp}
Let $\cB\leq\M(n, \C)$ be an irreducible matrix Lie algebra. %
Let $W \leq \C^n$ and $M \in \cB$. 
If $e^{tM}(W) \leq W$ for all $t\in\R$, %
then $M(W) 
\leq W$ .
\end{lemma}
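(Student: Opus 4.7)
The plan is a direct differentiation argument at $t=0$, which makes no essential use of the Lie algebra structure of $\cB$ — the irreducibility hypothesis appears here only because this is how the lemma will be invoked in subsequent arguments, not because the proof needs it.

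First, fix an arbitrary $w \in W$ and consider the smooth curve $\gamma(t) := e^{tM}w$ in $\C^n$. By hypothesis, $\gamma(t) \in W$ for every $t \in \R$, and in particular $\gamma(0) = w \in W$. Since $W$ is a linear subspace, the difference quotient
\[
\frac{\gamma(t) - \gamma(0)}{t} \;=\; \frac{e^{tM}w - w}{t}
\]
lies in $W$ for every $t \neq 0$.

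Next, take the limit $t \to 0$. Using the power-series expansion $e^{tM} = I + tM + \tfrac{t^2}{2}M^2 + \cdots$, a routine estimate gives
\[
\lim_{t \to 0} \frac{e^{tM}w - w}{t} \;=\; Mw.
\]
Because $W$ is finite-dimensional, it is closed in $\C^n$ (in any norm topology), so this limit of elements of $W$ must itself lie in $W$. Hence $Mw \in W$, and since $w \in W$ was arbitrary, $M(W) \leq W$, as required.

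There is no substantive obstacle here: the statement is a standard linear-algebraic fact, recorded in Hall's textbook exactly in this form. The only subtlety worth noting is that one uses the finite dimension of $W$ (equivalently, that $W$ is a linear subspace of a finite-dimensional space) to justify that $W$ is closed under limits — a fact that would require more care in the infinite-dimensional setting.
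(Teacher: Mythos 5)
Your proof is correct and follows essentially the same route as the paper's: fix $w \in W$, observe that the difference quotient $\frac{e^{hM}w - w}{h}$ stays in $W$, and pass to the limit $h \to 0$ to get $Mw \in W$, using that a linear subspace of $\C^n$ is closed. Your remark that irreducibility plays no role is also accurate --- the paper's proof does not use it either.
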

\begin{proof}
Take any $w\in W$. 
Note that $ \frac{d(e^{tM})}{dt}(w) = (Me^{tM})(w)=M(e^{tM}(w))$, and 
$\frac{d(e^{tM})}{dt} = \lim_{h \rightarrow t} \frac{e^{hM}-e^{tM}}{h-t}$. So at 
$t=0$, 
we have $M(w)=\lim_{h \rightarrow 
0} \frac{e^{hM}(w)-w}{h}$. Since
$e^{tM}(w) \in W$ for all $t\in \R$, $\frac{e^{hM}(w)-w}{h}$ lies in $W$ for any 
$h$, and so 
does the 
limit which is $M(w)$. 
\end{proof}

We will also need the following result. 
\begin{lemma}\label{cor:stab}
Given a matrix Lie algebra $\cB\leq\M(n, \C)$, we have that $\forall t\in \R$ 
and $M \in \cB$, $e^{tM}\cB e^{-tM}=\cB$.
\end{lemma}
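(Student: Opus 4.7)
The plan is to recognize conjugation by $e^{tM}$ as the exponential of the adjoint action and then exploit that $\cB$ is a finite-dimensional subspace closed under the Lie bracket.

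First, I would fix $B \in \cB$ and consider the smooth curve $\phi : \R \to \M(n, \C)$ defined by $\phi(t) = e^{tM} B e^{-tM}$. A direct computation with the product rule, using $\frac{d}{dt} e^{tM} = M e^{tM} = e^{tM} M$, gives $\phi'(t) = M\phi(t) - \phi(t)M = [M, \phi(t)] = \ad_M(\phi(t))$, with initial condition $\phi(0) = B$. Next, since $\cB$ is a matrix Lie algebra containing $M$, the operator $\ad_M$ restricts to a linear endomorphism of the finite-dimensional subspace $\cB$. Hence the initial value problem $\phi'(t) = \ad_M(\phi(t))$, $\phi(0) = B$, lives inside $\cB$, and its unique solution is $\phi(t) = \exp(t\,\ad_M)(B) \in \cB$ for every $t \in \R$. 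This yields the inclusion $e^{tM} \cB e^{-tM} \subseteq \cB$.

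For the reverse inclusion I would simply apply the same argument with $-t$ in place of $t$, obtaining $e^{-tM} \cB e^{tM} \subseteq \cB$, and then conjugate both sides by $e^{tM}$ on the left and $e^{-tM}$ on the right; equivalently, the map $B \mapsto e^{tM} B e^{-tM}$ and its inverse $B \mapsto e^{-tM} B e^{tM}$ both send $\cB$ into $\cB$, so each restricts to a bijection of $\cB$.

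The only non-routine point is justifying that the formal series $\exp(t\,\ad_M) = \sum_{k \geq 0} \frac{t^k}{k!} \ad_M^k$ really does map into $\cB$; this is immediate from finite-dimensionality of $\cB$ (hence its closedness in the Euclidean topology) together with the observation that each iterated bracket $\ad_M^k(B) = [M,[M,\dots,[M,B]\dots]]$ lies in $\cB$ by the Lie algebra closure property. No further obstacle is expected.
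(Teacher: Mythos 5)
Your proof is correct, but it takes a genuinely different route from the paper. You prove the statement directly via the identity $e^{tM}Be^{-tM}=e^{t\,\ad_M}(B)$: the curve $\phi(t)=e^{tM}Be^{-tM}$ solves the linear ODE $\phi'=\ad_M(\phi)$, $\phi(0)=B$, and since $M\in\cB$ and $\cB$ is closed under the bracket, $\ad_M$ restricts to an endomorphism of the finite-dimensional (hence topologically closed) subspace $\cB$, so each partial sum of $\sum_k \frac{t^k}{k!}\ad_M^k(B)$ and therefore the limit stays in $\cB$; the reverse inclusion follows by replacing $t$ with $-t$. This is the classical Hadamard-lemma argument, and all the steps check out, including the commutation $e^{tM}BMe^{-tM}=\phi(t)M$ needed in the derivative computation and the closedness point you flag at the end. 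The paper instead goes through the Lie group--Lie algebra correspondence: it invokes \cref{thm:subalgebra} to produce a connected matrix Lie group $G$ with $\Lie(G)=\cB$, notes $e^{tM}\in G$, and then applies \cref{expstab} ($gXg^{-1}\in\fkg$ for $g\in G$, $X\in\fkg$) to conclude. Your argument is more elementary and self-contained, since it avoids the nontrivial subgroup--subalgebra theorem from Hall; the paper's version is shorter given that it has already set up that machinery in \cref{app:corresp} and reuses it elsewhere (e.g.\ in the proof of \cref{thm:shrunk}). Either proof suffices for the way the lemma is used.
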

\begin{proof}
By the connection between Lie groups and Lie algebras (cf. 
\cref{thm:subalgebra}), there exists some Lie group $G$ whose associated 
Lie algebra is $\cB$. This implies that for any $M \in 
\cB$, $e^{tM} \in G$. Then by the fact that the conjugation of $g\in G$ stabilizes 
$\cB$ (cf. \cref{expstab}), we have 
$e^{tM}\cB e^{-tM}=\cB$.
\end{proof}

We are now ready to prove \cref{thm:shrunk}.

%
\Shrunk*
\begin{proof}
We first handle the irreducible case. 

For the sake of contradiction, suppose $\cB$ has a shrunk subspace. Then let 
$V=\C^n$, and let $U 
\leq V$ 
be the canonical shrunk subspace of $\cB$.
By \cref{cor:stab}, for any $M\in \cB$, we have that $(e^{tM}, e^{tM}) 
\in 
\Stab(\cB)$. 
By
\cref{obs:normalizer}, $U$ is invariant under $e^{tM}$. By 
\cref{exp}, $U$ is an invariant subspace of $\cB$. 

Since $\cB$ is irreducible 
as a matrix Lie algebra, the 
only 
invariant subspaces are 
$0$ and $V$. Since $U$ is a shrunk subspace, it cannot be $0$. If $U=V$, then 
$\cB(V)$ is a proper subspace of $V$. 
If $\cB(V)$ is non-zero, then $\cB(\cB(V))\leq\cB(V)$. This implies that $\cB(V)$ 
is a proper and non-zero invariant subspace of $\cB$, which is impossible as $\cB$ 
is irreducible. 
It follows that $U=V$ and $\cB(V)=0$. 
In this case, $V$ must be of dimension $1$, as any non-zero proper subspace of $V$ 
is an 
invariant subspace. It follows that
$\cB$ has to be the trivial 
matrix Lie algebra. We then arrive at the desired contradiction. 

The general case follows from the irreducible case as shown above, and 
\cref{prop:blockd}. 
\end{proof}

\section{SDIT for matrix Lie algebras over $\C$}\label{sec:maxrk}

In this section, we present a deterministic polynomial-time algorithm that solves 
SDIT for matrix Lie algebras 
over $\C$, proving \cref{thm:max_rank}.

The basic idea is to 
realize that $\cB$ is singular if and only if every Cartan subalgebra of $\cB$ is 
singular. Furthermore, a Cartan subalgebra is nilpotent, so in particular it is 
solvable. It follows, by Lie's theorem (\cref{thm:Lie}), that a Cartan 
subalgebra of a matrix Lie algebra is 
upper-triangularisable by the 
conjugation action. A key task here is to compute a Cartan subalgebra of $\cB$. 
This 
problem has been solved by de Graaf, Ivanyos, and R\'onyai in \cite{cartan}. 

\paragraph{Computation model over $\C$.} We adopt the following computation model 
over $\C$, in consistent with that in \cite{cartan}. That is,  we 
assume the input matrices are over a number field $\mathbb{E}$. Therefore 
$\mathbb{E}$ is a 
finite-dimensional algebra over $\Q$. If $\dim_\Q(\mathbb{E})=d$, then 
$\mathbb{E}$ is the 
extension of $\F$ by a single generating element $\alpha$, so $\mathbb{E}$ can be 
represented by the minimal polynomial of $\alpha$ over $\F$, together with an 
isolating rectangle for $\alpha$ in the case of $\C$. 

\subsection{Cartan subalgebras.} We collect notions and results on Cartan 
subalgebras useful to us in 
\cref{app:cartan}. 
Here, we recall the following. Let $\fkg$ be a Lie algebra. A 
subalgebra $\fkh\subseteq \fkg$ is a \emph{Cartan subalgebra}, if it is nilpotent 
and self-normalizing. 

In \cite{cartan}, de Graaf, Ivanyos, and R\'onyai studied the problem of computing 
Cartan subalgebras. We state the following version of their main result in our 
context as follows. For a more precise statement, see 
\cref{thm:algoregular}.
\begin{theorem}[{\cite[Theorem 5.8]{cartan}}]\label{thm:cartan}
Let $\cB\leq\M(n, \C)$ be a matrix Lie algebra. Then there exists a deterministic 
polynomial-time algorithm that computes a linear basis of a Cartan subalgebra 
$\cA$ of $\cB$.
\end{theorem}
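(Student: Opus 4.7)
The plan is to reduce the problem to finding a \emph{regular} element of $\cB$ and then extracting its Engel subalgebra by standard linear algebra. Recall that $x \in \cB$ is regular when the generalized zero eigenspace $\cB_0(x)$ of $\ad_x$ acting on $\cB$ (that is, $\cB_0(x) = \ker((\ad_x|_\cB)^N)$ for $N = \dim \cB$) has minimum possible dimension. A classical theorem, valid in characteristic zero, says that for any regular $x$ the Engel subalgebra $\cB_0(x)$ is a Cartan subalgebra of $\cB$. Once a regular $x$ is found, $\cB_0(x)$ can be computed via iterated nullspace computation for $\ad_x : \cB \to \cB$, which is polynomial-time linear algebra over the number field $\mathbb{E}$. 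So the whole task reduces to finding a regular element deterministically.

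My approach would be iterative refinement on the invariant $d(x) := \dim \cB_0(x)$. Given a current $x$, let $d := d(x)$, equal to the multiplicity of $\lambda = 0$ in the characteristic polynomial $p_x(\lambda) = \det(\lambda \idmat - \ad_x|_\cB) = \sum_j c_j(x) \lambda^j$. I would search for $y \in \cB$ and $t_0 \in \mathbb{E}$ with $d(x + t_0 y) < d$; this happens whenever some coefficient $c_j(x + ty)$ with $j < d$ is a nonzero polynomial in $t$ and $t_0$ avoids its roots. Over the infinite field $\mathbb{E}$, such a $t_0$ is located deterministically by trying integer values $0, 1, \ldots, \deg_t c_j$. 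Each successful refinement strictly decreases $d$, so the loop terminates in at most $\dim \cB$ steps with a regular element in hand.

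The main obstacle is ensuring a valid direction $y$ always exists when $x$ is not yet regular. It is \emph{not} enough to scan $y$ through a linear basis $B_1, \ldots, B_m$ one at a time: vanishing of $c_j$ on the $m$ affine lines $x + \mathbb{E}\, B_i$ does not imply $c_j \equiv 0$ on $\cB$. I would handle this by a Kronecker-type substitution: instead of perturbing one basis vector at a time, consider the single-parameter family $x + \sum_{i=1}^m t^{D^{i-1}} B_i$ with $D$ chosen larger than the total degree of the $c_j$ as polynomials on $\cB$. This converts the multivariate nonvanishing question for $c_j$ into a univariate nonvanishing question in $t$ of controlled degree, which is handled as above. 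Standard bit-complexity bounds on characteristic polynomials and generalized nullspaces over $\mathbb{E}$ keep every operation polynomial-time, yielding the claimed algorithm; the final output is a basis of $\cB_0(x)$ for the regular $x$ produced by the loop.
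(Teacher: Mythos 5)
There is a genuine gap in your refinement step. The Kronecker substitution $x + \sum_{i=1}^m t^{D^{i-1}} B_i$ does preserve nonvanishing of the coefficients $c_j$, but it destroys polynomiality: the exponents go up to $D^{m-1}$, so the univariate polynomial $c_j\bigl(x + \sum_i t^{D^{i-1}} B_i\bigr)$ has degree exponential in $m$. Consequently your search set $\{0,1,\ldots,\deg_t c_j\}$ has exponentially many candidates, and even a single evaluation is infeasible, since plugging in any integer $t_0\ge 2$ produces matrix entries of bit length about $D^{m-1}\log t_0$. So the loop you describe is not a deterministic polynomial-time algorithm, and you correctly identified but did not actually resolve the real difficulty: certifying, deterministically and cheaply, that a good perturbation direction exists when $x$ is not yet regular. (A naive variable-by-variable search has the same problem in a different guise: testing whether a partially substituted coefficient of the characteristic polynomial is identically zero is itself a nontrivial identity-testing question.)

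The paper does not prove this statement from scratch; it simply invokes de Graaf--Ivanyos--R\'onyai (Theorem 5.8 of the cited work, restated as \cref{thm:algoregular}), after the easy reduction of computing structure constants of $\cB$ from a basis $B_1,\dots,B_m$ and then returning $\FitZero(\ad_x)$ for the regular element $x$ their algorithm outputs. The way that algorithm escapes your obstacle is structural rather than generic: the search directions are restricted to the current Engel subalgebra $E=\FitZero(\ad_x)$, and a lemma about Engel subalgebras guarantees that along each line $x + c(y-x)$ with $y\in E$ there are at most $\dim\fkg - \dim E < n$ bad values of $c$, so a fixed set $\Omega$ of $n+1$ field elements suffices for each line, and failure of all these line searches certifies that $E$ is already a Cartan subalgebra. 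That is the ingredient your proposal is missing; if you want a self-contained proof rather than a citation, you would need to import (and prove) that lemma, at which point you are essentially reproducing the cited algorithm.
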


\subsubsection{Maximum ranks of Cartan subalgebras.} The key lemma that supports 
our algorithm is the following. 

\begin{lemma}\label{lem:key}
Let $\cB\leq \M(n, \C)$ be a matrix Lie algebra. Let $\cA\leq \cB$ be a Cartan 
subalgebra. Then,  $\crk(\cB)=\crk(\cA)$.
\end{lemma}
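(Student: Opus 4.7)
The plan is to prove the two inequalities separately. The direction $\crk(\cA) \le \crk(\cB)$ is immediate from $\cA \subseteq \cB$, so all the content lies in showing $\crk(\cB) \le \crk(\cA)$.

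For this, I would exploit the conjugation action of the connected Lie subgroup $G \le \GL(n, \C)$ integrating $\cB$. By \cref{cor:stab} together with the Lie correspondence recalled in \cref{app:corresp}, $g \cB g^{-1} = \cB$ for every $g \in G$. Since conjugation by an invertible matrix preserves rank, each conjugate $g \cA g^{-1}$ is again a subspace of $\cB$ whose maximum rank equals $\crk(\cA)$. Consequently, the union $\bigcup_{g \in G} g \cA g^{-1}$ is contained in the Zariski-closed locus $\{B \in \cB : \rank(B) \le \crk(\cA)\}$, which is cut out by the vanishing of all $(\crk(\cA)+1)$-minors of a linear pencil representing $\cB$.

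The heart of the argument is then the classical density statement that $\bigcup_{g \in G} g \cA g^{-1}$ is Zariski-dense in $\cB$. Over $\C$ this is a standard fact about Cartan subalgebras that I would either cite directly from \cref{app:cartan} or prove via a regular element $h \in \cA$: regularity yields the Fitting decomposition $\cB = \cA \oplus \cB_1(h)$ with $\ad(h)$ invertible on $\cB_1(h)$, and the orbit map $\phi(g, A) = g A g^{-1}$ has differential $(X, Y) \mapsto [X, h] + Y$ at $(1, h)$ whose image is $[\cB, h] + \cA = \cB_1(h) + \cA = \cB$; hence $\phi$ is dominant. Once density is established, the dense union lies in the Zariski-closed maximum-rank locus, forcing that locus to be all of $\cB$, and so $\crk(\cB) \le \crk(\cA)$.

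The main obstacle is extracting the density statement cleanly from the appendix and confirming the regular-element dimension count; in particular, one has to be careful that $[\cB,h]\cap\cA$ may be nonzero since $\cA$ is only nilpotent rather than abelian, which is precisely why the Fitting decomposition (not a naive direct-sum count) is the right tool. Once that ingredient is in place, the rest is a routine closed-set argument.
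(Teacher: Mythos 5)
Your argument is correct and reaches the paper's conclusion by a somewhat different route. The paper's proof intersects two Zariski-dense (generic) subsets of $\cB$ --- the regular elements (\cref{reg}) and the elements of rank $\crk(\cB)$ --- to obtain a single $C\in\cB$ of maximal rank whose Fitting null component $\FitZero(\ad_C)$ is a Cartan subalgebra containing $C$; the conjugacy theorem (\cref{conj}), which for matrix Lie algebras is realized by conjugation by an invertible matrix and hence preserves rank, then transports $\crk(\FitZero(\ad_C))=\crk(\cB)$ to the given $\cA$. You instead show that the conjugates of $\cA$ sweep out a Zariski-dense subset of $\cB$ and play this off against the Zariski-closed locus $\{B\in\cB:\rank(B)\le\crk(\cA)\}$. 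If you get the density by citing \cref{reg} and \cref{conj} (every regular element lies in the Cartan subalgebra $\FitZero(\ad_x)$, which is conjugate to $\cA$ inside $\cB$), you are using exactly the paper's two ingredients, just packaged as a closed-set argument instead of one well-chosen element. Your orbit-map variant is the more interesting option: surjectivity of the differential $(X,Y)\mapsto [X,h]+Y$ at $(1,h)$ makes $(g,A)\mapsto gAg^{-1}$ dominant and yields the density without invoking the conjugacy theorem at all, and your remark that the Fitting decomposition (rather than a naive direct-sum count) handles $[\cB,h]\cap\cA\neq 0$ is exactly right.

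One step in that variant needs care: you need some $h\in\cA$ with $\FitZero(\ad_h)=\cA$, and calling it ``a regular element of $\cA$'' is slightly circular, since the fact that every Cartan subalgebra contains a regular element is itself usually deduced from conjugacy or from this very dominance argument. The clean fix is the generalized weight-space decomposition of $\cB$ under the nilpotent, self-normalizing subalgebra $\cA$: the nonzero weights are finitely many nonzero linear functionals on $\cA$, so over $\C$ you can choose $h\in\cA$ avoiding all their kernels; then $\ad_h$ is invertible on the sum of the nonzero weight spaces and nilpotent on $\cA$, which is precisely the decomposition $\cB=\cA\oplus\cB_1(h)$ your dimension count requires. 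With that substitution (or with the citation route through \cref{reg} and \cref{conj}), your proof is complete.
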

\begin{proof}
We shall utilise two results about Cartan subalgebras; for 
details see \cref{app:cartan}. 

First, let $\fkg$ be a Lie algebra over a 
large enough field. Then there 
exists a set of generic\footnote{This 
means that after identifying $\fkg$ with $\F^{\dim(\fkg)}$, these elements form 
a Zariski open set.} elements $R\subseteq \fkg$, 
such that for any $x\in R$, the Fitting null component of $\ad_x$, 
$\FitZero(\ad_x) 
=\{y\in \fkg : \exists m>0, \ad_x^m(y) = 
0\}$, is a Cartan subalgebra. For a precise statement, see \cref{reg}.

Second, let $\cB$ be a matrix Lie algebra over $\C$. Then for any two Cartan 
subalgebras $\cA$, $\cA'$ of $\cB$, they are conjugate, namely there exists $T\in 
\GL(n, \C)$ such that $T\cA 
T^{-1}=\cA'$.
For a precise statement, see \cref{conj}.

By the first result, in particular by the fact that elements in $R$ are generic, 
there exists a matrix $C\in \cB$ of rank $\crk(\cB)$, such 
that $\cC:=\FitZero(\ad_C)$ is a Cartan subalgebra. Noting that $C\in \cC$, 
$\crk(\cC)=\crk(\cB)$. By the second result, for any Cartan subalgebra $\cA$ of 
$\cB$, $\cA$ and $\cC$ are conjugate, which implies that 
$\crk(\cA)=\crk(\cC)=\crk(\cB)$. 
\end{proof}

\subsection{Upper-triangularisable matrix spaces.} Let $\cB\leq\M(n, \F)$. We say 
that $\cB$ is upper-triangularisable, if there exists $S, T\in \GL(n, \F)$, such 
that for any $B\in \cB$, $S\cB T$ is upper-triangular. Upper-triangularisable 
matrix spaces are of interest to us, because solvable matrix Lie algebras can be 
made simultaneously upper-triangular via conjugation by Lie's theorem 
(\cref{thm:Lie}).

If a matrix space is 
upper-triangularisable, then we can decide if $\cB$ is singular in a 
black-box fashion, as its singularity is completely determined by the 
diagonals of the resulting upper-triangular matrix space. The following lemma is 
well-known and we include a proof for 
completeness. 
\begin{lemma}\label{lem:pisigma}
Let $n, k\in \N$. Let $\F$ be a field such that $|\F| >  (k-1)n$. There exists a 
deterministic algorithm that outputs in time $\poly(n, k)$ a set 
$\cH \subseteq \F^k$, 
such that any non-zero $k$-variate degree-$n$ polynomial, which is a product of linear 
forms,  evaluates to a non-zero value on at least one 
point in $\cH$.
\end{lemma}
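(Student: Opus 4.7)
The plan is to exhibit an explicit hitting set via a Vandermonde (moment curve) construction. Specifically, fix any set $S \subseteq \F$ of $(k-1)n + 1$ distinct elements, which exists by the hypothesis $|\F| > (k-1)n$, and set
\[
\cH = \{(1, t, t^2, \ldots, t^{k-1}) : t \in S\} \subseteq \F^k.
\]
Clearly $|\cH| \le (k-1)n+1$, and the set can be enumerated in time $\poly(n,k)$ once a list of $(k-1)n+1$ elements of $\F$ is chosen (e.g., any injection from $\{0,1,\dots,(k-1)n\}$ into $\F$, which exists because $|\F|$ is large enough).

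The verification relies on reducing multivariate evaluation to a univariate degree count. First I would observe that for any non-zero linear form $\ell(x_1,\dots,x_k)=a_1 x_1 + \cdots + a_k x_k$, the univariate specialization
\[
\tilde{\ell}(t) \;:=\; \ell(1, t, t^2, \ldots, t^{k-1}) \;=\; \sum_{j=1}^{k} a_j t^{j-1}
\]
is a non-zero polynomial in $\F[t]$ of degree at most $k-1$ (non-zero precisely because the $a_j$ are not all zero). In particular $\tilde{\ell}$ has at most $k-1$ roots in $\F$.

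Now suppose $f(x_1,\ldots,x_k)$ is a non-zero $k$-variate polynomial of degree $n$ that factors as a product $\ell_1 \cdots \ell_r$ of (not necessarily distinct) linear forms with $r \le n$; since $f\ne 0$, each $\ell_i$ is a non-zero linear form. Then
\[
\tilde f(t) \;:=\; f(1, t, \ldots, t^{k-1}) \;=\; \prod_{i=1}^{r} \tilde{\ell}_i(t)
\]
is a product of non-zero univariate polynomials, hence itself non-zero, and of degree at most $r(k-1) \le n(k-1)$. Consequently $\tilde f$ has at most $n(k-1)$ roots in $\F$, so by the choice $|S| = (k-1)n + 1 > n(k-1)$, there exists $t^* \in S$ with $\tilde f(t^*) \ne 0$. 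The corresponding point $(1, t^*, \dots, (t^*)^{k-1}) \in \cH$ witnesses that $f$ does not vanish on $\cH$.

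There is essentially no substantial obstacle here: the argument is the standard Vandermonde/moment-curve trick for hitting products of linear forms, and the linear-form factorization hypothesis is exactly what makes a degree bound of $n(k-1)$ (rather than $n^k$) sufficient. The only small care point is the explicit constructibility of $S$, for which any deterministic enumeration of $(k-1)n+1$ distinct field elements (the natural embedding of $\{0,1,\dots,(k-1)n\}$ into $\F$, valid since $|\F|>(k-1)n$) suffices.
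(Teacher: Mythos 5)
Your proposal is correct and follows essentially the same route as the paper: both use the moment-curve hitting set $\{(1,\alpha,\dots,\alpha^{k-1}) : \alpha \in S\}$ with $|S|=(k-1)n+1$, and both rest on the fact that a non-zero linear form restricted to this curve is a non-zero univariate polynomial of degree at most $k-1$ (the paper phrases this via invertibility of the Vandermonde matrix, you via a direct degree count), so the $n$ linear factors can kill at most $(k-1)n$ points.
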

\begin{proof}
 Let $\ell_1, \dots, \ell_n$ be $n$ non-zero linear forms in $k$ variables. We can 
 also identify them as vectors in $\F^k$ by taking their coefficients.
 Fix  a  subset 
 $S \subseteq \F$ of size $(k-1)n +1$.  Let $\cH = \{(1, \alpha, 
 \cdots, \alpha^{k-1}) \;|\; \alpha \in S  \}$. 
This is clearly a set of size 
 $(k-1)n +1$.  
 
 We claim that any non-zero linear form $\ell_i$ vanishes on at most 
 $k-1$ points in $\cH$. This is because if it vanishes on $k$ points, we have 
 $A\ell_i = 0$ where $A$ is the Vandermonde matrix corresponding to those $k$ 
 points.  This is impossible because the Vandermonde matrix is invertible 
 and $\ell_i$ is non-zero. 
 
 It follows that there is at least one point in $\cH$ such that every $\ell_i$ has a non-zero evaluation
at this point. This concludes the proof. 

\end{proof}

\subsection{The algorithm.} Given the above preparation, we present the following 
algorithm for computing the commutative rank of a matrix Lie algebra.

\begin{description}
\item[Input:] $\cB=\langle B_1, \dots, B_m\rangle\leq \M(n, \C)$, such that $\cB$ 
is a matrix Lie algebra.
\item[Output:] ``Singular'' if $\cB$ is singular, and ``Non-singular'' otherwise.
\item[Algorithm:]
\begin{enumerate}
\item Use \cref{thm:cartan} to obtain $\cC=\langle C_1, \dots, C_k\rangle 
\leq\cB$, such that $\cC$ is a Cartan subalgebra of $\cB$. 
\item Use \cref{lem:pisigma} to obtain $H\subseteq\C^k$, $|H|=(k-1)n+1$.
\item For any $(\alpha_1, \dots, \alpha_k)\in H$, if $\sum_{i\in[k]}\alpha_iC_i$ 
is non-singular, return ``Non-singular''.
\item Return ``Singular''.
\end{enumerate}
\end{description}

The above algorithm clearly runs in polynomial time. The correctness of the above 
algorithm follows from 
Lemmas~\cref{lem:key,lem:pisigma}, as well as Lie's theorem on solvable 
Lie algebras (\cref{thm:Lie}).
This concludes the proof of \cref{thm:max_rank}.

\begin{remark}
We do not solve the maximum rank problem for matrix Lie algebras in general. While 
the 
maximum rank problem for matrix Lie algebras reduces to the maximum rank problem 
for upper-triangularisable matrix spaces through Cartan subalgebras, to compute 
the maximum rank for the latter deterministically seems difficult. This is because 
the maximum rank problem for upper-triangularisable matrix spaces
is as difficult as the general SDIT problem, an observation already in 
\cite{IKQS15}. 

There is one case where we do solve the maximum rank problem, that is, when the matrix 
Lie algebra over $\C$ is semisimple. In this case, Cartan subalgebras are 
diagonalizable \cite[Theorem in Chapter 6.4]{Hum12}. 
Therefore, in the above algorithm we can output the maximum rank over 
$\sum_{i\in[k]}\alpha_iC_i$ where $(\alpha_1, \dots, \alpha_k)\in H$ as the 
maximum rank of $\cB$. 
\end{remark}

\section{Linear kernel vectors of matrix Lie algebras}\label{sec:lin_ker}

The goal of this section is to study existence of linear kernel vectors for matrix 
spaces arising from representations of Lie algebras.

Let $\fkg$ be a Lie algebra and $(\rho,V)$ be a representation of $\fkg$, where 
$V\cong \F^n$. Let $\cB = \rho(\fkg)\leq \M(n, \F)$.

First, note that $\cB$ admits 
a common kernel vector\footnote{That is $v\in \F^n$ such that for any $B\in \cB$, 
$Bv=0$.} if and only if $(\rho,V)$ has a trivial subrepresentation. We view this as a 
degenerate case, so in the following we shall 
mainly consider representations without 
trivial subrepresentations.

By the basis-free definition of linear kernel vectors in 
\cref{subsec:result}, $\cB = \rho(\fkg)$ has a linear kernel vector if we 
have 
a linear map $\beta: \rho(\fkg) \to V$ such that  $\rho(x) \beta(\rho(x)) = 0$.  
For our purposes, it will be more convenient to work with a map from 
$\fkg$ itself to $V$. This leads us to define 
that for a representation $(\rho, V)$,  
the linear map
$\psi: \fkg \to V$ is a \emph{linear kernel vector} if
\begin{equation}
\label{eq:linker-def}
\rho(x)\psi(x)=0 
\end{equation}
for every $x\in \fkg$. We further assume that $\psi$ is not identically zero. 

\begin{remark}
The definition of linear kernel vectors above is a generalization
that allows for possibly more linear kernel vectors. 
This is because a linear kernel vector 
$\beta$ yields a generalized one by taking 
$\psi = \beta \circ \rho$.  
However, when $\rho$ is not injective,  
we can have many more generalized maps. For example,  
for a trivial representation $(0,V)$,  
$\beta$ has to be $0$ but any linear map 
from $\fkg$ to $V$ is a generalized linear kernel vector.
\end{remark}

Applying \cref{eq:linker-def} to $x$, $y$ and $x+y$ one obtains  for 
every $x,y\in \fkg$,
\begin{equation}
\label{eq:linker-cross}
\rho(x)\psi(y)+\rho(y)\psi(x)=0
\end{equation}

Since $[x,x] = 0$ for the adjoint representation, the identity map of $\fkg$ and 
its scalar multiples are generalized linear kernel vectors.

Assume that $\psi:\fkg\rightarrow V$ is a linear kernel vector for $(\rho,V)$. 
Let $(\rho',V')$ be another representation of $\fkg$. \
Then, if $\phi:V\rightarrow V'$ is a non-zero 
linear map such that $\phi\circ \rho=\rho'\circ \phi$ (that is, $\phi$ is a 
homomorphism between the two representations) then $\phi\circ \psi$ is a linear 
kernel vector for $(\rho',V')$. Indeed, 
$\rho'(x)\phi(\psi(x))=\phi(\rho(x)(\psi(x))=\phi(0)=0$.  

 Our aim is to show that for many of Lie algebras $\fkg$, unless the 
 representation $(\rho,V)$ includes a trivial subrepresentation, every linear 
 kernel vector $\psi:\fkg\rightarrow V$ can be obtained as the composition of the 
 adjoint representation and a homomorphism.

\begin{theorem}\label{thm:lin_main}
Let $\fkg$ be a semisimple Lie algebra $\fkg$ over 
an algebraically closed
 field $\F$ of 
characteristic not $2$ or $3$.  Assume that that the trivial representation is not 
a subrepresentation of the representation $(\rho,V)$ of $\fkg$.  Then any linear 
kernel vector $\psi$ defines a homomorphism  $\psi: (ad,  \fkg) \to (\rho,V)$ 
i.e.,   for every $x,y \in \fkg$,  
\begin{equation}
\label{eq:linker-hom}
\psi([x,y])-\rho(x)\psi(y)=0.
\end{equation}
\end{theorem}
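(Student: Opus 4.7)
The plan is to introduce a bilinear ``defect'' $\gamma$ whose vanishing is equivalent to the theorem, derive a cyclic identity for $\gamma$ from the Jacobi identity and the cross relation \eqref{eq:linker-cross}, and then exploit semisimplicity to force $\gamma \equiv 0$.

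First, define $\gamma(x,y) := \psi([x,y]) - \rho(x)\psi(y)$; the theorem is exactly the statement $\gamma \equiv 0$. Using \eqref{eq:linker-cross}, $\gamma$ is antisymmetric: $\gamma(y,x) = \psi([y,x]) - \rho(y)\psi(x) = -\psi([x,y]) + \rho(x)\psi(y) = -\gamma(x,y)$. I would then unpack $\rho(z)\gamma(x,y) = \rho(z)\psi([x,y]) - \rho(z)\rho(x)\psi(y)$ using only \eqref{eq:linker-cross} and $\rho([x,y]) = [\rho(x),\rho(y)]$; a short sequence of substitutions yields $\rho(z)\gamma(x,y) = \rho(y)\gamma(z,x)$, and hence by cyclic permutation
\[
\rho(x)\gamma(y,z) = \rho(y)\gamma(z,x) = \rho(z)\gamma(x,y).
\]
Combined with the antisymmetry of $\gamma$, this says that for each fixed $z$ the map $w \mapsto \gamma(w,z)$ is itself a linear kernel vector, and the trilinear map $(x,y,z) \mapsto \rho(x)\gamma(y,z)$ factors through $\Lambda^3 \fkg$.

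To finish, view the space $L$ of linear kernel vectors as a $\fkg$-submodule of $\hom_\F(\fkg,V)$ under $(w \cdot \phi)(x) = \rho(w)\phi(x) - \phi([w,x])$; a direct check shows that $L^{\fkg}$ coincides with $\hom_{\fkg}(\fkg, V)$, so the theorem is equivalent to $L = L^{\fkg}$. By complete reducibility (semisimple $\fkg$, $\mathrm{char}(\F) \notin \{2,3\}$) and the hypothesis that $V$ contains no trivial subrepresentation, one reduces immediately to the case that $V$ is irreducible and non-trivial, and then further to the case that $\fkg$ is simple. The main obstacle is to handle this final case, and I would do so by a weight-space analysis: $\rho(h)\psi(h) = 0$ for regular $h \in \fkh$ forces $\psi(\fkh) \subseteq V_{0}$, and applying \eqref{eq:linker-cross} to pairs $(h, e_{\alpha})$ pins $\psi(\fkg_{\alpha})$ inside $V_{0} \oplus V_{\alpha}$; then applying it to pairs $(e_{\alpha}, e_{\beta})$ with $\beta \neq \pm \alpha$ eliminates the remaining $V_{0}$-components and identifies $\psi$ up to a single scalar. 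The upshot is that either $\psi = 0$ (when $V \not\cong \fkg_{\mathrm{ad}}$) or $\psi$ is a scalar multiple of the isomorphism $\fkg \to V$ (when $V \cong \fkg_{\mathrm{ad}}$), and either way $\psi$ is a $\fkg$-homomorphism.
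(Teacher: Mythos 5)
Your preparatory steps are correct, and in fact pleasant: the antisymmetry of the defect $\gamma(x,y)=\psi([x,y])-\rho(x)\psi(y)$, the cyclic identity $\rho(x)\gamma(y,z)=\rho(y)\gamma(z,x)=\rho(z)\gamma(x,y)$ (which I checked and which is a clean repackaging of the paper's \cref{lem:linker-key}), and the observation that the space $L$ of linear kernel vectors is a $\fkg$-submodule of $\hom_\F(\fkg,V)$ with $L^\fkg=\hom_\fkg(\fkg,V)$, so that the theorem is equivalent to $L=L^\fkg$, all hold. The genuine gap is in the endgame. Both pillars you invoke there --- complete reducibility (to reduce to $V$ irreducible and $\fkg$ simple) and the weight-space decomposition of $V$ under a Cartan subalgebra --- are characteristic-zero facts. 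The theorem is stated over any algebraically closed field of characteristic $\neq 2,3$, hence in particular over characteristic $p\ge 5$, where Weyl's theorem fails and an arbitrary finite-dimensional representation of a semisimple Lie algebra need not decompose into weight spaces (the Cartan subalgebra need not act semisimply on a non-restricted module). So your argument, even with the weight computation fully carried out, establishes the characteristic-zero case only; it essentially reproduces the paper's alternative proof in \cref{app:old}, which the authors present over $\C$ for exactly this reason. A secondary quibble: the reduction to $\fkg$ simple is not ``immediate'', since the restriction of $V$ to a simple ideal may well contain (or consist entirely of) trivial subrepresentations, so the simple case does not apply verbatim; this is repairable via \eqref{eq:linker-cross}, but it must be argued.

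The paper closes the positive-characteristic case by a different mechanism, which you could graft onto your framing: by \cref{lem:linker-key}, $\gamma(x,y)$ is annihilated by $\rho(z)$ for every $z$ in the subalgebra generated by $x$ and $y$; semisimple Lie algebras over algebraically closed fields of characteristic $0$ or $>3$ are generated by two elements (\cref{thm:gen2}, via Kuranishi and Bois together with the density argument of \cref{lem:gen}); and a Zariski-density argument (\cref{lem:span}) produces generating pairs $(x_i,y_i)$ whose tensors $x_i\otimes y_i$ span $\fkg\otimes\fkg$, so by trilinearity $\rho(z)\gamma(x,y)=0$ for all $x,y,z$, and then $\gamma(x,y)=0$ because a nonzero vector annihilated by all of $\rho(\fkg)$ would span a trivial subrepresentation. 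To make your proposal prove the stated theorem you would need to replace complete reducibility and weights by an argument of this kind, or else weaken the statement to characteristic $0$.
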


We defer the proof of \cref{thm:lin_main} in \cref{subsec:lin_main}. 
We now derive a corollary of \cref{thm:lin_main} and use it to prove 
\cref{thm:lin_ker}.

\begin{corollary}\label{cor:lin_ker}
Let $\fkg, (\rho,V)$ satisfy the condition in \cref{thm:lin_main}.  
Then,  $(\rho,V) \cong \oplus_i (\ad,\fkg_i) \oplus (\rho',V')$ 
where $\fkg_i$ are 
not necessarily disjoint or distinct
quotient algebras of $\fkg$,  and $(\rho', V')$ has no linear kernel vectors.  
\end{corollary}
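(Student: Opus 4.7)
\textbf{Proof plan for Corollary~\ref{cor:lin_ker}.} The plan is to peel off every subrepresentation of $(\rho,V)$ that is isomorphic to an adjoint representation of a simple factor of $\fkg$, and then use Theorem~\ref{thm:lin_main} to show that the remaining complement cannot carry any non-zero linear kernel vector.

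First I would set up the decomposition. Since $\fkg$ is semisimple over an algebraically closed field of characteristic not $2$ or $3$, it splits as a direct sum of simple ideals $\fkg=\bigoplus_{i=1}^{r}\fkh_i$, and correspondingly $(\ad,\fkg)=\bigoplus_{i=1}^{r}(\ad,\fkh_i)$ as a $\fkg$-representation, with each summand irreducible (only $\fkh_i$ acts non-trivially on $\fkh_i$). Each $\fkh_i$ is itself a quotient algebra of $\fkg$, namely $\fkg/\bigoplus_{j\neq i}\fkh_j$, which will furnish the $\fkg_i$'s appearing in the statement. Weyl's theorem on complete reducibility then applies, so every finite-dimensional $\fkg$-representation (in particular $(\rho,V)$) decomposes as a direct sum of irreducible subrepresentations, and every invariant subspace has an invariant complement.

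Next I would define $V_{\ad}\leq V$ to be the sum of all irreducible subrepresentations of $(\rho,V)$ isomorphic to some $(\ad,\fkh_i)$. By complete reducibility, $V_{\ad}$ itself is a direct sum of copies of the $(\ad,\fkh_i)$'s, giving the first part of the claimed decomposition, and $V_{\ad}$ admits a $\fkg$-invariant complement $V'\leq V$ with $(\rho,V)\cong V_{\ad}\oplus(\rho',V')$. Note that $V'$ contains no trivial subrepresentation, since $V$ does not.

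It remains to show $(\rho',V')$ has no (non-zero) linear kernel vector. Suppose for contradiction that $\psi':\fkg\to V'$ is such a map. By Theorem~\ref{thm:lin_main} applied to $(\rho',V')$, the map $\psi'$ is a homomorphism of $\fkg$-representations from $(\ad,\fkg)$ to $(\rho',V')$. Its image $\psi'(\fkg)\leq V'$ is then a subrepresentation that is a quotient of $(\ad,\fkg)=\bigoplus_i(\ad,\fkh_i)$; since each $(\ad,\fkh_i)$ is irreducible, this image is isomorphic to $\bigoplus_{i\in I}(\ad,\fkh_i)$ for $I=\{i:\psi'|_{\fkh_i}\neq 0\}$. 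As $\psi'\neq 0$, the set $I$ is non-empty, so $V'$ contains a subrepresentation isomorphic to some $(\ad,\fkh_i)$, contradicting the defining maximality of $V_{\ad}$.

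The main obstacle, such as it is, lies in invoking Theorem~\ref{thm:lin_main} for the restricted representation $(\rho',V')$ and verifying cleanly that the image of a representation homomorphism from $(\ad,\fkg)$ is always a sum of the simple pieces $(\ad,\fkh_i)$; both points rest on the simplicity of the $\fkh_i$ together with Weyl complete reducibility, and present no substantive difficulty beyond bookkeeping.
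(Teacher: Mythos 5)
Your proof is correct. It rests on the same two ingredients as the paper's --- Theorem~\ref{thm:lin_main}, which upgrades a linear kernel vector to a $\fkg$-module homomorphism from $(\ad,\fkg)$, and Weyl's complete reducibility --- but you organize the argument differently. The paper proceeds by iterated peeling: take an arbitrary linear kernel vector $\psi$ of $(\rho,V)$, observe that $\im(\psi)\cong\fkg/\ker\psi$ is a subrepresentation, split it off as a direct summand by semisimplicity, and repeat on the complement until no linear kernel vector remains; the quotients $\fkg_i$ it produces may therefore be sums of several simple ideals, and termination (dimension decreases each round) is left implicit. You instead decompose $\fkg=\bigoplus_i\fkh_i$ into simple ideals up front, note that each $(\ad,\fkh_i)$ is an irreducible $\fkg$-module, take $V_{\ad}$ to be the isotypic component of $V$ spanned by all subrepresentations isomorphic to some $(\ad,\fkh_i)$, choose an invariant complement $V'$, and argue that any linear kernel vector on $V'$ would --- again via Theorem~\ref{thm:lin_main} --- be a module homomorphism from $(\ad,\fkg)$ whose image contains a copy of some $(\ad,\fkh_i)$ inside $V'$, contradicting maximality of $V_{\ad}$. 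Your route yields a one-shot decomposition with each $\fkg_i$ simple and avoids the induction on dimension; the paper's yields summands attached to explicitly chosen linear kernel vectors. Both are sound and use the hypotheses of Theorem~\ref{thm:lin_main} in the same way.
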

\begin{proof}
Let $\psi$ be a linear kernel vector $(\rho, V)$. 
Let $V_1 = \im(\psi) \cong \fkg/\ker{\psi} =: \fkg_1$. Then $V_1$ is invariant 
under $\rho$ as for any $x\in \fkg,  \psi(y) \in V_1$,  $\rho(x)\psi(y) = 
\psi([x,y]) \in V_1$.  Therefore,   $(\rho,V) \cong (\ad, \fkg_1) \oplus (\rho', 
V' )$ by the semisimplicity of $\fkg$. We can then repeat till we 
no longer have linear kernel vectors.
\end{proof}

\LinKer*
\begin{proof}
When $\cB$ is trivial it clearly has a linear vector kernel.  So assume it is 
not.  Applying \cref{cor:lin_ker} in the case of $\rho$ being 
irreducible,  we get 
$(\rho,V) \cong (\ad, \fkg')$ for some quotient algebra $\fkg'$ of $\fkg$. Since 
$\rho$ is faithful,  we must have $\fkg = 
\fkg_i$. By definition,  subrepresentations of the adjoint representation are the 
same as ideals.  Thus irreducibility of the adjoint representation implies that 
$\fkg$ is simple.
\end{proof}

\begin{remark}
\begin{enumerate}
\item 
\cref{thm:lin_main} and 
\cref{thm:lin_ker} also 
hold over sufficiently large perfect fields as a semisimple Lie
algebra over a perfect field remains semisimple over
the algebraic closure of such fields. However, passing over to the closure need 
not preserve semisimplicity in general and thus, the current proof of 
\cref{thm:gen2} does not work for any sufficiently large field.

\item Initially we proved a fact equivalent to
(\cref{eq:linker-hom}) for irreducible representations of
classical Lie algebras using Chevalley bases. In an attempt to simplify
the proof by reducing to certain subalgebras and 
taking trivial subrepresentations into account, we discovered
relevance of equalities (\ref{eq:linker-key-x}) and
(\ref{eq:linker-key-y}) below. We include the previous proof in \cref{app:old}, 
as some ideas and techniques there may be useful for future references. 
\end{enumerate}
\end{remark}

\subsection{Proof of \cref{thm:lin_main}}\label{subsec:lin_main}

To prove \cref{thm:lin_main}, we need the following preparations.
 
\begin{proposition}
\label{lem:linker-key}
Let $\psi:\fkg\rightarrow V$ be a linear kernel vector for 
the representation $(\rho,V)$ of a Lie algebra $\fkg$. Then,
\begin{equation}
\label{eq:linker-key-x}
\rho(x)\big(\psi([x,y])-\rho(x)\psi(y)\big)=0
\end{equation}
and
\begin{equation}
\label{eq:linker-key-y}
\rho(y)\big(\psi([x,y])-\rho(x)\psi(y)\big)=0
\end{equation}
Consequently, the difference
$\psi([x,y])-\rho(x)\psi(y)$ is annihilated
by $\rho(z)$ for every element $z$ of the Lie
subalgebra of $\fkg$ generated by $x$ and $y$.
\end{proposition}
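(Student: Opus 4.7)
The plan is to prove the two identities purely from the linearity of $\psi$ (which gives the cross-identity \eqref{eq:linker-cross}) and the fact that $\rho$ is a Lie algebra homomorphism, i.e.\ $\rho([u,v])=\rho(u)\rho(v)-\rho(v)\rho(u)$. The only other inputs I will need are the defining identity $\rho(x)\psi(x)=0$ and its symmetric counterpart $\rho(y)\psi(y)=0$.

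For \eqref{eq:linker-key-x}, I first apply the cross-identity \eqref{eq:linker-cross} to the pair $(x,[x,y])$ to rewrite
$\rho(x)\psi([x,y]) = -\rho([x,y])\psi(x)$.
Expanding $\rho([x,y]) = \rho(x)\rho(y)-\rho(y)\rho(x)$ and using $\rho(x)\psi(x)=0$ to kill the term $\rho(y)\rho(x)\psi(x)$, this collapses to $\rho(x)\psi([x,y]) = -\rho(x)\rho(y)\psi(x)$. A second application of the cross-identity \eqref{eq:linker-cross} inside the right-hand side, via $\rho(y)\psi(x) = -\rho(x)\psi(y)$, yields $\rho(x)\psi([x,y]) = \rho(x)^2 \psi(y)$, which is exactly \eqref{eq:linker-key-x}. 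For \eqref{eq:linker-key-y} the argument is symmetric: apply \eqref{eq:linker-cross} to the pair $(y,[x,y])$ to get $\rho(y)\psi([x,y]) = -\rho([x,y])\psi(y)$, expand the bracket, and use $\rho(y)\psi(y)=0$ to leave $\rho(y)\psi([x,y])=\rho(y)\rho(x)\psi(y)$, which is \eqref{eq:linker-key-y}.

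For the final ``consequently'' clause, I plan to set $w:=\psi([x,y])-\rho(x)\psi(y)$ and consider the subset
\[
\mathcal{Z} := \{\, z \in \fkg : \rho(z)w = 0\,\}.
\]
This is clearly a linear subspace of $\fkg$, and it is closed under the Lie bracket of $\fkg$: if $u,v\in \mathcal{Z}$, then $\rho([u,v])w = \rho(u)\rho(v)w - \rho(v)\rho(u)w = 0$. Hence $\mathcal{Z}$ is a Lie subalgebra of $\fkg$, and by \eqref{eq:linker-key-x} and \eqref{eq:linker-key-y} it contains both $x$ and $y$, so it contains the Lie subalgebra generated by $x$ and $y$.

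I do not foresee a real obstacle here; the whole proposition is a short algebraic manipulation and the only thing to be careful about is bookkeeping when choosing which cross-identity to apply and in which order, so that the terms involving $\psi(x)$ or $\psi(y)$ appear exactly where $\rho(x)\psi(x)=0$ or $\rho(y)\psi(y)=0$ can be invoked to cancel them.
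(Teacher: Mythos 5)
Your proof is correct and follows essentially the same route as the paper: both apply the cross-identity \eqref{eq:linker-cross} to the pairs $([x,y],x)$ and $([x,y],y)$, expand $\rho([x,y])=\rho(x)\rho(y)-\rho(y)\rho(x)$, and cancel via $\rho(x)\psi(x)=0$ (resp.\ $\rho(y)\psi(y)=0$) before a second use of \eqref{eq:linker-cross}. Your explicit verification of the ``consequently'' clause, by noting that $\{z\in\fkg:\rho(z)w=0\}$ is a Lie subalgebra containing $x$ and $y$, is the standard argument the paper leaves implicit.
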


\begin{proof}
\begin{align*}
\rho(x)\psi([x,y])&=-\rho([x,y])\psi(x) & \text{Applying (\ref{eq:linker-cross}) to}\; [x,y], x \\
&=-\rho(x)\rho(y)\psi(x)+\rho(y)\rho(x)\psi(x)& (\rho,V) \text{ is a representation}  \\
&= -\rho(x)\rho(y)\psi(x)& \text{Applying (\ref{eq:linker-def}) to} \; x\\
&=\rho(x)\rho(x)\psi(y) &\text{Applying (\ref{eq:linker-cross}) to} \; x,y
\end{align*}

Similarly (\cref{eq:linker-key-y}) follows because 
\begin{eqnarray*}
\rho(y)\psi([x,y])=-\rho([x,y])\psi(y)=
-\rho(x)\rho(y)\psi(y)+\rho(y)\rho(x)\psi(y)=
\rho(y)\rho(x)\psi(y).\qedhere
\end{eqnarray*}
\end{proof}

The following 
statement %
follows from standard density arguments but we provide a proof in \cref{app:density}.

\begin{proposition}[\cref{lem:span2}]
\label{lem:span}
Let $\fkg$ be an $m$-dimensional Lie algebra over
a large enough field %
$\F$, such that there exist two
elements $x_0,y_0\in\F$ that
generate $\fkg$. Then there are elements $x_i,y_i\in \fkg$,
($i=1,\ldots,m^2$) such that $x_i$ and $y_i$ generate
$\fkg$ for every $i$ and that $x_i\otimes y_i$ span $\fkg\otimes \fkg$.
\end{proposition}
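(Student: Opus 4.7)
The plan is a standard Zariski-density argument in two steps. First, I would show that the set $U \subseteq \fkg \oplus \fkg$ of pairs generating $\fkg$ as a Lie algebra is nonempty and Zariski open in $\fkg \oplus \fkg \cong \F^{2m}$. For openness, enumerate the finite list of iterated Lie words $w_1, w_2, \ldots$ in two formal variables of length at most $m$; a pair $(x, y)$ generates $\fkg$ if and only if the evaluations $\{w_j(x, y)\}$ span $\fkg$, which is the nonvanishing of some $m \times m$ minor of the matrix of these evaluations in a fixed basis of $\fkg$. Nonemptiness follows from the hypothesis that $(x_0, y_0)$ generates $\fkg$.

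Second, I would show that there exists a tuple $((x_1, y_1), \ldots, (x_{m^2}, y_{m^2})) \in U^{m^2}$ for which the tensors $x_i \otimes y_i$ are linearly independent in $\fkg \otimes \fkg$, and hence a basis. Expand the $x_i \otimes y_i$ in a fixed basis of $\fkg \otimes \fkg$ to form an $m^2 \times m^2$ matrix $M$ whose entries are bilinear in the coordinates of the pairs; linear independence amounts to the single polynomial condition $\det M \neq 0$. The polynomial $\det M$ is not identically zero on $(\fkg \oplus \fkg)^{m^2}$: taking the $m^2$ pairs $(x_i, y_i)$ to range over all $(e_a, e_b)$ from a fixed basis $e_1, \ldots, e_m$ of $\fkg$ yields, after reordering, the identity matrix. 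Thus $\{\det M \neq 0\}$ and $U^{m^2}$ are both nonempty Zariski open subsets of the irreducible affine space $(\fkg \oplus \fkg)^{m^2}$, and their intersection is therefore a nonempty Zariski open subset of the same space.

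The quantitative input ``large enough $\F$'' is what guarantees an $\F$-rational point in this intersection. The defining inequations have total degree bounded by a polynomial in $m$ (controlled by the degrees of the bracket-word minors cutting out $U$ together with the degree $2m^2$ of $\det M$), so by the Schwartz--Zippel lemma a field of size polynomial in $m$ suffices. I anticipate no genuine obstacle beyond bookkeeping; the only delicate step is verifying the non-identical-vanishing of $\det M$, which the basis-substitution witness above handles directly.
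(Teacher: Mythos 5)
Your proof is correct, but it follows a genuinely different route from the paper's. You run one global genericity argument on $(\fkg\oplus\fkg)^{m^2}$: the generating locus $U$ is nonempty and Zariski open (Lie words of length at most $m$ suffice, as you observe), independence of the $m^2$ tensors is the single condition $\det M\neq 0$, which you certify is not identically zero via the basis pairs $(e_a,e_b)$, and an intersection-of-opens/Schwartz--Zippel count produces an $\F$-point when $|\F|$ is polynomially large in $m$. The paper instead argues in two structured stages with its notion of ``huge'' subsets: it fixes Lie expressions $E_1,\ldots,E_m$ with $E_i(x_0,y_0)$ linearly independent, uses hugeness to pick a basis $x_1,\ldots,x_m$ of $\fkg$ such that each pair $(x_j,y_0)$ generates, and then, for each $j$, a basis $y_{j1},\ldots,y_{jm}$ such that each $(x_j,y_{jk})$ generates; because the resulting $m^2$ pairs form a grid of a fixed basis against bases depending on $j$, the spanning of $\fkg\otimes\fkg$ is immediate (write $z=\sum_j x_j\otimes y_j'$ and expand each $y_j'$ in the $j$-th basis), so no determinant genericity is needed at all. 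Your approach buys a quantitative field-size bound and a single clean open condition; its only bookkeeping point worth making explicit is that, since $U$ is a union of basic opens (one per minor), for the Schwartz--Zippel step you should fix one minor that is nonvanishing at $(x_0,y_0)$ and multiply that polynomial (over the $m^2$ coordinate blocks) with $\det M$. The paper's approach avoids any determinant argument through its grid construction, but as written it relies on the ``huge set'' formalism for infinite fields and gives no explicit bound on how large $\F$ must be.
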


The following lemma is a major step to prove \cref{thm:lin_main}.

\begin{lemma}\label{thm:linear_kernel}
Let $\psi:\fkg\rightarrow V$ be a linear kernel vector for 
the representation $(\rho,V)$ of a Lie algebra $\fkg$ over a large 
enough %
field 
$\F$.
Suppose $\fkg$ can be generated by two elements, and 
the trivial
representation is not a subrepresentation of 
the representation $(\rho,V)$ of $\fkg$. Then for every $x,y\in \fkg$,
\begin{equation*}
\psi([x,y])-\rho(x)\psi(y)=0.
\end{equation*}
\end{lemma}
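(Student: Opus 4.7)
The plan is to repackage the desired identity as a bilinear map and exploit the two-generator hypothesis to reduce to cases where \cref{lem:linker-key} is maximally effective. Define
\[
\beta: \fkg \times \fkg \to V, \qquad \beta(x,y) = \psi([x,y]) - \rho(x)\psi(y).
\]
This is bilinear (since $\psi$ and $\rho$ are linear and $[\cdot,\cdot]$ is bilinear), so it extends uniquely to a linear map $\tilde\beta : \fkg \otimes \fkg \to V$. The goal is to show $\tilde\beta = 0$, and by bilinearity it suffices to check vanishing on a spanning set of $\fkg \otimes \fkg$.

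First I would pin down what \cref{lem:linker-key} buys us: for any pair $(x,y)$, the vector $\beta(x,y)$ lies in the common kernel $\bigcap_{z \in \fkh_{x,y}} \ker \rho(z)$, where $\fkh_{x,y}$ denotes the Lie subalgebra of $\fkg$ generated by $x$ and $y$. In particular, if $(x,y)$ happens to generate all of $\fkg$, then $\beta(x,y)$ lies in the subspace $V^{\fkg} := \{v \in V : \rho(z)v = 0 \text{ for every } z \in \fkg\}$. This subspace is $\rho$-invariant and $\fkg$ acts on it by zero, so any nonzero element of $V^{\fkg}$ spans a trivial subrepresentation. The hypothesis that no trivial subrepresentation occurs in $(\rho,V)$ therefore forces $V^{\fkg} = 0$, and hence $\beta(x,y) = 0$ for every generating pair $(x,y)$ of $\fkg$.

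Next I would invoke the two-generator hypothesis together with \cref{lem:span}: since $\F$ is large enough and $\fkg$ admits some generating pair $(x_0,y_0)$, there exist pairs $(x_i,y_i)$, $i=1,\dots,m^2$, each of which generates $\fkg$, such that the elementary tensors $x_i \otimes y_i$ span $\fkg \otimes \fkg$. By the previous paragraph, $\tilde\beta(x_i \otimes y_i) = \beta(x_i,y_i) = 0$ for every $i$. Since $\tilde\beta$ is linear and vanishes on a spanning set, $\tilde\beta \equiv 0$, which is exactly the conclusion $\psi([x,y]) = \rho(x)\psi(y)$ for all $x,y \in \fkg$.

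The technically delicate step is the identification of $\bigcap_{z\in\fkg}\ker\rho(z)$ with the maximal trivial subrepresentation of $V$, together with the appeal to \cref{lem:span}; the former is elementary but relies on correctly reading "no trivial subrepresentation" as $V^{\fkg}=0$, and the latter silently uses that the generating condition is Zariski-open in $\fkg \times \fkg$ (hence spanning pairs abound over a large enough field). Everything else is formal manipulation of \cref{eq:linker-key-x} and \cref{eq:linker-key-y}, so I expect no further obstacles once these two points are addressed.
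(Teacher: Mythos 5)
Your proposal is correct and takes essentially the same route as the paper: combine \cref{lem:linker-key} with the spanning generating pairs from \cref{lem:span} and then use the absence of trivial subrepresentations to kill the vector $\psi([x,y])-\rho(x)\psi(y)$. The only (immaterial) differences are that you conclude $\beta(x_i,y_i)=0$ pair-by-pair before invoking bilinearity, whereas the paper first uses trilinearity of $\rho(z)\bigl(\psi([x,y])-\rho(x)\psi(y)\bigr)$ and applies the no-trivial-subrepresentation hypothesis once at the end, and that the paper additionally notes a standard reduction to the algebraically closed case before citing \cref{lem:span}.
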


\begin{proof}
By standard arguments, it is sufficient
 to prove the theorem for the special case when
$\F$ is algebraically closed. We assume that.
By \cref{lem:span}, we have $\{(x_i,y_i)\}$ that each generate $\fkg$ 
as a 
Lie algebra and collectively linearly span $\fkg\otimes\fkg$.  For each $i $,   by 
\cref{lem:linker-key},
$\rho(z)\big(\psi([x_i,y_i])-\rho(x_i)\psi(y_i)\big)=0$ for every $z \in \fkg$.
This equality is trilinear and therefore it holds for every $z\otimes x\otimes y$ 
for
$(z,x,y)\in \fkg\times ( \mathrm{span}_i \{(x_i\otimes y_i)\})$. 
By \cref{lem:span},  $\mathrm{span}_i \{(x_i\otimes y_i)\} = 
\fkg\otimes\fkg$ 
and thus, $\rho(z)\big(\psi([x,y])-\rho(x)\psi(y)\big)$ is identically zero
on $\fkg^{\otimes 3}$.  Now for every fixed $(x,y)$ the vector 
$\psi([x,y])-\rho(x)\psi(y)$ 
is annihilated by all of $\rho(\fkg)$. It follows that the vector must be zero, as 
otherwise it would span a trivial subrepresentation. 
\end{proof}

To 
deduce \cref{thm:lin_main} from 
\cref{thm:linear_kernel}, we need a result for the number of generators of 
certain Lie algebras. 
We recall some classical results on this topic. First, Kuranishi 
gave a simple proof that over characteristic 0, there 
exists two elements that generate any semisimple Lie algebra \cite[Thm. 
6]{Kuranishi51}. The proof of the 
statement works directly in positive characteristic ($> 3$) for sum of 
\emph{classical simple} lie algebras, i.e.  those obtained from a Chevalley basis. 
This was extended in  \cite{Bois09} to other kinds of simple Lie algebras over 
positive characteristic ($> 3$). This can be extended to the semisimple case based 
on a density argument which is standard. However, we couldn't find a reference for this, so we include a proof in 
\cref{app:density} (see \cref{lem:gen}). 

\begin{theorem}[\cite{Kuranishi51, Bois09}+ \cref{lem:gen}]\label{thm:gen2}
Let $\fkg$ be a semisimple Lie algebra $\fkg$ over 
an algebraically closed %
field $\F$ of 
characteristic not $2$ or $3$.  Then $\fkg$ can be generated by two elements.
\end{theorem}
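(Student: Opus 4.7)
The plan is to reduce the semisimple case to the simple case---which is settled by Kuranishi and Bois---via a Zariski density argument. First, I would decompose $\fkg = \fkg_1 \oplus \cdots \oplus \fkg_k$ as a direct sum of simple ideals, using the structure theory of semisimple Lie algebras over an algebraically closed field of characteristic not $2$ or $3$. By the cited results, each simple summand $\fkg_i$ admits a generating pair $(a_i, b_i) \in \fkg_i \times \fkg_i$.

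Next I would observe that the set
\[
U = \{(x, y) \in \fkg \times \fkg : \langle x, y \rangle_{\mathrm{Lie}} = \fkg\}
\]
is Zariski open in $\fkg \times \fkg$, since its complement is cut out by the vanishing of polynomials in the coordinates of $(x, y)$ (namely the minors expressing that all iterated brackets of length at most $\dim \fkg$ fail to span $\fkg$). It therefore suffices to exhibit a single $\F$-point of $U$.

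To produce such a point, I would consider the family
\[
x(\lambda) = \sum_{i=1}^{k} \lambda_i a_i, \qquad y(\mu) = \sum_{i=1}^{k} \mu_i b_i,
\]
parameterized by $(\lambda, \mu) \in (\F^{\times})^{2k}$. The Lie subalgebra $\fkh(\lambda, \mu)$ generated by $(x(\lambda), y(\mu))$ surjects onto each $\fkg_i$, since the scalar multiples $\lambda_i a_i$ and $\mu_i b_i$ still generate $\fkg_i$. Thus $\fkh(\lambda, \mu)$ is a subdirect product of the $\fkg_i$. By a Goursat-type lemma for simple Lie algebras (any subalgebra of $\fkg_i \oplus \fkg_j$ projecting onto both factors is either the whole thing or the graph of an isomorphism $\fkg_i \cong \fkg_j$), the only obstruction to $\fkh(\lambda, \mu) = \fkg$ is the existence of some pair $i \neq j$ and some isomorphism $\phi : \fkg_i \to \fkg_j$ such that the image of $\fkh(\lambda, \mu)$ in $\fkg_i \oplus \fkg_j$ is the graph of $\phi$.

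The hard part will be ruling out this obstruction generically. If such a $\phi$ existed, it would force $\phi(\lambda_i a_i) = \lambda_j a_j$ and $\phi(\mu_i b_i) = \mu_j b_j$; parameterizing $\phi$ by $\aut(\fkg_i)$ (an algebraic group) and eliminating $\phi$ yields a proper Zariski closed condition on $(\lambda, \mu)$ for each pair $(i, j)$ with $\fkg_i \cong \fkg_j$. The union of these conditions over the finitely many such pairs remains a proper closed subset of $(\F^{\times})^{2k}$; since $\F$ is infinite, its complement is non-empty and any point in it yields a pair $(x(\lambda), y(\mu))$ in $U$, completing the proof.
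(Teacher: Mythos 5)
Your approach is genuinely different from the paper's. After reducing to the simple case via \cite{Kuranishi51,Bois09}, the paper (\cref{lem:gen}) modifies each generating pair $(x_i,y_i)$ so that $\ad(x_i)^{d_i}y_i\neq 0$, then scales the $x_i$ to make the nonzero roots of the minimal polynomials $f_i(t)$ of $\ad(x_i)$ acting on $y_i$ pairwise disjoint across factors; the products $h_i=\prod_{j\neq i}f_j$ evaluated at $\ad(x)$ then extract a nonzero element of each simple ideal from the single pair $(x,y)=(\sum x_i,\sum y_i)$. Your route instead uses a Goursat-type analysis of subdirect products of simple ideals, identifying the only obstruction to generation as the image landing in the graph of an isomorphism $\fkg_i\cong\fkg_j$ for some pair, which you then want to avoid by a genericity argument on $(\lambda,\mu)$.

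The Goursat reduction itself is sound: a subalgebra of $\prod\fkg_i$ that surjects onto every factor, if proper, must project into the graph of an isomorphism for some pair $i\neq j$, by an induction on the number of factors using that the kernel of each coordinate projection is either $0$ or the full ideal. But there is a genuine gap at the step you yourself flag as ``the hard part'': you assert that ``eliminating $\phi$ yields a \emph{proper} Zariski closed condition on $(\lambda,\mu)$'' without proving properness. This is exactly where the work is. Concretely, the condition is equivalent (fixing an isomorphism $\phi_0:\fkg_i\to\fkg_j$) to the existence of $\sigma\in\aut(\fkg_j)$ with $\sigma(\phi_0(a_i))=ca_j$ and $\sigma(\phi_0(b_i))=db_j$ for $c=\lambda_j/\lambda_i,\ d=\mu_j/\mu_i$. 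If this held for \emph{all} $(c,d)$, one would get a two-parameter family of automorphisms rescaling the two generators independently, which (expanding in a basis of Lie monomials in the generators) is equivalent to a $\Z^2$-grading of $\fkg_j$ with the generating pair placed in degrees $(1,0)$ and $(0,1)$; such a grading forces all of $\fkg_j$ into strictly positive total degree, making $\fkg_j$ nilpotent and contradicting simplicity. Alternatively one can extract a polynomial constraint from invariance of the Killing form, e.g. $cd\,\kappa(a_j,b_j)=\kappa(\phi_0(a_i),\phi_0(b_i))$ when $\kappa(a_j,b_j)\neq 0$, though some care is needed when all pairings among the chosen generators vanish. Without one of these arguments the properness claim is unjustified, so as written the proof is incomplete. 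The paper's minimal-polynomial construction avoids this issue entirely by exhibiting an explicit generating pair rather than arguing non-constructively about the automorphism group; that is what the elementary density machinery of \cref{app:density} buys.
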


\cref{thm:lin_main} follows from \cref{thm:linear_kernel} and \cref{thm:gen2}. \qed

\paragraph{Acknowledgement.} {T. M. would like to thank Pallav Goyal for helping him 
understand Lie algebras. The authors would like to thank Visu Makam for 
communicating \cite{DM21} to them, and the anonymous reviewers for their detailed 
and helpful feedback.}

\appendix

\section{Basic notions for Lie algebras and its representations}\label{app:basic}

A \emph{Lie algebra} is vector space $\fkg$ with an alternating bilinear map, 
called a 
Lie bracket, 
$[-,-]:\fkg\times \fkg\to\fkg$ that satisfies the Jacobi identity 
$\forall x, y, z\in \fkg, 
[x,[y,z]]+[z,[x,y]]+[y,[z,x]]=0.
$
A subalgebra of a Lie algebra $\fkg$ is a vector subspace which is closed under 
the Lie bracket. 

Given two Lie algebras $\fkg$ and $\fkh$, a \emph{Lie algebra homomorphism} is a 
linear 
map respecting the Lie bracket, i.e., a linear map $\phi:\fkg\to\fkh$ such that 
$\phi([a, b])=[\phi(a), \phi(b)]$.

Given a vector space $V$, we use $\fkgl(V)$ to denote the Lie algebra, which 
consists of linear endomorphisms of $V$ with the Lie bracket $[A, B]=AB-BA$ for 
$A, B\in \fkgl(V)$. 

A \emph{representation of a Lie algebra} $\fkg$ is a Lie algebra homomorphism 
$\phi: \fkg\to \fkgl(V)$ for some vector space $V$. A subspace $U\leq V$ is 
\emph{invariant} under $\phi$, if for any $a\in \fkg$, $\phi(a)(U)=U$. We say that 
$\phi$ is \emph{irreducible}, if the only invariant subspaces under 
$\phi$ are the zero space and the full space. We say that $\phi$ is 
\emph{completely reducible}, if there exists a proper direct sum decomposition 
of 
$V=V_1\oplus \dots \oplus V_c$, such that each $V_i$ is invariant under $\phi$. 

A representation $\phi:\fkg\to\fkgl(V)$ is 
trivial, if $\phi(x)=0\in \fkgl(V)$ for any $x\in \fkg$. In this case, when $V$ is 
of dimension $1$, $\phi$ is the \emph{trivial irreducible representation}. The 
adjoint 
representation of $\fkg$, $\ad:\fkg \to \fkgl(\fkg)$ 
is defined as $\ad_x(y) = [x,y]$ for $x, y\in \fkg$.  

Suppose $V$ is of dimension 
$n$ over a 
field $\F$. After fixing a basis of $V$, $\fkgl(V)$ can be identified as $\M(n, 
\F)$. Then the image of a Lie algebra representation $\phi$ is a matrix subspace 
$\cB\leq 
\M(n, \F)$ that is closed under the natural Lie bracket $[A, B]=AB-BA$ for $A, 
B\in \cB$.

\section{Correspondences between Lie algebras and Lie 
groups}\label{app:corresp}

Lie algebras are closely related to Lie groups. In the case of finite dimensional 
complex and real Lie algebras, there is a tight correspondence. Since matrix 
Lie algebras are the main object of study in this article, we only need results 
for matrix Lie algebras 
and matrix Lie groups, and not the most general definitions. In the following, we 
present some basic facts about the correspondence between Lie algebras and Lie 
groups 
in the matrix setting.

We follow \cite{hall} for the definitions and some basic results about matrix Lie 
groups and Lie algebras over $\C$ that we will use later.

A \emph{matrix Lie group} is a subgroup $G$ of 
$\GL(n, \C)$  with the property that if $(A_m)_{m\in\N}$ is any sequence of 
matrices in $G$, and $A_m$ converges to some matrix $A$, then either $A$ is in $G$ 
or $A$ is non-invertible.

For $X \in \M(n, \C)$, define the exponential by the usual power series, that is,  
$e^X = \sum_{i=0}^\infty \frac{X^i}{i!}$. By \cite[Proposition 2.1]{hall}, this 
power 
series converges absolutely for any $X\in\M(n, \C)$, and $e^X$ is a 
continuous function of $X$.
A straightforward consequence of the absolute convergence is that we can 
differentiate term by term, 
which implies that $\frac{d}{dt}e^{tX} = Xe^{tX} = e^{tX}X$.

Given a matrix Lie group $G$, the associated Lie algebra $\Lie(G)$ is defined as 
$\Lie(G) = \{X\in \M(n,\C)\;|\; \forall t \in \R, e^{tX} \in G\}$. Let 
$\fkg$ denote 
$\Lie(G)$; this notation is consistent with our previous notation. Clearly, for 
any 
$M\in \fkg$, the one-parameter group $\{e^{tM} \;|\; t\in \R\}$
is a 
subgroup of 
$G$.

We need the following two classical results relating matrix Lie groups and matrix 
Lie algebras in \cref{sec:shrunk}.

\begin{theorem}[{\cite[Theorem 5.20]{hall}}]\label{thm:subalgebra}
 Let $G$ be a matrix Lie group with Lie algebra $\fkg$ and let $\fkh$ be a Lie 
 subalgebra of $\fkg$. Then there exists a unique connected Lie subgroup $H$ of 
 $G$ with Lie algebra $\fkh$. In particular, every matrix Lie algebra $\fkg$ is 
 the Lie algebra of a Lie group. 
\end{theorem}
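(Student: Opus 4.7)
The plan is to construct $H$ explicitly as the subgroup of $G$ generated by the one-parameter subgroups $\{e^{tX} : t \in \R,\, X \in \fkh\}$, and to verify both that this is a connected Lie subgroup with $\Lie(H) = \fkh$ and that it is the unique such subgroup. The main analytic tool will be the Baker--Campbell--Hausdorff (BCH) formula, which expresses $e^X e^Y = e^{Z(X,Y)}$ for $X,Y$ small in norm, with $Z(X,Y) = X + Y + \frac{1}{2}[X,Y] + \cdots$ given by an absolutely convergent series whose terms are iterated Lie brackets of $X$ and $Y$.

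First I would set up the local picture. Since $\fkh$ is a Lie subalgebra, $Z(X,Y) \in \fkh$ whenever $X,Y \in \fkh$ lie in a sufficiently small neighborhood $U$ of $0$ where BCH converges. The exponential map is a local diffeomorphism near $0$, so $\exp(U)$ is a subset of $G$ on which the group operation of $G$ corresponds, via the chart $\exp : U \to \exp(U)$, to the BCH operation on $\fkh$. This endows $\exp(U)$ with the structure of a local Lie group modeled on the vector space $\fkh$, sitting inside $G$. I would then globalize by defining $H$ to be the subgroup of $G$ generated by $\exp(\fkh)$, i.e., the set of all finite products $e^{X_1}\cdots e^{X_k}$ with $X_i \in \fkh$, and use left translation to spread the local chart from the identity to every point of $H$. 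This gives $H$ the structure of a connected Lie group whose inclusion into $G$ is an immersion and whose tangent space at the identity is $\fkh$, so $\Lie(H) = \fkh$.

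For uniqueness, if $H'$ is any connected Lie subgroup of $G$ with $\Lie(H') = \fkh$, then by the very definition $e^{tX} \in H'$ for every $X \in \fkh$ and $t \in \R$, so $H \subseteq H'$. Since both $H$ and $H'$ are connected and share the same tangent space $\fkh$ at the identity, they agree on a neighborhood of the identity and hence, by standard connectedness arguments, coincide. The ``in particular'' consequence follows by applying the statement to $G = \GL(n,\C)$ with $\fkh = \fkg$, noting that $\fkg \leq \M(n,\C) = \Lie(\GL(n,\C))$.

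The main obstacle is the globalization step. The subgroup $H$ need not be closed in $G$ (the standard caveat being the dense one-parameter subgroup of a torus), so one cannot simply take the subspace topology from $G$. What is required is to verify that the manifold topology coming from the local chart $\exp : U \to \exp(U)$, when propagated across $H$ by left translations, is Hausdorff and makes multiplication and inversion smooth. The cleanest way to carry this out is via the Frobenius integrability theorem applied to the left-invariant distribution $g \mapsto g \cdot \fkh$ on $G$: integrability is equivalent to $\fkh$ being closed under the bracket, and $H$ is then identified with the maximal integral submanifold through the identity. An alternative, more hands-on route is to iterate BCH around every point of $H$ to produce compatible charts directly; either way, the delicate point is the compatibility of these charts and the continuity of the group operations in the intrinsic (not subspace) topology.
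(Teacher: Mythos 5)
This theorem is cited in the paper from Hall's textbook (\cite[Theorem 5.20]{hall}) without any proof of its own, so there is no paper proof to compare against. Your outline is the standard argument Hall gives: build $H$ as the subgroup generated by $\exp(\fkh)$, use Baker--Campbell--Hausdorff to get a local Lie group structure on $\exp(U)\cap H$, globalize by left translation (or, equivalently, invoke Frobenius integrability for the left-invariant distribution determined by $\fkh$), and derive uniqueness from connectedness plus the shared tangent space at the identity. You correctly flag the genuinely delicate point, namely that $H$ carries an intrinsic manifold topology that is generally finer than the subspace topology from $G$, and that compatibility of the translated charts and Hausdorffness must be checked there; that is exactly where the real work lies in a complete proof.
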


\begin{theorem}[{\cite[Theorem 3.20 (1)]{hall}}]\label{expstab}
Let $G$ be a matrix Lie group, and let $\fkg = \Lie(G)$. Then for any $X \in \fkg$ 
and $g 
\in G$, we have $gXg^{-1} \in \fkg$. 
\end{theorem}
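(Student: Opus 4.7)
The plan is to unwind the definition of $\Lie(G)$ and reduce the claim to a single algebraic identity about conjugation and the matrix exponential. Recall that $gXg^{-1}$ lies in $\fkg$ precisely when $e^{t(gXg^{-1})} \in G$ for every $t \in \R$, so it suffices to show that this one-parameter family stays inside $G$ for all real $t$.

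The key step is the conjugation identity
\[
e^{t(gXg^{-1})} \;=\; g\, e^{tX}\, g^{-1}, \qquad t \in \R.
\]
I would prove this directly from the power series definition of the matrix exponential. Since conjugation is an algebra automorphism of $\M(n,\C)$, we have $(gXg^{-1})^k = g X^k g^{-1}$ for every integer $k \ge 0$, and multiplying each term by $t^k/k!$ and summing gives
\[
e^{t(gXg^{-1})} \;=\; \sum_{k=0}^{\infty} \frac{t^k (gXg^{-1})^k}{k!} \;=\; \sum_{k=0}^{\infty} \frac{t^k\, g X^k g^{-1}}{k!} \;=\; g\!\left(\sum_{k=0}^{\infty} \frac{(tX)^k}{k!}\right)\!g^{-1} \;=\; g\, e^{tX}\, g^{-1},
\]
where interchanging the sum with multiplication by $g$ and $g^{-1}$ on the two sides is legitimate because the series converges absolutely (cf.\ Proposition~2.1 of Hall, already cited in the excerpt) and left/right multiplication by a fixed matrix is continuous.

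Once the identity is in hand, the remainder is immediate. Since $X \in \fkg$, by definition $e^{tX} \in G$ for every $t \in \R$. Since $G$ is a group and $g \in G$, the conjugate $g\, e^{tX}\, g^{-1}$ lies in $G$ as well. By the identity above, $e^{t(gXg^{-1})} \in G$ for all $t \in \R$, which is exactly the condition for $gXg^{-1} \in \fkg$.

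There is no real obstacle here: the entire content of the statement is the conjugation-exponential identity, which is a direct consequence of the power series. The only point worth being careful about is justifying the interchange of $g \cdot (-) \cdot g^{-1}$ with the infinite sum, but this follows from absolute convergence of the exponential series combined with continuity of left/right multiplication by a fixed invertible matrix. No deep input from the theory of matrix Lie groups beyond the definition of $\Lie(G)$ is needed.
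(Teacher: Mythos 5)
Your proof is correct and is exactly the standard argument: the paper itself gives no proof but cites Hall's Theorem 3.20, whose proof is the same conjugation identity $e^{t(gXg^{-1})} = g\,e^{tX}g^{-1}$ from the power series, followed by the definition of $\Lie(G)$ and closure of $G$ under products. Nothing is missing; your care about interchanging conjugation with the absolutely convergent series is the only point that needs (and gets) justification.
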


\section{Some results about Cartan subalgebras}\label{app:cartan}

\paragraph{Cartan subalgebras.} 
Let $\fkg$ be a Lie algebra. A \textit{subalgebra} 
$\fkh \subseteq \fkg$ is a vector subspace that is closed under the Lie bracket 
(inherited from $\fkg$). In other words,  $[\fkh, \fkh] \subseteq \fkh$.  An 
\textit{ideal} $\mathfrak{i} \subseteq \fkg$ is a subalgebra such that $[\fkg, 
\mathfrak{i}] \subseteq \mathfrak{i}$.  Let $\fkg_1$ and $\fkg_2$ be ideals of 
$\fkg$. Define $[\fkg_1, \fkg_2] = 
\mathrm{span} \left( [x,y] \;|\; x\in\fkg_1, y \in \fkg_2 \right)$. Let $\fkg^1 = \fkg$ and inductively 
define $\fkg^i = [\fkg^{i-1}, \fkg]$. An algebra $\fkg$ is called 
\emph{nilpotent} if there is an $n$ such that $\fkg^n = 0$.  Similarly, define 
$\fkg^{(1)} = \fkg$ and $\fkg^{(i)} = [\fkg^{(i-1)}, \fkg^{(i-1)}]$.  An algebra 
$\fkg$ is called \emph{solvable} if there is an $n$ such that $\fkg^{(n)} = 0$. 
The \emph{normalizer} of a subspace $\mathfrak{a}$ of $\fkg$  is defined as 
$\mathfrak{n}_\fkg(\mathfrak{a}) = \{x \in \fkg \;|\; [x,\mathfrak{a}] \subseteq 
\mathfrak{a} \}$. A subalgebra $\fkh$ of $\fkg$ is a \emph{Cartan subalgebra} if 
it is nilpotent and $\mathfrak{n}_\fkg(\fkh)  =  \fkh$. 

We shall need the following classical result on Cartan subalgebras. For $x\in 
\fkg$, recall that 
$\ad_x:\fkg \to 
\fkg$ is the linear map defined by  
$\ad_x(y) = [x,y]$ for $y\in \fkg$. In particular, the exponentiation $e^{\ad_x}$ 
is a linear map from $\fkg$ to $\fkg$, and it is a Lie algebra automorphism if 
$\ad_x$ is nilpotent, called an inner automorphism. The group generated by inner 
automorphisms is denoted by $\mathrm{Int}(\fkg)$.

\begin{theorem}[{See e.g. \cite[Chapter 3.5]{Gra00}}]\label{conj} Let $\fkg$ be a 
Lie algebra over an algebraically closed field $\F$ of characteristic zero. For 
any two Cartan subalgebras 
$\fkh_1$ and $\fkh_2$, there exists $g\in\mathrm{Int}(\fkg)$ such that
$\fkh_1 = g(\fkh_2) $.

\end{theorem}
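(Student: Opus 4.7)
The plan is to use the classical regular-element density argument for conjugacy of Cartan subalgebras. For $x\in\fkg$, recall the Fitting decomposition $\fkg=\FitZero(\ad_x)\oplus F_1(x)$, where $\ad_x$ is nilpotent on the first summand and invertible on the Fitting one-component $F_1(x)$. Call $x$ \emph{regular} if $\dim\FitZero(\ad_x)$ attains its minimum value $r$ over $\fkg$; the set of regular elements is Zariski open and non-empty, being cut out by the non-vanishing of a fixed coefficient of $\det(t\cdot\id-\ad_x)$ viewed as a polynomial function of $x$. For each Cartan subalgebra $\fkh$, set $R(\fkh):=\{x\in\fkh:\FitZero(\ad_x)=\fkh\}$. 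Since $\fkh$ is nilpotent and self-normalizing, it is the $0$-generalized-weight space of its own $\ad$-action on $\fkg$, and $R(\fkh)$ is the complement in $\fkh$ of the finitely many hyperplanes $\ker\alpha$ for non-zero weights $\alpha$ of $\fkh$ on $\fkg/\fkh$. Hence $R(\fkh)$ is Zariski open and non-empty in $\fkh$, and every $x\in R(\fkh)$ is regular with $\dim\fkh=r$.

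The heart of the argument is to show that the saturation $\mathrm{Int}(\fkg)\cdot R(\fkh)$ contains a Zariski-dense open subset of $\fkg$. I would study the orbit map $\mu:\mathrm{Int}(\fkg)\times\fkh\to\fkg$, $(g,x)\mapsto g(x)$, and establish dominance by computing its differential at a base point $(1,x_0)$ with $x_0\in R(\fkh)$. Infinitesimally, $\mathrm{Int}(\fkg)$ is generated by $\ad_y$ for ad-nilpotent $y$, and the differential sends $(\ad_y,z)$ to $-\ad_{x_0}(y)+z$. One then argues that the span of ad-nilpotent $y\in\fkg$ is large enough for $\ad_{x_0}$ applied to it to cover $F_1(x_0)$, using the invertibility of $\ad_{x_0}$ on $F_1(x_0)$ together with a Jordan-decomposition analysis of generalized eigenvectors of $\ad_{x_0}$ for non-zero eigenvalues. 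Combined with $\fkh$ from the second factor, the image of the differential is $F_1(x_0)+\fkh=\fkg$, so $\mu$ is dominant. Over the algebraically closed $\F$ this yields that $\mathrm{Int}(\fkg)\cdot R(\fkh)$ contains a Zariski-dense open subset of $\fkg$.

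Finally, given two Cartan subalgebras $\fkh_1,\fkh_2$, both dense open sets $\mathrm{Int}(\fkg)\cdot R(\fkh_i)$ sit in the irreducible affine space $\fkg$ and therefore meet. Picking $g_i\in\mathrm{Int}(\fkg)$ and $x_i\in R(\fkh_i)$ with $g_1(x_1)=g_2(x_2)$ and setting $g:=g_1^{-1}g_2\in\mathrm{Int}(\fkg)$, the naturality of Fitting null components under algebra automorphisms gives
$$g(\fkh_2)=g(\FitZero(\ad_{x_2}))=\FitZero(\ad_{g(x_2)})=\FitZero(\ad_{x_1})=\fkh_1,$$
matching the stated conclusion. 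The main obstacle is the dominance step: one must correctly identify the tangent space of $\mathrm{Int}(\fkg)$ at the identity and verify that ad-nilpotent elements of $\fkg$ supply enough infinitesimal directions to hit $F_1(x_0)$ under $\ad_{x_0}$. The characteristic-zero hypothesis is essential throughout, both for the exponentials defining $\mathrm{Int}(\fkg)$ and for the weight-space decompositions that make $R(\fkh)$ well behaved.
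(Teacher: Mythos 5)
The paper itself gives no proof of this theorem; it is quoted as a classical result from de Graaf's book, and your outline is essentially the standard proof found there (Chevalley's conjugacy theorem): regular elements $R(\fkh)$ cut out by the nonzero weights, dominance of the conjugation map, intersection of the two dense constructible images, and naturality of Fitting null components under automorphisms, so your final computation $g(\fkh_2)=g(\FitZero(\ad_{x_2}))=\FitZero(\ad_{x_1})=\fkh_1$ is exactly right. Two remarks on the middle step. First, the dominance argument should not be phrased through ``the differential of $\mathrm{Int}(\fkg)$ at the identity'': $\mathrm{Int}(\fkg)$ is only given as an abstract group generated by exponentials, so it has no ready-made tangent space. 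The standard rigorous version fixes the finitely many nonzero weight spaces $\fkg_{\alpha_1},\dots,\fkg_{\alpha_r}$ of $\fkh$ acting on $\fkg$ (their elements are $\ad$-nilpotent because $\beta,\beta+\alpha,\beta+2\alpha,\dots$ exits the finite weight set in characteristic zero) and studies the polynomial map $(y_1,\dots,y_r,x)\mapsto \exp(\ad_{y_1})\cdots\exp(\ad_{y_r})(x)$ on $\fkg_{\alpha_1}\times\cdots\times\fkg_{\alpha_r}\times R(\fkh)$; its differential at $(0,\dots,0,x_0)$ has image $\ad_{x_0}\bigl(\sum_i\fkg_{\alpha_i}\bigr)+\fkh=\fkg$, since $\ad_{x_0}$ is invertible on each $\fkg_{\alpha_i}$ — this is precisely the computation you sketch, just carried out on a genuine variety. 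Second, your aside that every $x\in R(\fkh)$ is regular with $\dim\fkh=r$ is not justified at that stage (equality of dimensions of all Cartan subalgebras is normally a consequence of the conjugacy theorem), but it is never used in your argument, so it does no harm.
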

To recover the statement in \cref{lem:key}, note that 
for a matrix Lie 
algebra $\cB\leq\M(n, \C)$, an inner automorphism takes the form as a conjugation 
by an invertible matrix. 
This is because $\mathrm{Ad}(e^x)=e^{\ad_x}$, where
$\mathrm{Ad}$ is the conjugation by matrices. This can be seen, e.g.,
by taking the derivative of
$\mathrm{Ad}(e^x)Y=e^{tx}Ye^{-tx}$ at $t=0$.

\begin{theorem}[Lie's theorem on solvable Lie algebras]\label{thm:Lie} 
Let $\F$ be an algebraically closed field of characteristic zero.
Let $\cB\leq\M(n, \F)$ be a solvable matrix Lie algebra over $\F$. Then there 
exists 
$T\in\GL(n, \F)$, such that for any $B\in \cB$, $TBT^{-1}$ is upper triangular.
\end{theorem}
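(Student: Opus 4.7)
The plan is to reduce the statement to the classical fact that any non-zero representation of a solvable Lie algebra over an algebraically closed field of characteristic zero admits a common eigenvector, and then to induct on $n$. Granted such a common eigenvector $v$, I would take it as the first basis vector. The quotient action of $\cB$ on $\F^n/\langle v\rangle$ is again by a homomorphic image of a solvable Lie algebra (hence solvable) on a space of smaller dimension, so the inductive hypothesis produces a basis of the quotient triangularizing that action. Lifting this basis and prepending $v$ yields a basis of $\F^n$ in which every $B\in\cB$ is upper triangular, and the corresponding change-of-basis matrix is the desired $T\in\GL(n,\F)$.

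For the common eigenvector statement I would induct on $\dim\cB$, with the cases $\dim\cB\le 1$ being immediate from algebraic closure. For the inductive step, solvability gives $[\cB,\cB]\subsetneq\cB$, so I can pick a codimension-one subspace $\cK$ containing $[\cB,\cB]$; automatically $\cK$ is an ideal. By induction on $\dim\cB$ (applied to the action of $\cK$ on $\F^n$), there exist a non-zero $w_0\in\F^n$ and a linear functional $\lambda:\cK\to\F$ with $Kw_0=\lambda(K)w_0$ for all $K\in\cK$. Define the weight space $W=\{v\in\F^n : Kv=\lambda(K)v\text{ for all }K\in\cK\}\neq 0$. Choose any $Y\in\cB\setminus\cK$; if I can show $Y(W)\subseteq W$, then algebraic closure of $\F$ furnishes an eigenvector of $Y$ inside $W$, which is a common eigenvector for all of $\cB=\cK+\F Y$.

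The $Y$-invariance of $W$ is the heart of the proof and is where characteristic zero is used. For $w\in W$ and $K\in\cK$, one computes $K(Yw)=Y(Kw)-[Y,K]w=\lambda(K)Yw-\lambda([Y,K])w$, since $[Y,K]\in[\cB,\cB]\subseteq\cK$. So the claim reduces to showing $\lambda([Y,K])=0$ for all $K\in\cK$. To this end, fix $w\in W$ nonzero and form the flag $W_i=\langle w, Yw,\dots, Y^{i-1}w\rangle$, with $W_0=0$. A short induction on $i$, using $KY^{j}w=YKY^{j-1}w-[Y,K]Y^{j-1}w$ and the fact that $[Y,K]\in\cK$, shows simultaneously that each $W_i$ is $\cK$-invariant and that every $K\in\cK$ acts on $W_i$ as an upper triangular matrix with $\lambda(K)$ on the diagonal in the basis $\{Y^jw\}$. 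Let $m$ be the least integer with $W_m=W_{m+1}$; then $W_m$ is also $Y$-invariant, so $[Y,K]$ is a commutator of endomorphisms of $W_m$ and has trace zero on $W_m$. On the other hand, by the triangular form, its trace on $W_m$ equals $m\,\lambda([Y,K])$. Since $\mathrm{char}\,\F=0$ and $m\ge 1$, this forces $\lambda([Y,K])=0$.

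The main obstacle is precisely this trace computation: the subtle point is verifying that the $W_i$ are preserved by all of $\cK$ (not merely by a single $K$) with the right diagonal action, which requires carrying both properties through the induction simultaneously. Once this is in place, the rest of the argument is formal, and the whole statement follows from the two nested inductions (on $\dim\cB$ for common eigenvectors, and then on $n$ for the flag).
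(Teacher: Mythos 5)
Your proposal is correct: it is the classical textbook proof of Lie's theorem (common eigenvector by induction on $\dim\cB$ via a codimension-one ideal $\cK\supseteq[\cB,\cB]$, the weight-space invariance lemma with the trace argument $m\,\lambda([Y,K])=0$ using $\mathrm{char}\,\F=0$, then induction on $n$ to build the full flag). The paper itself states \cref{thm:Lie} as a known classical result and gives no proof, so there is nothing to compare against; your argument is complete and sound, including the points that matter (the vectors $w,Yw,\dots,Y^{m-1}w$ form a genuine basis of $W_m$ so the diagonal entries are all $\lambda(K)$, $W_m$ is preserved by both $Y$ and $\cK$ so $[Y,K]$ restricts to an honest commutator of endomorphisms of $W_m$, and $m\ge 1$ since $w\neq 0$).
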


\paragraph{Regular elements of Lie algebras.} 
Let $\lambda$ be a formal variable, and let
$\sum_i c_{i, x}\lambda^i$ be the characteristic polynomial of 
$\ad_x$.
The smallest $r$ 
such that $c_{r, x}$ is not identically zero over all $x\in \fkg$ is called the 
\textit{rank} of $\fkg$. The 
open set of points $\{ x \in \fkg  \;|\; c_r(x) \neq 0\}$ is the set of 
\textit{regular points}. A simple observation is that the set of regular elements 
is Zariski open and thus it is dense.  For $x \in  \fkg$, the Fitting null component of $\ad_x$ is \[\FitZero(\ad_x) 
=\{y\in \fkg : \exists m>0, \ad_x^m(y) = 
0\}.\]

Regular elements and Cartan subalgebras are closely related as the following 
theorem shows. 

\begin{theorem}[{\cite[Corollary 3.2.8]{Gra00}}]\label{reg} Let $\fkg$ be a Lie 
algebra over a field of order larger than $\dim(\fkg)$. For a regular 
$x\in\fkg$, 
$\FitZero(\ad_x)$ is a Cartan subalgebra.
\end{theorem}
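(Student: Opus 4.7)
The plan is to verify the three defining conditions of a Cartan subalgebra for $\fkh := \FitZero(\ad_x)$: that $\fkh$ is a subalgebra of $\fkg$, is nilpotent, and is self-normalising. Only the nilpotency step uses regularity of $x$; the other two follow from the Fitting decomposition associated to any element. I would first record that decomposition $\fkg = V_0 \oplus V_1$ with $V_0 = \fkh$ and $V_1$ the sum of generalised eigenspaces of $\ad_x$ for non-zero eigenvalues, so that $\ad_x$ is nilpotent on $V_0$ and invertible on $V_1$. The Leibniz-style identity
\[
\ad_x^{n}([a,b]) \;=\; \sum_{i=0}^{n} \binom{n}{i}\,[\ad_x^{i}(a),\,\ad_x^{n-i}(b)]
\]
gives $[V_0,V_0]\subseteq V_0$ (so $\fkh$ is a subalgebra) and, combined with invertibility of $\ad_x|_{V_1}$, also $[V_0,V_1]\subseteq V_1$. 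Consequently, for every $y\in\fkh$ the operator $\ad_y$ stabilises both summands.

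For nilpotency I would apply Engel's theorem, so it is enough to show that $\ad_y|_{V_0}$ is nilpotent for each $y\in\fkh$. Fix $y$ and consider $z_t := x + ty$; the previous remark implies that $\ad_{z_t} = \ad_x + t\ad_y$ preserves both $V_0$ and $V_1$, so the characteristic polynomial on $\fkg$ factors as $\chi_{\ad_{z_t}}(\lambda) = P_0(\lambda,t)\,P_1(\lambda,t)$ with $\deg_\lambda P_0 = r := \dim V_0$. At $t=0$ the polynomial $t\mapsto P_1(0,t)$ is non-zero (since $\ad_x|_{V_1}$ is invertible) and has degree at most $\dim\fkg - r$ in $t$; as $|\F|>\dim\fkg$, it is non-zero on more than $\dim\fkg$ values of $t$. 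For every such $t$, $\ad_{z_t}|_{V_1}$ is invertible, so the multiplicity of $0$ as a root of $\chi_{\ad_{z_t}}$ equals that of $P_0(\lambda,t)$. Regularity of $x$ forces this multiplicity to be at least $r$, while $\deg_\lambda P_0 = r$ forces it to be at most $r$; hence $P_0(\lambda,t) = \lambda^{r}$ on more than $\dim\fkg$ values of $t$, and a degree-count in $t$ upgrades this to the identity $P_0(\lambda,t) \equiv \lambda^{r}$ in $\F[\lambda,t]$. A rescaling then finishes: for $s\neq 0$, writing $s\,\ad_x|_{V_0} + \ad_y|_{V_0} = s\bigl(\ad_x|_{V_0} + s^{-1}\ad_y|_{V_0}\bigr)$ and applying the identity at $t = s^{-1}$ yields $\chi_{s\ad_x|_{V_0}+\ad_y|_{V_0}}(\lambda) = \lambda^{r}$; since this is a polynomial identity in $s$, it persists at $s=0$ and gives $\chi_{\ad_y|_{V_0}}(\lambda) = \lambda^{r}$, i.e.\ $\ad_y|_{V_0}$ is nilpotent. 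Engel's theorem then delivers nilpotency of $\fkh$.

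Finally, for self-normalisation, let $z\in\fkg$ satisfy $[z,\fkh]\subseteq\fkh$ and decompose $z = z_0 + z_1$ along $V_0\oplus V_1$. Then $[z_1,\fkh] \subseteq V_1$ by $[V_0,V_1]\subseteq V_1$ but also $\subseteq V_0$ by hypothesis, so $[z_1,\fkh] \subseteq V_0\cap V_1 = 0$; in particular $[x,z_1]=0$, and invertibility of $\ad_x|_{V_1}$ forces $z_1=0$, so $z\in\fkh$. The main obstacle is the nilpotency step: translating regularity of $x$ into the polynomial identity $P_0(\lambda,t)\equiv \lambda^{r}$, and then extracting nilpotency of $\ad_y|_{V_0}$ itself via the scaling argument. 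Both passes are exactly where the hypothesis $|\F|>\dim\fkg$ is used, since too small a field would not give enough values of $t$ (respectively $s$) to conclude the polynomial identities.
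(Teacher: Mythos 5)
The paper does not prove this statement; it quotes it from de Graaf's book \cite[Corollary 3.2.8]{Gra00}, and your argument is essentially the standard proof given there (Fitting decomposition with respect to $\ad_x$, bracket relations between the two components, nilpotency of $\ad_y|_{V_0}$ via the family $x+ty$ plus regularity and Engel, self-normalisation via invertibility of $\ad_x|_{V_1}$), so it is the right proof and it is correct in substance. Two small repairs: since $\F$ need not be algebraically closed (it may be a finite field of size just above $\dim\fkg$), define $V_1=\bigcap_m \im(\ad_x^m)$ rather than as a sum of generalised eigenspaces, or pass to $\overline{\F}$ to verify $[V_0,V_0]\subseteq V_0$ and $[V_0,V_1]\subseteq V_1$; and your count of good parameters is overstated --- the number of $t$ with $\ad_{x+ty}|_{V_1}$ invertible is only guaranteed to be at least $|\F|-(\dim\fkg-r)\ge r+1$, not more than $\dim\fkg$ --- but since $P_0(\lambda,t)-\lambda^r$ has degree at most $r$ in $t$, the $r+1$ values already force the identity, and likewise the rescaling step needs no further counting once $P_0(\lambda,t)\equiv\lambda^r$ holds as a polynomial identity (substitute over $\F(s)$ and then set $s=0$).
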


\paragraph{Computing Cartan subalgebras.} We shall need the following result of de 
Graaf, Ivanyos, and R\'onyai \cite{cartan} regarding computing Cartan subalgebras. 
In algorithms, Lie algebras are often given by structure constants. That is, 
let $\fkg$ be a Lie algebra of dimension $n$ over a field $\F$, and 
let $a_1, 
\dots, a_n$ be a linear basis of $\fkg$. The \emph{structure constants} 
$\alpha_{ijk}$ 
($i,j,k\in \{1,\ldots,n\}$)
are field elements such that $[a_i,a_j]=\sum_{k=1}^n\alpha_{ijk}a_k$.

\begin{theorem}[{\cite[Theorem 5.8]{cartan}}]\label{thm:algoregular}
Let $\fkg$ be a Lie algebra of dimension $n$ over a field $\F$ with $|\F|>n$. 
Suppose $\fkg$ is given by its structure constants with respect to a basis $a_1, 
\dots, a_n$, and fix $\Omega\subseteq \F$ such that $|\Omega|=n+1$. Then there is 
a deterministic polynomial-time algorithm 
which
computes a 
regular element $x = \sum \alpha_ia_i,\; \alpha_i\in \Omega$, such that 
$\FitZero(\ad_x)$
is a Cartan subalgebra of $\fkg$.
\end{theorem}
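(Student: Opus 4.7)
The strategy is to reduce the task to a derandomized polynomial identity testing problem on the characteristic polynomial of $\ad_x$. For $x = \sum_i \alpha_i a_i$ with indeterminate coefficients $\alpha_i$, the matrix of $\ad_x$ in the basis $a_1, \ldots, a_n$ depends linearly on the $\alpha_i$ via the structure constants, so
\begin{equation*}
\det(\lambda \idmat - \ad_x) = \sum_{i=0}^{n} c_i(\alpha_1, \ldots, \alpha_n)\, \lambda^i,
\end{equation*}
where each $c_i$ is a polynomial of total degree at most $n - i \le n$ in the $\alpha_j$. By definition of the rank $r$ of $\fkg$, the coefficient $c_r$ is the smallest-index one that is not identically zero, and $x$ is regular iff $c_r$ does not vanish at its coordinates. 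Once such $x$ is produced, \cref{reg} guarantees that $\FitZero(\ad_x)$ is a Cartan subalgebra.

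First I would locate $r$. I would test each $c_i$ for $i = 0, 1, 2, \ldots$ in increasing order using a derandomized identity-testing subroutine (justified by $|\Omega| = n + 1 > \deg c_i$), returning the first index at which the polynomial is not identically zero. Then I would find a witness $(\beta_1, \ldots, \beta_n) \in \Omega^n$ with $c_r(\beta) \neq 0$ by specializing coordinates one at a time: at step $j$, with $\beta_1, \ldots, \beta_{j-1} \in \Omega$ already chosen so that $c_r(\beta_1, \ldots, \beta_{j-1}, \alpha_j, \ldots, \alpha_n) \not\equiv 0$, iterate over the $n+1$ elements of $\Omega$ to find $\beta_j$ preserving non-identity. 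Such $\beta_j$ always exists because, after fixing the remaining $\alpha_k$ to any witness tuple, the resulting univariate polynomial in $\alpha_j$ has degree at most $n < |\Omega|$.

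The main obstacle is implementing the identity-testing subroutine efficiently without expanding $c_i$ symbolically in the remaining indeterminates. I would resolve this recursively by the same evaluation-based strategy: to decide whether a designated coefficient of the characteristic polynomial of a partially specialized $\ad_x$ is identically zero in the remaining variables, evaluate it at a polynomial-size grid inside $\Omega^{n-j}$, computing each evaluation in polynomial time via a standard characteristic-polynomial routine (e.g., Faddeev--LeVerrier or Berkowitz) applied to the $n \times n$ matrix $\ad_x$. Since the degree bound keeps the number of required evaluations polynomial at every level of the recursion, the whole procedure runs in deterministic polynomial time and outputs a regular element $x = \sum_i \beta_i a_i$ with $\beta_i \in \Omega$, whose Fitting null component is a Cartan subalgebra of $\fkg$ by \cref{reg}.
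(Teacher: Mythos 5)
Your reduction of the theorem to ``find a point of $\Omega^n$ where the rank coefficient $c_r$ of the characteristic polynomial of $\ad_x$ is nonzero'', and the coordinate-by-coordinate specialization argument (at most $n$ bad values per coordinate, so $|\Omega|=n+1$ suffices), are fine as far as they go. The genuine gap is exactly the step you flag as ``the main obstacle'' and then dispose of by fiat: deciding whether a partially specialized coefficient $c_i(\beta_1,\dots,\beta_{j-1},\alpha_j,\dots,\alpha_n)$ is identically zero by evaluating it ``at a polynomial-size grid''. No polynomial-size point set can be a hitting set for general polynomials of total degree $\le n$ in $m\approx n$ variables: the space of such polynomials has dimension $\binom{n+m}{m}$, which is exponential, so for any polynomially large set of points there is a nonzero degree-$\le n$ polynomial vanishing on all of them (and a grid only works when it has $\ge n+1$ values per coordinate, i.e.\ $(n+1)^m$ points). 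Your argument uses no structure of the $c_i$ that would circumvent this; worse, for $i=0$ the coefficient $c_0=\pm\det(\ad_x)$ is a symbolic determinant of the matrix space spanned by $\ad_{a_1},\dots,\ad_{a_n}$, so the subroutine you assume would be a deterministic polynomial-time black-box SDIT algorithm --- precisely the open problem motivating this paper (compare \cref{lem:pisigma}, which only claims small hitting sets for the special class of products of linear forms). Since both the location of $r$ and every specialization step rest on this subroutine, the proposal does not yield a polynomial-time algorithm as written.

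The cited proof of de Graaf--Ivanyos--R\'onyai \cite{cartan} avoids multivariate identity testing altogether by exploiting Lie-theoretic structure. The objects driving the algorithm are the Engel subalgebras $\FitZero(\ad_x)$: if $y\in\FitZero(\ad_x)$, then $\ad_{x+cy}$ preserves the Fitting decomposition of $\ad_x$, and its restriction to the Fitting one-component is invertible for all but at most $n$ values of $c$ (the bad values are roots of a nonzero univariate polynomial of degree $\le n$, which is why $|\Omega|=n+1$ suffices); for the good values $\FitZero(\ad_{x+cy})\subseteq\FitZero(\ad_x)$, and the containment can be made strict as long as the current Engel subalgebra is not minimal. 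A minimal Engel subalgebra over a field with more than $n$ elements is a Cartan subalgebra and its defining element is regular (cf.\ \cref{reg}). This lets one build the element greedily, choosing each coefficient from $\Omega$ by comparing dimensions of Fitting null components --- a plain linear-algebra computation per candidate value --- with the dimension of the Engel subalgebra as the progress measure, so no identity test on a multivariate polynomial is ever performed. To salvage your approach you would either have to exhibit an explicit polynomial-size hitting set for the specific polynomials $c_i$ (not known, and essentially SDIT for $i=0$) or switch to this structural descent.
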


Note that to obtain \cref{thm:cartan}, we start with a matrix Lie algebra 
$\cB=\langle B_1, \dots, B_m\rangle\leq\M(n, \F)$, compute structure constants by 
expanding $[B_i, B_j]=\sum_{k\in[m]}\alpha_{i,j,k}B_k$, apply 
\cref{thm:algoregular}, and use its output to obtain a subspace of $\cB$ 
which is a Cartan subalgebra.

\section{Density arguments and the generation of Lie algebras}\label{app:density}

In this part we prove some facts based on standard density arguments.

Let $U$ be an $m$-dimensional vector space over an infinite field $\F$.
By choosing a basis we identify $U$ with $\F^m$. We say that a nonempty
subset $D$ of $U$ is {\em huge} if there exists a nonzero polynomial 
$f(t_1,\ldots,t_m)\in \F[t_1,\ldots,t_m]$ such that if 
$u=(u_1,\ldots,u_m)^T\not\in D$ then $f(u_1,\ldots,u_m)=0$. 
(Thus, huge subsets are those that contain Zariski open
subsets.)
It is easy to see that hugeness is independent of the choice of
the basis and that the intersection of finitely many huge subsets
is huge as well. As a hyperplane of $U$ consists of the zeros
of a linear function on $U$, we have that any huge subset of 
$U$ spans $U$.

Let $\fkg$ be an $m$-dimensional Lie algebra over $\F$. 
Let $u_1,\ldots,u_m$ be a basis for $\fkg$. Recall that
the structure constants $\alpha_{ijk}$ ($i,j,k\in \{1,\ldots,m\}$)
are field elements such that such that $[u_i,u_j]=\sum_{k=1}^m\alpha_{ijk}u_k$.
A {\em Lie expression}
or Lie polynomial $E(z_1,\ldots,z_\ell)$ in $\ell$ variables $z_1,\ldots,z_\ell$
is an expression that can be recursively built using linear
combinations and the bracket symbol. 
Let $x_i=\sum_{j=1}^mx_{ij}u_j$
($i=1,\ldots,\ell$). Then the 
structure constants %
 can be used to expand
$E(x_1,\ldots,x_\ell)$ as a vector whose coordinates are
polynomials in $x_{ij}$. If we assign $m-1$ elements of $\fkg$
to the variables $z_2,\ldots,z_\ell$ then $E$ expands to
an a vector whose coordinates are polynomials in $x_{11},\ldots,x_{1\ell}$
that may include nonzero constant terms. Therefore it will be
convenient to also consider
Lie expressions {\em over $\fkg$}: these are expressions which 
may include constant elements from $\fkg$.  
{}From the definition of density it follows that if $E$ is an expression in a single 
variable
$z$ that is not identically zero on $\fkg$ then the elements $x$ of
$\fkg$ on which $E$ evaluates to a nonzero element of $\fkg$ is huge.
Furthermore, if there are $m$ expressions $E_1(z),\ldots,E_m(z)$
such that there exists an element $x\in \fkg$ such that
$E_1(x),\ldots,E_m(x)$ are linearly independent then 
such elements are a huge subset of $\fkg$. To see this,
just consider the determinant expressing that $E_1(x)\ldots,E_m(x)$
are linearly dependent.

\begin{proposition}\label{lem:span2}
Let $\fkg$ be an $m$-dimensional Lie algebra over
a large enough field $\F$ such that there exist two
elements $x_0,y_0\in\F$ that
generate $\fkg$. Then there are elements $x_i,y_i\in \fkg$,
($i=1,\ldots,m^2$) such that $x_i$ and $y_i$ generate
$\fkg$ for every $i$ and that $x_i\otimes y_i$ span $\fkg\otimes \fkg$.
\end{proposition}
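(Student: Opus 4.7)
The plan is to construct the pairs $(x_i, y_i)$ inductively, one at a time, by exhibiting two huge subsets of $\fkg \times \fkg$ whose intersection supplies the next pair at each stage. The two huge subsets will be (i) the set $G$ of pairs that generate $\fkg$ as a Lie algebra, and (ii) a set $T_W$ of pairs whose pure tensor avoids the span $W$ of the tensors already selected. Since hugeness is preserved under finite intersections and every huge subset is nonempty, their intersection will immediately deliver the next pair.

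First I would verify that $G$ is huge. Because $x_0, y_0$ generate $\fkg$, there exist Lie expressions $E_1(z_1,z_2),\ldots,E_m(z_1,z_2)$ in two variables such that $E_1(x_0,y_0),\ldots,E_m(x_0,y_0)$ form a basis of $\fkg$. Expanding each $E_j(x,y)$ in the fixed basis of $\fkg$ using the structure constants, and forming the determinant of the resulting $m \times m$ matrix, gives a polynomial $D(x,y)$ in the coordinates of $x$ and $y$ which is nonzero at $(x_0,y_0)$, hence is not identically zero. On the locus $D \neq 0$ the elements $E_j(x,y)$ are linearly independent and lie in the Lie subalgebra generated by $x,y$; this subalgebra is therefore all of $\fkg$. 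Thus $G$ is huge in the sense of the appendix.

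Now comes the inductive step. Suppose generating pairs $(x_1,y_1),\ldots,(x_k,y_k) \in G$ have been chosen with $x_1\otimes y_1,\ldots,x_k\otimes y_k$ linearly independent in $\fkg \otimes \fkg$, where $k < m^2$. Let $W = \mathrm{span}(x_1\otimes y_1,\ldots,x_k\otimes y_k)$; since $\dim W \le k < m^2 = \dim(\fkg\otimes\fkg)$, the subspace $W$ is proper. Because $\fkg\otimes\fkg$ is spanned by pure tensors, some pure tensor lies outside $W$, so there is a linear functional $\ell$ on $\fkg\otimes\fkg$ that vanishes on $W$ but satisfies $\ell(x\otimes y)\neq 0$ for some $(x,y)\in \fkg\times\fkg$. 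The map $(x,y)\mapsto \ell(x\otimes y)$ is a bilinear, hence polynomial, function on $\fkg\times\fkg$ that is not identically zero, so
\[
T_W \;=\; \{(x,y)\in\fkg\times\fkg : \ell(x\otimes y)\neq 0\} \;\subseteq\; \{(x,y): x\otimes y\notin W\}
\]
is huge. Any $(x_{k+1},y_{k+1})\in G\cap T_W$ then extends the collection, and after $m^2$ iterations we obtain $m^2$ linearly independent pure tensors, which necessarily span $\fkg\otimes\fkg$.

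There is no real obstacle in this argument; it is a direct application of the density framework set up earlier in the appendix. The only mildly delicate point is observing that $T_W$ is huge rather than just nonempty, which however is immediate from the existence of a single nonzero linear functional separating some pure tensor from $W$, giving a single nonzero polynomial certificate for hugeness.
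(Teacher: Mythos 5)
Your proof is correct, but it is organized differently from the paper's. You work directly with pairs: you show that the set $G$ of generating pairs is a huge subset of the $2m$-dimensional space $\fkg\times\fkg$ (via the determinant of the $m$ Lie expressions $E_j(x,y)$, nonzero at $(x_0,y_0)$), and then run a greedy induction, at each step intersecting $G$ with the huge set $\{(x,y):\ell(x\otimes y)\neq 0\}$ cut out by a linear functional $\ell$ vanishing on the span $W$ of the tensors already chosen; $m^2$ linearly independent pure tensors in the $m^2$-dimensional space $\fkg\otimes\fkg$ then span it. The paper instead applies the density argument one coordinate at a time: it picks a basis $x_1,\dots,x_m$ of $\fkg$ inside the huge set of $x$ with $E_i(x,y_0)$ linearly independent, and then, for each $j$, a basis $y_{j1},\dots,y_{jm}$ inside the huge set of $y$ with $E_i(x_j,y)$ linearly independent; the $m^2$ pairs $(x_j,y_{jk})$ span $\fkg\otimes\fkg$ by the explicit grid identity $z=\sum_j x_j\otimes y_j'=\sum_{j,k}\alpha_{jk}\,x_j\otimes y_{jk}$, with no induction and no dimension count. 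Your route is more modular — it uses only that the generating pairs form a huge subset of $\fkg\times\fkg$, so it would produce a spanning family for any huge condition on pairs — at the cost of an induction and a fresh certificate polynomial at each of the $m^2$ steps; the paper's route gets spanning for free from the basis-times-basis structure and stays within the single-variable form of the density discussion in the appendix (Lie expressions with one argument frozen). Both arguments place the same demand on the field, namely that it be large enough (or infinite) for the relevant huge sets and their finite intersections to be nonempty and to contain bases.
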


\begin{proof}
Pick expressions $E_i(z,w)$ $(i=1,\ldots,m)$
such that $E_i(x_0,y_0)$ are linearly independent. Then the set
of elements $x$ such that $E_i(x,y_0)$ are linearly independent
is huge and hence contains a basis $x_1,\ldots,x_m$ of $\fkg$.
For each $j$, the subset consisting of those $y$ for which
$E_i(x_j,y)$ are linearly independent is a huge set $D_j\subset \fkg$
whence there exist elements $y_{jk}\in D_j$ ($k=1,\ldots,m$) that
are a basis for $\fkg$. Each of the $m^2$ pairs $x_j,y_{jk}$ generate $\fkg$.
To see that they span $\fkg\otimes \fkg$, write an element $z\in \fkg\otimes \fkg$
in the form $z=\sum x_j\otimes y_j'$ and express $y_j'$ as
$y_j'=\sum_k \alpha_{jk}y_{jk}$. Then $z=\sum_{j,k} \alpha_{jk}x_j\otimes
y_{jk}$. 
\end{proof}

\begin{lemma}\label{lem:gen}
Let $\fkg_1,\ldots,\fkg_m$ be finite dimensional simple Lie
algebras over a large enough field $\F$, each generated by
2 elements. Then $\fkg_1\oplus\ldots\oplus \fkg_m$ is also
generated by two elements. 
\end{lemma}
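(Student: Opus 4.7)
The plan is an induction on $m$ combined with a density/dimension-count argument in the spirit of \cref{lem:span2}. The base case $m=1$ is the hypothesis. For the inductive step, split $\fkg=\fkg'\oplus\fkg_m$ where $\fkg'=\fkg_1\oplus\cdots\oplus\fkg_{m-1}$. By induction, fix a generating pair $(x',y')\in\fkg'\times\fkg'$. I then seek $(z,w)\in\fkg_m\times\fkg_m$ so that $\bigl((x',z),(y',w)\bigr)$ generates $\fkg$, and show that the set of good $(z,w)$ is a non-empty Zariski open subset of $\fkg_m\times\fkg_m$.

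Let $S\subseteq\fkg$ be the subalgebra generated by these two elements. Its projection to $\fkg'$ contains $x',y'$ and hence equals $\fkg'$. Imposing on $(z,w)$ the open condition that $z,w$ generate $\fkg_m$ (non-empty by hypothesis on $\fkg_m$), $S$ also surjects onto $\fkg_m$. Goursat's lemma for Lie algebras applied to $\fkg=\fkg'\oplus\fkg_m$ then implies either $S=\fkg$, or there is an ideal $A\subseteq\fkg'$ and an isomorphism $\psi\colon\fkg'/A\to\fkg_m$ such that $S$ is the preimage of the graph of $\psi$. A standard fact, using that each $\fkg_i$ is simple, is that every ideal of $\fkg'$ is a subsum $\bigoplus_{j\in J}\fkg_j$; since $\fkg_m$ is simple, the quotient $\fkg'/A$ is simple only when $A=\bigoplus_{j\neq i}\fkg_j$ for some $i\leq m-1$, in which case $\fkg_i\cong\fkg_m$ and the graph condition reads $z=\psi(x'_i)$, $w=\psi(y'_i)$.

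For each such $i$, the resulting ``bad'' locus in $\fkg_m\times\fkg_m$ is the image of the map $\Iso(\fkg_i,\fkg_m)\to\fkg_m\times\fkg_m$ sending $\psi\mapsto(\psi(x'_i),\psi(y'_i))$. Its source is an $\aut(\fkg_m)$-torsor of dimension $\dim\aut(\fkg_m)=\dim\fkg_m$, because for a simple Lie algebra $\fkg_m$ the connected component of $\aut(\fkg_m)$ is $\mathrm{Int}(\fkg_m)\cong\fkg_m/Z(\fkg_m)=\fkg_m$ and the outer automorphism group is finite. Thus the bad image is contained in a proper Zariski-closed subset of $\fkg_m\times\fkg_m$, which itself has dimension $2\dim\fkg_m$. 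Intersecting the open condition ``$\langle z,w\rangle=\fkg_m$'' with the complements of the finitely many bad loci (one per $i\leq m-1$ with $\fkg_i\cong\fkg_m$) gives a non-empty Zariski open subset over a large enough field; any $(z,w)$ in it completes the induction.

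The main obstacle I anticipate is the Goursat-type structural analysis: one must verify cleanly that every subalgebra of $\fkg'\oplus\fkg_m$ surjecting onto both factors but not equal to the full sum is the preimage of the graph of an isomorphism between simple quotients, together with the dimension count $\dim\aut(\fkg_m)=\dim\fkg_m$ for simple $\fkg_m$. Once these ingredients are in place the density step is essentially the same template as the one used to prove \cref{lem:span2}.
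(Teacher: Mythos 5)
Your overall strategy (induction on the number of summands, Goursat's lemma for subalgebras of $\fkg'\oplus\fkg_m$ surjecting onto both factors, the fact that ideals of a sum of simples are subsums, and a density argument to avoid the finitely many ``graph'' loci) is coherent and genuinely different from the paper's proof, which is an explicit construction: the paper adjusts each generating pair via Engel's theorem and huge-set arguments so that $\ad(x_i)^{d_i}y_i\neq 0$, rescales the $x_i$ so that the minimal polynomials $f_i$ of $\ad(x_i)$ acting on $y_i$ have pairwise disjoint nonzero root sets, and then shows directly that $x=\sum x_i$, $y=\sum y_i$ generate, because $h_i(\ad(x))y$ with $h_i=\prod_{j\neq i}f_j$ is a nonzero element of $\fkg_i$ and simplicity plus surjectivity of the projections force the generated subalgebra to contain every $\fkg_i$. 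Your route, by contrast, isolates all the work into one dimension count.

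That dimension count is where the gap is. You justify $\dim\aut(\fkg_m)=\dim\fkg_m$ by saying the identity component of $\aut(\fkg_m)$ is $\mathrm{Int}(\fkg_m)$ and the outer automorphism group is finite; this is a characteristic-zero fact. The lemma, however, is needed precisely in characteristic $p>3$ (the characteristic-zero case of \cref{thm:gen2} is already Kuranishi \cite{Kuranishi51}), and there the simple Lie algebras supplied by \cite{Bois09} include non-classical ones (Cartan type, Melikian), which generally have outer derivations; the automorphism group need not be $\mathrm{Int}$ extended by a finite group, and its Lie algebra sits inside $\mathrm{Der}(\fkg_m)$, which is strictly larger than $\ad(\fkg_m)$. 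What your argument actually needs is only the weaker bound $\dim\aut(\fkg_m)\le\dim\mathrm{Der}(\fkg_m)<2\dim\fkg_m$, so that each bad locus (the image of $\Iso(\fkg_i,\fkg_m)\to\fkg_m\times\fkg_m$) lies in a proper closed subset; this bound is true for the known simple Lie algebras in characteristic $p>3$, but establishing it requires classification-dependent information about derivation algebras (or some substitute argument), which your proposal does not supply. So as written the proof is complete only in characteristic zero; to cover the case the lemma is actually invoked for, you must either prove $\dim\mathrm{Der}(\fkg_m)<2\dim\fkg_m$ for the simple algebras in question or replace the dimension count by an argument, like the paper's, that never mentions automorphisms. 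The remaining ingredients (openness of the set of generating pairs, existence of $\F$-points in nonempty opens for large enough $\F$) are fine and match the paper's huge-set formalism around \cref{lem:span2}.
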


\begin{proof}
Assume that $x_i$ and $y_i$ generate $\fkg_i$ ($i=1,\ldots,m$).
We claim that we may further assume that 
$\ad(x)^{d_i}y_i\neq 0$ where $d_i=\dim_\F(\fkg_i)$.
Indeed, if $\ad(x)^{d_i}y_i= 0$ then, by Engel's theorem,
 there exists a pair $(w_i,z_i)\in \fkg\times \fkg$ such that 
$\ad(w_i)^{d_i}z_i\neq 0$. 
If we fix $z_i$ from such a pair then the elements $w_i$ such that 
$\ad(w_i)^{d_i}z_i\neq 0$ is a huge set. There
exist $d_i$ Lie expressions $E_1,\ldots,E_{d_i}$ in in two variables
such that $E_1(x_i,y_i),\ldots,E_{d_i}(x_i,y_i)$ are linearly independent
elements of $\fkg_i$. The elements $w_i$ such that
$E_1(w_i,y_i),\ldots,E_{d_i}(x_i,y_i)$ are linearly independent
form a huge set. The intersection of these two huge sets
is still huge and hence non-empty. We replace $x_i$ with
an element from the intersection. Now the set of $w_i$ such that
$\ad(x_i)^d_iw_i\neq 0$ is huge as well as these set of
those for which $E_j(x_i,w_i)$ are linearly independent. We can 
replace $y_i$ with an element from the intersection.

Let $f_i=f_i(t)$ be the monic polynomial of smallest degree such that 
$f_i(\ad(x_i))y_i=0$. Note that $f_i$ has degree at most $d_i$
and the assumption on $x_i$ and $y_i$ implies that $f_i$ is not a divisor
of $t^{d_i}$. Therefore each $f_i$ has a nonzero root
(in the algebraic closure $\overline \F$ of $\F$).
Let $R_i$
be the set of nonzero roots of $f_i$ in $\overline \F$. 
There exist field elements
$\alpha_1,\ldots,\alpha_m\in \F$ such that 
the sets $\alpha_iR_i$ are pairwise disjoint. Replacing
$x_i$ with $\alpha_ix_i$ we arrange that the sets 
$R_i$ become pairwise disjoint.
Then for each $i$, put $h_i=\prod_{j\neq i}f_j$. We have
that $h_i(\ad(x_j))y_j=0$ for every $j\neq i$, while $h_i(\ad(x_j))y_i\neq 0$ as $h_i$ is not divisible by $f_i$.

Put $x=\sum_{i=1}^mx_i$ and $y=\sum_{i=1}^my_i$. Then
$h_i(x)y$ is a nonzero element of $\fkg_i$. Let $M$
be the subalgebra of $\fkg$ generated by $x$ and $y$.
We see that $M$ has a nonzero element, say $z_i$ contained in
$\fkg_i$. The projection of $M$ on the $i$th component is
clearly $\fkg_i$ and $x_i$ and $y_i$ generate $\fkg_i$. As
$\fkg_i$ is simple we have that the ideal of $M$ generated by
$z_i$ is $\fkg_i$. This holds for all $i=1,\ldots,m$,
showing that $M=\fkg$.
\end{proof}

\section{Linear kernel vectors of matrix Lie algebras}\label{app:old}
In this section,  we will give an alternative proof of 
\cref{thm:linear_kernel}, which doesn't use density arguments, but instead 
uses weight decomposition of representations of semi-simple Lie algebras over $\C$.

\subsection{Weight decomposition of Lie algebra representations}
Fix a Cartan subalgebra $\fkh$ of a semisimple Lie algebra $\fkg$ over $\C$. 
By definition $\fkh$ is nilpotent. 
If $\fkg$ is semisimple, $\fkh$ is abelian \cite[Thm3, Ch.3]{serre}.  Similar to 
the notion of eigenvalues and eigenspaces is the concept of a weight and its 
weight space.  Intuitively,  it can be thought of as a linear function that 
captures the eigenvalues of a set of matrices simultaneously.   
Formally,  a \emph{weight} is an element of 
$\fkh^*$. If $w \in \fkh^*$, then the $w$-weight space of $V$ is defined as $V_{w} 
= \{v\in V\;|\; \forall \; h \in \fkh, \rho(h)\cdot v = w(h)v\; \}$.

\begin{theorem}
If $\fkg$ is a complex semi-simple Lie algebra, then every representation 
$(\rho,V)$ can be decomposed into weight spaces $V = \oplus_w V_{w}$. 
\end{theorem}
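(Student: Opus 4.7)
The plan is to combine three well-known pillars: complete reducibility, the abstract Jordan decomposition in a semisimple Lie algebra, and simultaneous diagonalization of commuting diagonalizable operators. By the remark preceding the theorem, a Cartan subalgebra $\fkh$ of a complex semisimple Lie algebra $\fkg$ is abelian, so the family $\{\rho(h) : h \in \fkh\}$ is a commuting family of operators on $V$. The heart of the argument is therefore to show that each $\rho(h)$ is diagonalizable; the decomposition into weight spaces is then immediate from the standard fact that a commuting family of diagonalizable operators can be simultaneously diagonalized, with the common eigenspaces naturally parameterized by linear functionals $w\in\fkh^*$ recording the joint eigenvalues.

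To establish that $\rho(h)$ is diagonalizable, the first step is to recall that $\fkh$ consists of semisimple elements: in the semisimple setting, a Cartan subalgebra coincides with a maximal toral subalgebra, so for each $h\in\fkh$ the adjoint action $\ad_h$ on $\fkg$ is diagonalizable. The crucial second step is to transport this semisimplicity through $\rho$, which is achieved via the abstract Jordan decomposition: every $x\in\fkg$ has a unique decomposition $x=x_s+x_n$ with $x_s$ semisimple, $x_n$ nilpotent, and $[x_s,x_n]=0$, and moreover $\rho(x)=\rho(x_s)+\rho(x_n)$ realizes the usual Jordan decomposition of $\rho(x)$ as an endomorphism of $V$. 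Applying this with $x=h\in\fkh$ (so $x_n=0$) yields that $\rho(h)$ is diagonalizable.

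The final step is purely linear-algebraic. Since $\fkh$ is abelian and each $\rho(h)$ is diagonalizable, the $\rho(h)$ commute pairwise and hence admit a common eigenbasis. Decomposing $V$ into joint eigenspaces and collecting the eigenvalues of $h\mapsto\rho(h)v$ into a linear form $w\in\fkh^*$ gives the desired decomposition $V=\bigoplus_w V_w$. Here one should tacitly assume $V$ is finite-dimensional, as the argument uses diagonalization of single operators; without this the statement can fail.

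The main obstacle is the preservation of Jordan decomposition under $\rho$. This is a nontrivial structural theorem, but it is classical and can be invoked directly from \cite{Hum12}. Alternatively, one can avoid it by restricting attention to $\rho(\fkh)$ and using that complete reducibility of $(\rho,V)$ (Weyl's theorem) plus the fact that $\fkh$ generates a semisimple associative subalgebra under $\rho$ suffice to produce the weight decomposition; either route ends in the same conclusion.
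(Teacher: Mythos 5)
Your proposal is correct: the paper itself states this theorem without proof, treating it as a classical fact (the surrounding text simply cites Serre for the structure of $\fkh$), so there is no in-paper argument to compare against. The route you give — $\fkh$ is abelian and consists of semisimple elements, the abstract Jordan decomposition is preserved under any finite-dimensional representation so each $\rho(h)$ is diagonalizable, and commuting diagonalizable operators are simultaneously diagonalizable with joint eigenvalues assembling into linear functionals $w\in\fkh^*$ — is exactly the standard textbook proof the cited references rely on, and your explicit caveat that $V$ must be finite-dimensional is consistent with the paper's setting, where all representations are matrix representations on finite-dimensional spaces.
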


Using this decomposition we have a basis such that for any $h \in \fkh$, $\rho(h)$ 
is a diagonal matrix with $w(h)$  as diagonal elements where $w$ runs over all 
weights of $V$.

\begin{fact}\label{fact:singular}
The matrix space defined by the image $\rho(\fkg)$ of the representation 
$(\rho,V)$ is singular iff $0$ is a weight of the representation, i.e. $V_0$ 
occurs with multiplicity at least one.
\end{fact}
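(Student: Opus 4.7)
My plan is to derive this fact by combining the weight-space decomposition stated just above with \cref{lem:key}, which tells us that the maximum rank of a matrix Lie algebra over $\C$ is already attained on any Cartan subalgebra.

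The starting observation is that the decomposition $V=\bigoplus_w V_w$ yields a basis of $V$ in which every $h\in\fkh$ acts as a diagonal matrix whose entries are $w(h)$, each repeated $\dim V_w$ times. In particular $\rho(h)$ annihilates $V_0$, so $\rank(\rho(h))\le \dim V-\dim V_0$ for every $h\in\fkh$, with equality exactly when $w(h)\ne 0$ for every nonzero weight $w$. Since there are finitely many weights and each condition $w(h)\ne 0$ cuts out a nonempty Zariski-open subset of $\fkh$ (a nonzero linear functional vanishes on a proper hyperplane), a common such $h$ exists. Hence $\crk(\rho(\fkh))=\dim V-\dim V_0$.

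Next I would verify that $\rho(\fkh)$ is a Cartan subalgebra of the matrix Lie algebra $\rho(\fkg)\le\M(\dim V,\C)$, so that \cref{lem:key} gives $\crk(\rho(\fkg))=\crk(\rho(\fkh))=\dim V-\dim V_0$. Since $\fkg$ is semisimple, $\ker\rho$ is a direct sum of some of the simple ideals of $\fkg$, and therefore admits a complementary ideal $\fkg'$ on which $\rho$ is faithful. Using that $\fkh$ respects this decomposition, $\fkh\cap\fkg'$ is a Cartan subalgebra of $\fkg'$ with $\rho(\fkh)=\rho(\fkh\cap\fkg')$, so the latter is Cartan inside $\rho(\fkg)\cong\fkg'$.

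Putting the two steps together, $\rho(\fkg)$ is singular iff $\crk(\rho(\fkg))<\dim V$ iff $\dim V_0\ge 1$ iff $0$ is a weight of $(\rho,V)$. The only genuinely delicate point, which I expect to be the main obstacle, is the Cartan-subalgebra check in step two: it is where semisimplicity is used in an essential way (to split off $\ker\rho$ as a direct summand and to conclude that the image is reductive with the expected Cartan). Everything else follows transparently from simultaneous diagonalization on weight spaces.
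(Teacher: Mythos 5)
Your proof is correct and follows essentially the same route as the paper: diagonalize $\rho(\fkh)$ via the weight decomposition, construct $h\in\fkh$ avoiding the hyperplanes $w(h)=0$ for the nonzero weights, and invoke \cref{lem:key} to transfer singularity between $\rho(\fkh)$ and $\rho(\fkg)$. Your extra step checking that $\rho(\fkh)$ really is a Cartan subalgebra of $\rho(\fkg)$ (splitting off $\ker\rho$ as a complementary ideal) is a point the paper's proof leaves implicit, and it is a welcome addition rather than a deviation.
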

\begin{proof}
This follows easily from the observation about $\rho(h_i)$ which implies that 
$\rho(\fkh)$ is singular if $0$ is a weight. From \cref{sec:maxrk}, we know 
that the entire algebra is singular if any of its Cartan subalgebra is.  If $0$ is 
not a weight, then it is easy to construct a element $h \in \fkh$ such that $w(h) 
\neq 0$ for every weight. Thus, $\rho(h)$ has full rank.  
\end{proof}

If we decompose the adjoint representation,
the weights we obtain are called \emph{roots} usually denoted by $\Phi$. It is 
also a fact that if $\alpha \in \Phi$, then $-\alpha \in \Phi$. We thus can write 
$\fkg = \fkh\oplus_{\alpha\in \Phi}\fkg_\alpha $. Moreover, each of the spaces 
$\fkg_\alpha$ is one-dimensional. We denote an element of $\fkg_\alpha$ by 
$g_\alpha$ which is unique upto a scalar.  Such a decomposition of the Lie algebra 
is very useful as we can understand the action under any representation of these 
subspaces $g_\alpha$ as follows . 

\begin{proposition}\cite[Prop.1, Chapter 7]{serre} \label{prop:weightaction}
For any representation $(\rho,V)$ of $\fkg$,  $\rho(g_\alpha)V_w \subset 
V_{w+\alpha} $ for every weight $w$ and every root $\alpha$.
\end{proposition}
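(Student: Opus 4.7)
The plan is to verify directly that $\rho(g_\alpha)v$ lies in the weight space $V_{w+\alpha}$ whenever $v\in V_w$, by computing the action of an arbitrary $h\in\fkh$ on $\rho(g_\alpha)v$. The two ingredients I would use are (a) the defining property of the weight space $V_w$, namely $\rho(h)v = w(h)\,v$ for all $h\in\fkh$, and (b) the defining property of the root space $\fkg_\alpha$ (which is the $\alpha$-weight space of the adjoint representation restricted to $\fkh$), namely $[h,g_\alpha]=\alpha(h)\,g_\alpha$ for every $h\in\fkh$ and $g_\alpha\in\fkg_\alpha$. Both of these are immediate from the statement that $\rho$ admits a simultaneous weight decomposition under $\fkh$, which is available because $\fkh$ is an abelian Cartan subalgebra in the semisimple setting.

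Concretely, I would fix $h\in\fkh$, $g_\alpha\in\fkg_\alpha$, and $v\in V_w$, and compute
$$\rho(h)\rho(g_\alpha)v \;=\; \rho(g_\alpha)\rho(h)v + \rho([h,g_\alpha])v,$$
using only that $\rho$ is a Lie algebra homomorphism. Substituting (a) and (b) into the two terms on the right yields
$$\rho(h)\rho(g_\alpha)v \;=\; w(h)\,\rho(g_\alpha)v + \alpha(h)\,\rho(g_\alpha)v \;=\; (w+\alpha)(h)\cdot\rho(g_\alpha)v.$$
Since $h$ was arbitrary, $\rho(g_\alpha)v$ satisfies the eigenvalue equations defining $V_{w+\alpha}$, which is exactly the claim.

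There is no serious obstacle here; the statement is essentially a one-line commutator computation. The only minor points to watch are keeping the bracket convention consistent with the Lie-algebra-homomorphism property (so that $\rho([h,g_\alpha])=\rho(h)\rho(g_\alpha)-\rho(g_\alpha)\rho(h)$, with the correct sign), and handling the degenerate case in which $w+\alpha$ is not actually a weight of $V$: in that case the computation above still produces a vector on which every $\rho(h)$ acts by $(w+\alpha)(h)$, which forces $\rho(g_\alpha)v=0$, consistent with the convention $V_{w+\alpha}=0$.
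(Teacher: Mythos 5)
Your proof is correct: the identity $\rho(h)\rho(g_\alpha)v = \rho(g_\alpha)\rho(h)v + \rho([h,g_\alpha])v$ together with $\rho(h)v = w(h)v$ and $[h,g_\alpha]=\alpha(h)g_\alpha$ gives exactly $(w+\alpha)(h)\,\rho(g_\alpha)v$, and your handling of the case where $w+\alpha$ is not a weight is the right convention. The paper itself does not prove this proposition but cites it from Serre, and your one-line commutator computation is precisely the standard argument behind that citation, so there is nothing to compare beyond noting that you have supplied the omitted textbook proof.
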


\subsection{ Notation }
Fix a complex semi-simple Lie algebra $\fkg$ and a Cartan subalgebra $\fkh$.  Let 
$\Phi$ be its set of roots.  We choose a Cartan-Weyl basis\footnote{Cartan-Weyl 
basis and the Chevally basis differ only by a normalization. We do not need 
properties of the coefficients $c_{\alpha,\beta}$ and thus either basis works 
well.} for $\fkg$ (cf. e.g. 
\cite[pp. 48]{serre}). This means that we have a set of simple roots\footnote{Simple roots are just a basis of $\fkh^*$ while the set of all roots can be linearly dependent. } $S = 
\{\alpha_1, \cdots \alpha_n\}$ and a basis of $\fkh$, 
$\{h_1,\cdots h_n\}$ such that the following hold,
\begin{align*}
    [h_{i},g_{\alpha_j}] &= \alpha_j(h_i)g_{\alpha_j} &\forall i,j \in [n]\\
    [g_{\alpha},g_{\beta}] &= c_{\alpha\beta}g_{\alpha +\beta} \;\; ( 
    c_{\alpha\beta}\neq 0) & \alpha + \beta \in \Phi \\
    [g_{\alpha},g_{\beta}] &= 0 & \text{if } \alpha + \beta \not \in \Phi\\
    [g_{\alpha_i},g_{-\alpha_i} ] &= h_i &\forall i \in [n]
\end{align*}

Choose a basis of $V \cong \F^N$, such that $\rho(\fkh)$ is diagonal. Let $W$ be 
the set of weights of $V$ and thus $V = \oplus_{w \in W} V_{w}$ such that $V_w$ is the $w$ weight space of $\fkh$. Note that we are assuming that $0\in W$ as 
non-singular spaces anyway cannot have a linear kernel.  

\subsection{Main proof}

We recall that given a Lie algebra $\fkg$ and a representation $(\rho,V)$ a linear kernel vector $\phi:\fkg \to V $ is a linear map such that $\rho(x)\psi(x) = 0$ for every $x \in \fkg$. We state the main lemma we need and will prove it later. 

\begin{lemma}\label{lem:action} 
Assume that trivial representation is not a subrepresentation of 
the representation $(\rho,V)$ of $\fkg$.  Let $\psi:\fkg \to V$ be a linear kernel vector. 
Then for any $\alpha,\beta \in \Phi$ such that $\alpha+\beta \neq 0$ and $h \in \fkh$ we have 
\begin{align*}
\psi(h) \in V_0, \; &\psi(g_\alpha) \in V_\alpha \text{  and}\\
\psi([g_\alpha,g_\beta])&=\rho(g_\alpha)\psi(g_\beta) \\
\psi([h,g_\alpha]) & = \rho(h)\psi(g_\alpha)
\end{align*}
\end{lemma}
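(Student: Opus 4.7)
The proof splits into four stages matching the four claims, using the two linear-kernel identities \eqref{eq:linker-def} and \eqref{eq:linker-cross}, the stronger annihilation statement of \cref{lem:linker-key}, the weight-space decomposition $V = \bigoplus_w V_w$ under $\fkh$, and the hypothesis that $(\rho,V)$ contains no trivial subrepresentation. For Part 1, I decompose $\psi(h) = \sum_w \psi_w(h)$ with each $\psi_w\colon\fkh\to V_w$ linear. Separating $\rho(h)\psi(h) = 0$ along weight spaces gives $w(h)\psi_w(h) = 0$ for every $h$ and every weight $w$; for $w \neq 0$ the linear form $w$ is nonzero on $\fkh$, so $\psi_w$ vanishes on the Zariski-open locus $\{h : w(h) \neq 0\}$ and hence identically by linearity.

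For Part 2, write $\psi(g_\gamma) = \sum_w v^\gamma_w$ for every $\gamma\in\Phi$ and apply \eqref{eq:linker-cross} to $(h, g_\gamma)$: using Part 1, $\rho(g_\gamma)\psi(h) \in V_\gamma$, so equating components in $V_w$ for $w \neq \gamma$ on $\rho(h)\psi(g_\gamma) + \rho(g_\gamma)\psi(h) = 0$ gives $w(h) v^\gamma_w = 0$ for all $h$, forcing $v^\gamma_w = 0$ whenever $w \notin\{0,\gamma\}$. Hence $\psi(g_\gamma) \in V_0 \oplus V_\gamma$ for every $\gamma$. To kill the residual $v^\alpha_0$, I separate $\rho(g_\alpha)\psi(g_\alpha) = 0$ by weight to obtain $\rho(g_\alpha) v^\alpha_0 = 0$, and for every $\beta \in \Phi$ separate $\rho(g_\beta)\psi(g_\alpha) + \rho(g_\alpha)\psi(g_\beta) = 0$ by weight to obtain $\rho(g_\beta) v^\alpha_0 = 0$. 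Combined with the automatic $\rho(h) v^\alpha_0 = 0$, the span $\langle v^\alpha_0\rangle$ is a trivial $\fkg$-submodule, which the hypothesis rules out, so $v^\alpha_0 = 0$. Part 4 is then immediate, since both sides of $\psi([h, g_\alpha]) = \rho(h)\psi(g_\alpha)$ evaluate to $\alpha(h)\psi(g_\alpha)$.

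The main obstacle is Part 3. Let $d := \psi([g_\alpha,g_\beta]) - \rho(g_\alpha)\psi(g_\beta)$; by Parts 1--2 and \cref{prop:weightaction}, $d \in V_{\alpha+\beta}$, and \cref{lem:linker-key} gives $\rho(z) d = 0$ for every $z$ in the Lie subalgebra generated by $g_\alpha$ and $g_\beta$. My plan is to promote this partial annihilation to $\rho(g_\gamma) d = 0$ for \emph{every} root $\gamma \in \Phi$: once achieved, $\langle d\rangle$ is a one-dimensional $\fkg$-submodule carrying the character $\chi$ with $\chi(h) = (\alpha+\beta)(h)$ on $\fkh$ and $\chi(g_\gamma) = 0$, and since a semisimple Lie algebra satisfies $\fkg = [\fkg, \fkg]$, any character into the abelian target $\F$ must vanish, forcing $(\alpha+\beta)(h) = 0$ for all $h$ and contradicting $\alpha + \beta \neq 0$ unless $d = 0$. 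For the promotion I apply \cref{lem:linker-key} to perturbed pairs such as $(g_\alpha + \lambda g_{-\gamma},\, g_\beta)$, expand the error $d_{\alpha,\beta} + \lambda d_{-\gamma,\beta}$ in powers of $\lambda$, and use weight separation across the resulting coefficient identities to isolate $\rho(g_{-\gamma}) d$; the hardest step will be handling weight collisions where distinct perturbed error terms contribute to a common weight space, which I plan to untangle by exploiting genericity of the perturbation coefficients across a spanning family of auxiliary roots.
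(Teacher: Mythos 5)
Your Parts 1, 2 and 4 are correct and in fact mirror the paper's own argument (the Structure lemma, \cref{lem:viin0}, and the concluding paragraph of \cref{app:old}): the weight-separation of $\rho(h)\psi(h)=0$, the containment $\psi(g_\alpha)\in V_0\oplus V_\alpha$, and the elimination of the $V_0$-component by exhibiting it as a trivial submodule are all exactly what the paper does, and your Zariski-density phrasing for Part 1 is a harmless variant of the paper's choice of $h_k$ with $w(h_k)\neq 0$.

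The genuine gap is Part 3, which is where the content of the lemma lies. Your coefficient-of-$\lambda$ identity from the perturbed pair $(g_\alpha+\lambda g_{-\gamma},\,g_\beta)$ is nothing more than the polarization in $x$ of the bilinear identity $\rho(x)\bigl(\psi([x,y])-\rho(x)\psi(y)\bigr)=0$; it yields only $\rho(g_{-\gamma})d_{\alpha,\beta}=-\rho(g_\alpha)d_{-\gamma,\beta}$, which ties the two error vectors together and gives no new information, and in particular does not isolate $\rho(g_{-\gamma})d_{\alpha,\beta}=0$. Any real strength from \cref{lem:linker-key} applied to the perturbed pair would have to come from its ``consequently'' clause, i.e.\ from knowing that $g_\alpha+\lambda g_{-\gamma}$ and $g_\beta$ generate a large (ideally all of $\fkg$) subalgebra for generic $\lambda$ --- a nontrivial generation statement you neither formulate nor prove, and which, if pursued, collapses your argument back into the density/two-generation route of \cref{thm:linear_kernel} and \cref{thm:gen2} that the appendix proof is designed to avoid. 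Your closing sentence about ``weight collisions'' to be ``untangled by genericity'' is an acknowledgement that the key step is missing, not an argument. The paper instead proves Part 3 by direct computation in the Cartan--Weyl basis, crucially exploiting the interaction with $\psi(h_i)$ (namely $\rho(g_\gamma)\psi(h_i)=-\gamma(h_i)\psi(g_\gamma)$, which follows from the cross identity and Parts 1--2): when $\alpha+\beta\in\Phi$, one applies $\rho(g_{\alpha+\beta})=c_{\alpha\beta}^{-1}[\rho(g_\alpha),\rho(g_\beta)]$ to $\psi(h_i)$ with $(\alpha+\beta)(h_i)\neq 0$ and reads off $\rho(g_\beta)\psi(g_\alpha)=c_{\alpha\beta}\psi(g_{\alpha+\beta})$ (\cref{lem:sum}); when $\alpha+\beta\notin\Phi$ and $\beta\neq-\alpha$, commutativity of $\rho(g_\alpha),\rho(g_\beta)$ together with a choice of $h_i$ satisfying $\alpha(h_i)+\beta(h_i)\neq 0$ produces two independent homogeneous equations in $\rho(g_\alpha)\psi(g_\beta)$ and $\rho(g_\beta)\psi(g_\alpha)$, forcing both to vanish (\cref{lem:alphabeta}). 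Without something playing the role of these computations (or a proved generation statement), your proposal does not establish the third identity.
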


\begin{theorem}
Assume that trivial
representation is not a subrepresentation of 
the representation $(\rho,V)$ of $\fkg$. Then for every $x,y\in \fkg$,
\begin{equation*}\label{eqn:kern_deficit}
\psi([x,y])-\rho(x)\psi(y)=0
\end{equation*}
\end{theorem}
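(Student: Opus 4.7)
The plan is to verify the identity $\psi([x,y]) = \rho(x)\psi(y)$ by bilinearity on all pairs of basis elements from the Cartan--Weyl basis $\{h_i\}_{i\in[n]} \cup \{g_\alpha\}_{\alpha\in\Phi}$. \cref{lem:action} already dispatches the pairs $(h, g_\alpha)$ and $(g_\alpha, g_\beta)$ with $\alpha+\beta\neq 0$. The remaining pairs split as follows. For $x,y\in\fkh$, the bracket vanishes and $\rho(x)\psi(y)=0$ because $\psi(y)\in V_0$ by \cref{lem:action}. For $(g_\alpha,h)$, the cross-relation \eqref{eq:linker-cross} together with $\psi(g_\alpha)\in V_\alpha$ gives $\rho(g_\alpha)\psi(h) = -\rho(h)\psi(g_\alpha) = -\alpha(h)\psi(g_\alpha) = \psi([g_\alpha,h])$. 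The pair $(g_\alpha,g_\alpha)$ is immediate from \eqref{eq:linker-def}. This leaves exactly the pairs $(g_\alpha, g_{-\alpha})$ as the main obstacle, precisely the case excluded from \cref{lem:action}.

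To handle $(g_\alpha,g_{-\alpha})$, write $h_\alpha := [g_\alpha,g_{-\alpha}] \in \fkh$ and set
\[
d := \psi(h_\alpha) - \rho(g_\alpha)\psi(g_{-\alpha}).
\]
By \cref{lem:linker-key}, $d$ is annihilated by every element of the Lie subalgebra generated by $g_\alpha, g_{-\alpha}$, in particular by $\rho(g_\alpha), \rho(g_{-\alpha})$, and $\rho(h_\alpha)$. Since $d\in V_0$ (both terms lie there by weight-shift considerations), $\rho(h)d=0$ for every $h\in\fkh$. To finish, I need $\rho(g_\beta)d=0$ for every $\beta\in\Phi$ with $\beta\neq\pm\alpha$; once this is established, $d$ lies in the common kernel of $\rho(\fkg)$, so $\linspan(d)$ is either zero or a trivial subrepresentation, and the hypothesis forces $d=0$.

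To compute $\rho(g_\beta)d$, expand
\[
\rho(g_\beta)d = \rho(g_\beta)\psi(h_\alpha) - \rho(g_\beta)\rho(g_\alpha)\psi(g_{-\alpha}).
\]
The first term reduces via \eqref{eq:linker-cross} and \cref{lem:action} to $-\beta(h_\alpha)\psi(g_\beta)$. For the second, I use $\rho(g_\beta)\rho(g_\alpha) = \rho(g_\alpha)\rho(g_\beta) + \rho([g_\beta,g_\alpha])$ and apply \cref{lem:action} repeatedly; its hypothesis is satisfied on every relevant pair precisely because $\beta\neq\pm\alpha$. The whole expression collapses to a scalar multiple of $\psi(g_\beta)$ whose coefficient is built from the structure constants $c_{\cdot,\cdot}$ appearing in the Cartan--Weyl brackets. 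This is where the Jacobi identity does the real work: applied to the triple $(g_\alpha, g_{-\alpha}, g_\beta)$ and using the four possibilities for which of $\beta\pm\alpha$ lies in $\Phi$, it gives exactly the bilinear relation among these constants needed to cancel them against $\beta(h_\alpha)$, so that the two contributions to $\rho(g_\beta)d$ cancel identically.

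The hard part is therefore this cancellation at $(g_\alpha,g_{-\alpha})$: \cref{lem:linker-key} only controls $d$ along the generated $\fksl_2$-triple, and pushing this control out to the rest of the root system requires interpreting the various Chevalley-type structure identities as Jacobi relations. Once $\rho(g_\beta)d=0$ is established in each case, the no-trivial-subrepresentation hypothesis converts the inclusion $d \in \bigcap_{z\in\fkg} \ker\rho(z)$ into $d=0$, completing the verification on all basis pairs and hence the theorem by bilinearity.
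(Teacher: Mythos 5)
Your proposal is correct and follows essentially the same route as the paper's proof: reduce by bilinearity to the pair $(g_\alpha,g_{-\alpha})$, observe that the deficit vector lies in $V_0$ and is annihilated by $\rho(\fkh)$ and (via \cref{lem:linker-key}) by $\rho(g_{\pm\alpha})$, then show $\rho(g_\beta)$ kills it for $\beta\neq\pm\alpha$ by expanding the commutator, applying \cref{lem:action} to the cross pairs, and invoking the Jacobi identity, so the no-trivial-subrepresentation hypothesis forces it to vanish. The only cosmetic difference is that the paper sidesteps your case analysis over which of $\beta\pm\alpha$ are roots by applying \cref{lem:action} formally (reading $\psi([g_\beta,g_{-\alpha}])$ as $\psi(0)=0$ when $\beta-\alpha\notin\Phi$), rather than tracking the structure constants explicitly.
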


\begin{proof}
By linearity, it suffices to show this for the basis vectors $h_i, g_\alpha$.  \cref{lem:action} shows it in every except when we have $(x,y) = (g_\alpha, g_{-\alpha})$. Fix any root $\alpha$. By \cref{lem:action} and \cref{prop:weightaction},  the vector $\psi([g_\alpha, g_{-\alpha}]) - \rho(g_\alpha)\psi(g_{-\alpha}) \in V_0$ and thus is annihilated by $\rho(\fkh)$. We now wish to show that it is annihilated by $\rho(g_\beta)$ for any root $\beta$.  The assumption that there are no trivial subrepresentations then implies that this vector must be zero.

\cref{lem:linker-key} already shows it if $\beta\in \{\alpha, -\alpha \}$ and so we assume that's not the case. 
\begin{align*}
 \rho(g_\beta)\rho(g_\alpha)\psi(g_{-\alpha}) &= \rho([g_\beta,g_\alpha])\psi(g_{-\alpha}) + \rho(g_\alpha)\rho(g_\beta)\psi(g_{-\alpha}) & \rho  \text{ is a representation  } \\
 &= \rho([g_\beta,g_\alpha])\psi(g_{-\alpha}) + \rho(g_\alpha)\psi([g_\beta,g_{-\alpha}]) &\text{ \cref{lem:action} for } (\beta, -\alpha)   \\
 &= \psi( [[g_\beta,g_\alpha] ,g_{-\alpha}] ) + \psi([g_{\alpha},  [g_\beta, g_{-\alpha}] ] ) &\text{ \ref{lem:action} for } (\alpha +\beta, -\alpha),  (\alpha,  \beta - \alpha)  \\
 &= \psi\left( [g_\beta,g_\alpha] ,g_{-\alpha}] +  [g_{\alpha},  [g_\beta, g_{-\alpha}] ]  \right) &\text{ By linearity}\\
  &=\psi\left( [g_\beta ,[g_\alpha ,g_{-\alpha}]] \right)  &\text{By Jacobi identity} \\
  &= \rho(g_\beta)  \psi\left( [g_\alpha ,g_{-\alpha}] \right) &\text{ \ref{lem:action} for } (\beta,  h),\;  h = [g_\alpha,g_{-\alpha}] \in \fkh
\end{align*}
Here, we have used \cref{lem:action} formally even if one of them is not a root to prevent dividing into cases. For example,  if $\beta -\alpha$ is not a root then that term is anyway $0$ and we can represent $0$ as $\psi(0)= \psi([g_\beta, g_{-\alpha}])$. 
\end{proof}

\subsection{Proof of \cref{lem:action} }

For ease of notation,  we label $H_i = \rho(h_i)$ 
for $1 \leq i \leq n$ and $X_\alpha = \rho(g_\alpha),\; \alpha \in \Phi$.  Similarly, we will have $v_i := \psi(h_i),\; v_\alpha := \psi(g_\alpha)$.  We restate \cref{eq:linker-cross} in more verbose terms,
\begin{align}
\label{eq:linkernelh}H_iv_i&= 0  &\forall i \in [n]\\
\label{eq:linkernelg}X_\alpha v_\alpha &= 0 &\forall \alpha \in \Phi\\
\label{eq:linkernelhh}H_iv_j + H_jv_i &= 0  &\forall i,j \in [n], i\neq j \\
\label{eq:linkernelhg}H_iv_\alpha + X_\alpha v_i &= 0  &\forall i \in [n], \alpha 
\in \Phi \\
\label{eq:linkernelgg}X_\beta v_\alpha + X_\alpha v_\beta& = 0  &\forall 
\alpha,\beta \in \Phi
\end{align}

\begin{lemma}[Structure]\label{lem:viin0}
For every $i \in [n]$,  $v_i  \in V_0$, and for every root $\alpha$, $ v_\alpha \in V_\alpha$ if $\alpha$ is  a weight and is $0$ otherwise.
\end{lemma}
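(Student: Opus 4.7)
The plan is to prove the structure lemma in two stages, corresponding to the two claims, by decomposing each $\psi$-value into its weight components and then extracting information from the bilinear relations (\ref{eq:linkernelh})--(\ref{eq:linkernelgg}).

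For the first claim ($v_i\in V_0$), I would work in more generality and show $\psi(h)\in V_0$ for every $h\in \fkh$. Decompose $\psi(h)=\sum_w \psi_w(h)$ with $\psi_w(h)\in V_w$. Since each $\psi_w\colon \fkh\to V_w$ is linear, the identity $\rho(h)\psi(h)=0$ (which comes from (\ref{eq:linkernelh}) after replacing the basis vectors $h_i$ by an arbitrary $h$) reads $\sum_w w(h)\psi_w(h)=0$; because distinct weight spaces are linearly independent, each summand vanishes, so $w(h)\psi_w(h)=0$ for every $w$ and $h$. For $w\neq 0$ the set $\{h\in\fkh:w(h)\neq 0\}$ is Zariski dense, so the linear map $\psi_w|_\fkh$ vanishes on a dense set and hence everywhere. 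Thus $\psi(h)\in V_0$.

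For the second claim ($v_\alpha\in V_\alpha$), I first cut down the possible weight components of $v_\alpha$ using (\ref{eq:linkernelhg}). Taking $h\in\fkh$ arbitrary, $\rho(h)v_\alpha=-\rho(g_\alpha)\psi(h)$; by the previous step $\psi(h)\in V_0$, so Proposition~\ref{prop:weightaction} places the right-hand side in $V_\alpha$. Writing $v_\alpha=\sum_w v_\alpha^{(w)}$ and projecting onto each $V_w$ with $w\neq\alpha$ gives $w(h)v_\alpha^{(w)}=0$ for all such $h$; choosing $h$ with $w(h)\neq 0$ whenever $w\neq 0$ forces $v_\alpha^{(w)}=0$ for every $w\notin\{0,\alpha\}$. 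So the only remaining issue is the zero-weight component $v_\alpha^{(0)}$.

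To kill $v_\alpha^{(0)}$ I would assemble enough annihilators to realize $v_\alpha^{(0)}$ as the generator of a trivial subrepresentation and then invoke the hypothesis ruling that out. From (\ref{eq:linkernelg}) and what we already know, $X_\alpha v_\alpha=X_\alpha v_\alpha^{(0)}+X_\alpha v_\alpha^{(\alpha)}=0$, whose two summands sit in the distinct spaces $V_\alpha$ and $V_{2\alpha}$, giving $X_\alpha v_\alpha^{(0)}=0$. For any other root $\beta\neq\alpha$, the identity (\ref{eq:linkernelgg}) $X_\beta v_\alpha+X_\alpha v_\beta=0$ combined with the reduced expressions $v_\alpha\in V_0\oplus V_\alpha$ and $v_\beta\in V_0\oplus V_\beta$ has, on the $V_\beta$-component, only one surviving term, namely $X_\beta v_\alpha^{(0)}$, which must therefore be zero (the weight $\beta$ does not coincide with $\alpha$ or $\alpha+\beta$ since $\alpha\neq 0$ and $\alpha\neq\beta$). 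Together with the obvious $\rho(h)v_\alpha^{(0)}=0$ for $h\in\fkh$, this shows $\rho(\fkg)v_\alpha^{(0)}=0$; the no-trivial-subrepresentation assumption then forces $v_\alpha^{(0)}=0$, so $v_\alpha\in V_\alpha$ and in particular $v_\alpha=0$ when $\alpha\notin W$.

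The main obstacle is step three, the vanishing of $v_\alpha^{(0)}$: the zero-weight component is invisible to the action of $\fkh$, so one cannot eliminate it by the same density argument as in step two, and one must instead orchestrate the equations (\ref{eq:linkernelg}) and (\ref{eq:linkernelgg}) across \emph{all} roots $\beta$ in order to annihilate $v_\alpha^{(0)}$ by the full root-space part of $\rho(\fkg)$. This is the point where the ``no trivial subrepresentation'' hypothesis is essential, mirroring the role it plays in the proof of Lemma~\ref{thm:linear_kernel}.
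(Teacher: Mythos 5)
Your proof is correct and follows essentially the same route as the paper: use the $\fkh$-part of the kernel relations to force $v_i\in V_0$ and $v_\alpha\in V_0\oplus V_\alpha$, then show the zero-weight component of $v_\alpha$ is annihilated by all of $\rho(\fkg)$ and invoke the no-trivial-subrepresentation hypothesis. Your treatment is in fact slightly more careful than the paper's at one point, since you explicitly dispose of the case $\beta=\alpha$ via $X_\alpha v_\alpha=0$ (comparing the $V_\alpha$ and $V_{2\alpha}$ pieces), a case the paper passes over when it asserts $X_\beta u_0=0$ ``for every $\beta$''.
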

\begin{proof}
i) Let $v_j = \sum_{w \in W} u_w$ where $ u_w \in V_w$.
Since $0=H_j v_j = \sum_{w \in W} w(h_j)u_w $, we have that $w(h_j)u_w = 0$. For any $w\neq 0$ such that $u_w \neq 0$ we have $w(h_j) = 0$. Pick $k \in [n]$ such that $w(h_k) \neq 0$. Then, $   H_kv_j +H_jv_k = 0$.  Looking at the $V_w$ component $w(h_k)u_w + w(h_j)v_k = 0$. Since, $w(h_j) = 0$, we get that $w(h_k)u_w = 0$. But $k$ is chosen such that $w(h_k) \neq 0$ and thus, $u_w = 0$.

ii) Fix an $\alpha$. We just proved that $\forall i,\; v_i  \in V_0$ and using 
\cref{prop:weightaction} we get $ X_\alpha v_i
\in V_\alpha$.  Suppose $v_\alpha = \sum_{w \in W} u_w$, where $u_w \in V_w$. For any non-zero $w \neq \alpha$, pick $i$ such that $w(h_i) \neq 0$. Then,  $H_iv_\alpha + 
X_\alpha v_i= 0$. But we already know that $X_\alpha v_i \in V_\alpha$ and thus $H_i v_\alpha \in V_\alpha$.  Suppose $H_i v_\alpha = \sum_{w \in W} w(h_i)u_w \in V_\alpha$. 

Then, the $V_w$ component should be zero but $w(h_i) \neq 0 \implies u_w = 0$. It follows that $v_\alpha \in V_\alpha \oplus V_0$ for every root $\alpha$.  Fix $\alpha$ and now for every $\beta \neq \alpha$,  we have $X_\alpha v_\beta + X_\beta v_\alpha =  0 $. Comparing the $V_\beta$  component we get that $X_\beta u_0 = 0$. This is true for every $\beta$ and since $u_0 \in V_0$, it is also true for $\rho(h)$ for every $h \in \fkh$. Thus,  for every $x \in \fkg$,  $\rho(x)u_w$ and since there are no trivial submodules,  $u_0 = 0$. Thus,  $v_\alpha \in V_\alpha$
\end{proof}

\begin{lemma}\label{lem:sum}
For all pairs of roots $\alpha,\beta$ such that  $ 0 \neq \beta + \alpha \in 
\Phi$,  $X_\beta v_\alpha  = c_{\alpha\beta} v_{\alpha +\beta}$.
\end{lemma}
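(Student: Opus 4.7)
The plan is to leverage the coupling equation (\ref{eq:linkernelhg}), namely $H_i v_\gamma + X_\gamma v_i = 0$, between the Cartan basis elements $h_i$ and each root vector $g_\gamma$. Combined with the weight structure established in Lemma~\ref{lem:viin0}---which says $v_i \in V_0$ and $v_\gamma \in V_\gamma$---this equation lets one translate a statement involving $v_\gamma$ (of weight $\gamma$) into one involving $v_i$ (of weight $0$). Applying this translation with $\gamma = \alpha, \beta, \alpha+\beta$ in conjunction with the commutator identity $[X_\beta, X_\alpha] = \rho([g_\beta, g_\alpha])$ should force the claimed linear relation after using the polarized kernel identity (\ref{eq:linkernelgg}) to absorb the cross term $X_\alpha v_\beta$.

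Concretely, the steps I would carry out are as follows. First, since $v_\gamma \in V_\gamma$, one has $H_i v_\gamma = \gamma(h_i) v_\gamma$, and so (\ref{eq:linkernelhg}) immediately yields
\[
X_\gamma v_i = -\gamma(h_i)\, v_\gamma
\]
for every root $\gamma$ and every $i \in [n]$. Second, apply $X_\beta$ to the instance $\gamma = \alpha$ of this identity to obtain $X_\beta X_\alpha v_i = -\alpha(h_i)\, X_\beta v_\alpha$; independently expand $X_\beta X_\alpha v_i$ via $X_\beta X_\alpha = X_\alpha X_\beta + \rho([g_\beta,g_\alpha])$ and reduce the right-hand side using the above identity with $\gamma = \beta$ and $\gamma = \alpha+\beta$, together with the Cartan--Weyl rule $[g_\beta,g_\alpha] = -c_{\alpha\beta}\,g_{\alpha+\beta}$. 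Third, substitute $X_\alpha v_\beta = -X_\beta v_\alpha$ from (\ref{eq:linkernelgg}), collect the terms proportional to $X_\beta v_\alpha$ on one side and those proportional to $v_{\alpha+\beta}$ on the other, and factor out $(\alpha+\beta)(h_i)$ as a common scalar.

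Finally, since $\alpha+\beta$ is a nonzero linear functional on $\fkh$ and $\{h_1,\ldots,h_n\}$ is a basis, there exists some $i$ with $(\alpha+\beta)(h_i) \neq 0$; dividing through by this scalar yields the claimed identity. The degenerate cases---where $\alpha$, $\beta$, or $\alpha+\beta$ fails to be a weight of $V$---are handled automatically by Lemma~\ref{lem:viin0}, which sets the corresponding $v_\gamma$ to zero, together with Proposition~\ref{prop:weightaction}, which forces the offending $X$-action to land in a vanishing weight space. The only real obstacle is careful sign bookkeeping with the structure constants $c_{\alpha\beta}$ under the antisymmetry relation $c_{\beta\alpha} = -c_{\alpha\beta}$; the algebraic content reduces to a single computation in the universal enveloping algebra applied to $v_i$.
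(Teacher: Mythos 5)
Your proposal is essentially the paper's own proof: both rest on the identity $X_\gamma v_i = -\gamma(h_i)\,v_\gamma$ obtained from (\ref{eq:linkernelhg}) together with \cref{lem:viin0}, on expanding the commutator $\rho([g_\alpha,g_\beta])=c_{\alpha\beta}X_{\alpha+\beta}$ against $v_i$, on the antisymmetry $X_\alpha v_\beta=-X_\beta v_\alpha$ from (\ref{eq:linkernelgg}), and on choosing $i$ with $(\alpha+\beta)(h_i)\neq 0$, your only deviation being the harmless rearrangement of applying $X_\beta$ to $X_\alpha v_i$ rather than expanding $X_{\alpha+\beta}v_i$ directly. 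The sign issue you flag is genuine but equally present in the paper's computation: carried out carefully, both arguments give $X_\beta v_\alpha=-c_{\alpha\beta}v_{\alpha+\beta}$ (equivalently $X_\alpha v_\beta=c_{\alpha\beta}v_{\alpha+\beta}$, which is the form actually used afterwards), so the sign in the lemma's statement is a typo rather than a defect of your approach.
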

\begin{proof}
We have that $H_iv_{\alpha+\beta} + X_{\alpha+\beta}v_i = 0$.
Similarly, $X_\alpha v_i = -H_iv_{\alpha} = -\alpha(h_i)v_\alpha $ and $X_\beta 
v_i = -H_iv_{\beta} = -\beta(h_i)v_\beta $. Moreover, $X_\beta v_\alpha + X_\alpha 
v_\beta = 0$.

Now, $X_{\alpha+\beta} = \rho(g_{\alpha+ \beta}) = c\rho([g_\alpha,g_{\beta}]) =  
c(X_{\alpha}X_{\beta} - X_{\beta}X_{\alpha})$ where $c = 
\frac{1}{c_{\alpha\beta}}$. Thus,
\begin{align*}
-(\alpha+\beta)(h_i)v_{\alpha+\beta} &= X_{\alpha+\beta}v_i\\
&= c(X_{\alpha}X_{\beta} - X_{\beta}X_{\alpha}) v_i\\
&= c(X_{\alpha}(X_{\beta}v_i) - X_{\beta}(X_{\alpha}v_i))\\
&= c(X_{\alpha}(-\beta(h_i)v_\beta) - X_{\beta}(-\alpha(h_i)v_\alpha)\\
&= -c\left( \beta(h_i) X_{\alpha}v_\beta - \alpha(h_i) X_{\beta}v_\alpha \right)\\
&= -c\left( -\beta(h_i) X_{\beta}v_\alpha - \alpha(h_i) X_{\beta}v_\alpha \right)\\
&= c\cdot (\alpha+\beta)(h_i)\cdot X_{\beta}v_\alpha.
\end{align*}
Thus, picking $h_i$ such that $(\alpha+\beta)(h_i) \neq 0$, we get that 
$X_{\beta}v_\alpha = \frac{1}{c}v_{\alpha+\beta} $.
\end{proof}
\begin{lemma}\label{lem:alphabeta}
Let $\alpha,\beta$ be roots such that $\beta \neq -\alpha,\alpha+\beta \not \in 
\Phi $ Then. we have that $X_\alpha v_\beta = 0$. 
\end{lemma}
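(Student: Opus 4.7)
The plan is to exploit the key commutativity $[g_\alpha,g_\beta]=0$, which holds because $\alpha+\beta\notin\Phi$ (combined with $\alpha+\beta\neq 0$ from the hypothesis $\beta\neq-\alpha$), together with the linear kernel identities (\ref{eq:linkernelhg}) and (\ref{eq:linkernelgg}). The aim is to produce an equation of the shape $(\alpha+\beta)(h_i)\,X_\alpha v_\beta=0$ for each $i$, and then invoke the fact that $\alpha+\beta$ is a nonzero element of $\fkh^\ast$ to conclude.

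First, since $\alpha+\beta\notin\Phi$, the structural relations of the Cartan--Weyl basis give $[g_\alpha,g_\beta]=0$, hence in $\rho$ we have the operator identity $X_\alpha X_\beta=X_\beta X_\alpha$. Next, from (\ref{eq:linkernelhg}) applied to the pairs $(h_i,\alpha)$ and $(h_i,\beta)$, I would rewrite $X_\alpha v_i=-\alpha(h_i)v_\alpha$ and $X_\beta v_i=-\beta(h_i)v_\beta$. Then I would evaluate the two sides of $X_\alpha X_\beta v_i=X_\beta X_\alpha v_i$ using these substitutions to obtain
\begin{equation*}
-\beta(h_i)\,X_\alpha v_\beta \;=\; -\alpha(h_i)\,X_\beta v_\alpha.
\end{equation*}

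Applying (\ref{eq:linkernelgg}) in the form $X_\beta v_\alpha=-X_\alpha v_\beta$ converts the right-hand side and produces
\begin{equation*}
-(\alpha+\beta)(h_i)\,X_\alpha v_\beta \;=\; 0 \qquad \text{for every } i\in[n].
\end{equation*}
Because $\alpha+\beta\neq 0$ in $\fkh^\ast$ and the $h_i$ form a basis of $\fkh$, some $h_i$ satisfies $(\alpha+\beta)(h_i)\neq 0$, and the conclusion $X_\alpha v_\beta=0$ follows.

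I do not expect a serious obstacle here: the proof is a short manipulation, and the only subtle point is making sure the hypothesis $\beta\neq-\alpha$ really delivers $\alpha+\beta\neq 0$, so that a separating $h_i$ exists. The argument mirrors the spirit of the proof of \cref{lem:sum}, except that the vanishing of $[g_\alpha,g_\beta]$ (rather than the identification with $c_{\alpha\beta}g_{\alpha+\beta}$) is what forces the target vector to die.
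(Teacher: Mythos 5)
Your proposal is correct and follows essentially the same route as the paper: both use $[g_\alpha,g_\beta]=0$ to get $X_\alpha X_\beta=X_\beta X_\alpha$, combine the kernel identities (\ref{eq:linkernelhg}) and (\ref{eq:linkernelgg}) (implicitly using \cref{lem:viin0} to write $H_iv_\alpha=\alpha(h_i)v_\alpha$), and pick $h_i$ with $(\alpha+\beta)(h_i)\neq 0$ to force $X_\alpha v_\beta=0$. The only cosmetic difference is that the paper phrases the final step as two independent homogeneous equations in $X_\alpha v_\beta$ and $X_\beta v_\alpha$, while you substitute directly.
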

\begin{proof}
Since, $\alpha+\beta \not\in \Phi,\; [g_{\alpha},g_{\beta}] = 0$ which implies 
that $\rho([g_{\alpha},g_{\beta}]) = 0 $ and thus, $X_\alpha X_{\beta} = 
X_{\beta}X_{\alpha}$. Moreover, $\exists h_i $ such that $\alpha(h_i) \neq 
-\beta(h_i)$ because the $h_i$ form a basis for $\fkh$. We fix our $i$ to be one 
such. Now, from \cref{eq:linkernelgg}, we get that 
$X_{\alpha}v_\beta+ X_\beta v_\alpha = 0  $. 
From \cref{eq:linkernelhg}, we get that $H_iv_\alpha + X_\alpha v_i = 0 $ 
and multiplying it by $X_{\beta}$ we obtain, $\alpha(h_i)X_{\beta}v_\alpha + 
X_{\beta}X_\alpha v_i = 0 $. Repeating it with $\beta$ and $\alpha$ switched, we 
get, $\beta(h_i)X_{\alpha}v_\beta+ X_{\alpha}X_\beta v_i = 0 $. Subtracting these 
2 equations, we get $\alpha(h_i)X_{\beta}v_\alpha - \beta(h_i)X_{\alpha}v_\beta = 
0 $. We already have another equation i.e. $X_{\alpha}v_\beta+ X_\beta v_\alpha = 
0$.
Since $\beta(h_i) \neq -\alpha(h_i)$, these two homogeneous equations are 
independent and thus, the only solution is that $X_{\alpha}v_{\beta} = 
X_{\beta}v_{\alpha} = 0  $.  
\end{proof}

The structure lemma  establishes \cref{lem:action} when $y \in \fkh$ i.e.  for any $x \in \fkg, h \in \fkh$ we have $\rho(h)\psi(x) = \psi([h,x])$. To see this notice that $\rho(h)\psi(g_\alpha) = \alpha(h)\psi(g_\alpha)  = \psi( \alpha(h)g_\alpha)  )  = \psi([h,g_\alpha])$ where the first equality uses that $\psi(g_\alpha) \in V_\alpha$ and the last by the property of the basis.  And the other two lemmas extend it to $\alpha,\beta$ as $c_{\alpha\beta}v_{\alpha+\beta} = c_{\alpha\beta}\psi( g_{\alpha+\beta}) = \psi( [g_{\alpha},g_{\beta}])  $.

\bibliographystyle{alpha}
\bibliography{references}

\end{document}